\DeclareSymbolFont{usualmathcal}{OMS}{cmsy}{m}{n}
\DeclareSymbolFontAlphabet{\mathcal}{usualmathcal}
\newcommand{\lp}{\left(}
\newcommand{\rp}{\right)}
\newcommand{\Up}{\Uparrow}
\newcommand{\Down}{\Downarrow}
\newcommand{\pone}{\pmb{+}^{\bm{14;23}}}
\newcommand{\ptwo}{\pmb{+}^{\bm{12;34}}}
\newcommand{\pthree}{\pmb{+}^{\bm{1234}}}
\newcommand{\kket}[1]{\left\Vert #1\right\rangle}
\newcommand{\bbra}[1]{\left\langle #1\right\Vert}
\newcommand{\bbrakket}[2]{\left\langle #1\right. \left\Vert #2\right\rangle}
\newcommand{\up}{\uparrow}
\newcommand{\down}{\downarrow}
\newcommand{\bs}[1]{\boldsymbol{#1}}
\newcommand{\intff}{\int_{-\infty}^\infty}
\def\ra{\rangle}
\def\la{\langle}
\newcommand{\stirlingii}{\genfrac{[}{]}{0pt}{}}
\newtheorem{thm}{Theorem}
\newenvironment{proof}{\paragraph{Proof:}}{\hfill$\square$}
\newcommand{\ketu}{%
\begin{tikzpicture}[baseline={([yshift=-2.5pt]current bounding box.center)},scale=0.025,every node/.style={scale=0.1}, every path/.style={line width=0.2mm}]
\draw    (0,60) -- (5,60) ;
\draw    (0,40) -- (5,40) ;
\draw    (0,20) -- (5,20) ;
\draw    (0,0) -- (5,0) ;
\draw    (4.90,-0.07) .. controls (15,3) and (15,17) .. (4.90,20) ;
\draw    (4.90,39.93) .. controls (15,43) and (15,57) .. (4.90,60) ;
\end{tikzpicture}}%
\newcommand{\brau}{\begin{tikzpicture}[baseline={([yshift=-2.5pt]current bounding box.center)},scale=0.025,every node/.style={scale=0.1}, every path/.style={line width=0.2mm}]
\draw    (10,60) -- (15,60) ;
\draw    (10,40) -- (15,40) ;
\draw    (10,20) -- (15,20) ;
\draw    (10,0) -- (15,0) ;
\draw    (10.10,-0.07) .. controls (0,3) and (0,17) .. (10.10,20) ;

\draw    (10.10,39.93) .. controls (0,43) and (0,57) .. (10.10,60) ;
\end{tikzpicture}}%
\newcommand{\ketv}{\begin{tikzpicture}[baseline={([yshift=-2.5pt]current bounding box.center)},scale=0.025,every node/.style={scale=0.1}, every path/.style={line width=0.2mm}]
\draw    (0,60) -- (5,60) ;
\draw    (0,40) -- (5,40) ;
\draw    (0,20) -- (5,20) ;
\draw    (0,0) -- (5,0) ;
\draw    (4.90,-0.07) .. controls (15,3) and (15,57) .. (4.90,60) ;
\draw    (4.90,19.93) .. controls (10,23) and (10,37) .. (4.90,40) ;
\end{tikzpicture}}%
\newcommand{\brav}{\begin{tikzpicture}[baseline={([yshift=-2.5pt]current bounding box.center)},scale=0.025,every node/.style={scale=0.1}, every path/.style={line width=0.2mm}]
\draw    (10,60) -- (15,60) ;
\draw    (10,40) -- (15,40) ;
\draw    (10,20) -- (15,20) ;
\draw    (10,0) -- (15,0) ;
\draw    (10.10,-0.07) .. controls (0,3) and (0,57) .. (10.10,60) ;
\draw    (10.10,19.93) .. controls (5,23) and (5,37) .. (10.10,40) ;
\end{tikzpicture}}%
\newcommand{\braz}{\begin{tikzpicture}[baseline={([yshift=-2.5pt]current bounding box.center)},scale=0.025,every node/.style={scale=0.1}, every path/.style={line width=0.2mm}]
\draw    (10,60) -- (15,60) ;
\draw    (10,40) -- (15,40) ;
\draw    (10,20) -- (15,20) ;
\draw    (10,0) -- (15,0) ;
\draw    (10.10,-0.07) .. controls (0,3) and (0,17) .. (10.10,20) ;
\draw    (10.10,19.93) .. controls (0,23) and (0,37) .. (10.10,40) ;
\draw    (10.10,39.93) .. controls (0,43) and (0,57) .. (10.10,60) ;
\end{tikzpicture}}%
\newcommand{\ketz}{\begin{tikzpicture}[baseline={([yshift=-2.5pt]current bounding box.center)},scale=0.025,every node/.style={scale=0.1}, every path/.style={line width=0.2mm}]
\draw    (0,60) -- (5,60) ;
\draw    (0,40) -- (5,40) ;
\draw    (0,20) -- (5,20) ;
\draw    (0,0) -- (5,0) ;
\draw    (4.90,-0.07) .. controls (15,3) and (15,17) .. (4.90,20) ;
\draw    (4.90,19.93) .. controls (15,23) and (15,37) .. (4.90,40) ;
\draw    (4.90,39.93) .. controls (15,43) and (15,57) .. (4.90,60) ;
\end{tikzpicture}}%
\newcommand{\boxfourlegs}[1]{\begin{tikzpicture}[baseline={([yshift=-4.3pt]current bounding box.center)},scale=0.025,every node/.style={scale=0.9}, every path/.style={line width=0.2mm}]

\draw    (0,20) -- (50,20) ;
\draw    (0,0) -- (50,0) ;


\draw[rounded corners=1] (10,35) rectangle (40,65);

\draw    (0,60) -- (10,60) ;
\draw    (0,40) -- (10,40) ;

\draw    (40,60) -- (50,60) ;
\draw    (40,40) -- (50,40) ;
\node at (25,50) {\scriptsize $#1$};
\end{tikzpicture}}%
\newcommand{\boxtwotrA}[1]{\begin{tikzpicture}[baseline={([yshift=-2.5pt]current bounding box.center)},scale=0.025,every node/.style={scale=0.9}, every path/.style={line width=0.2mm}]
\draw[rounded corners=1] (10,35) rectangle (40,65);

\draw    (5,60) -- (10,60) ;
\draw    (5,40) -- (10,40) ;

\draw    (40,60) -- (45,60) ;
\draw    (40,40) -- (45,40) ;

\draw    (5.09,39.93) .. controls (0,43) and (0,57) .. (5.09,60) ;
\draw    (44.91,39.93) .. controls (50,43) and (50,57) .. (44.91,60) ;

\node at (25,50) {\scriptsize $#1$};
\end{tikzpicture}}%
\newcommand{\boxtwotrB}[1]{\begin{tikzpicture}[baseline={([yshift=-2.5pt]current bounding box.center)},scale=0.025,every node/.style={scale=0.9}, every path/.style={line width=0.2mm}]
\draw[rounded corners=1] (10,35) rectangle (40,65);

\draw    (5,60) -- (10,60) ;
\draw    (5,40) -- (10,40) ;

\draw    (40,60) -- (45,60) ;
\draw    (40,40) -- (45,40) ;

\draw    (5.1,60) .. controls (0,80) and (50,80) .. (44.91,60) ;
\draw    (5.1,39.93) .. controls (0,20) and (50,20) .. (44.91,40) ;

\node at (25,50) {\scriptsize $#1$};
\end{tikzpicture}}%
\newcommand{\ketub}{\begin{tikzpicture}[baseline={([yshift=-2.5pt]current bounding box.center)},scale=0.025,every node/.style={scale=0.1}, every path/.style={line width=0.2mm}]
\draw    (0,60) -- (5,60) ;
\draw    (0,40) -- (5,40) ;

\draw    (4.90,39.93) .. controls (15,43) and (15,57) .. (4.90,60) ;
\end{tikzpicture}}
\newcommand{\braub}{\begin{tikzpicture}[baseline={([yshift=-2.5pt]current bounding box.center)},scale=0.025,every node/.style={scale=0.1}, every path/.style={line width=0.2mm}]
\draw    (10,60) -- (15,60) ;
\draw    (10,40) -- (15,40) ;

\draw    (10.10,39.93) .. controls (0,43) and (0,57) .. (10.10,60) ;
\end{tikzpicture}}%
\newcommand{\idtwo}{\begin{tikzpicture}[baseline={([yshift=-2.5pt]current bounding box.center)},scale=0.025,every node/.style={scale=0.1}, every path/.style={line width=0.2mm}]
\draw    (0,60) -- (15,60) ;
\draw    (0,40) -- (15,40) ;

\end{tikzpicture}}%
\newcommand{\boxonetr}[1]{\begin{tikzpicture}[baseline={([yshift=-2.5pt]current bounding box.center)},scale=0.025,every node/.style={scale=0.9}, every path/.style={line width=0.2mm}]
\draw[rounded corners=1] (10,20) rectangle (30,40);

\draw    (0,30) -- (10,30) ;
\draw    (30,30) -- (40,30) ;
\draw    (0.1,30) .. controls (10,5) and (30,5) .. (39.91,30) ;

\node at (20,30) {\scriptsize $#1$};
\end{tikzpicture}}%
\newcommand{\twoboxfourlegs}[1]{\begin{tikzpicture}[baseline={([yshift=-20.25pt]current bounding box.center)},scale=0.025,every node/.style={scale=0.9}, every path/.style={line width=0.2mm}]


\draw[rounded corners=1] (10,-10) rectangle (40,70);

\draw    (0,60) -- (10,60) ;
\draw    (0,40) -- (10,40) ;
\draw    (0,20) -- (10,20) ;
\draw    (0,0) -- (10,0) ;

\draw    (40,60) -- (50,60) ;
\draw    (40,40) -- (50,40) ;
\draw    (40,20) -- (50,20) ;
\draw    (40,0) -- (50,0) ;
\node at (25,60) {\scriptsize $U$};
\node at (25,40) {\scriptsize $U^*$};
\node at (25,20) {\scriptsize $U^\dagger$};
\node at (25,0) {\scriptsize $U^T$};

\draw[rounded corners=1] (50,35) rectangle (80,65);
\node at (65,50) {\scriptsize $#1$};

\draw    (80,60) -- (90,60) ;
\draw    (80,40) -- (90,40) ;


\draw    (0.1,20) .. controls (-40,105) and (100,85) .. (89.9,60) ;
\draw    (0.1,0) .. controls (-80,125) and (160,95) .. (89.9,40) ;
\end{tikzpicture}}%
\newcommand{\twoboxfourlegsprime}[1]{\begin{tikzpicture}[baseline={([yshift=.25pt]current bounding box.center)},scale=0.025,every node/.style={scale=0.9}, every path/.style={line width=0.2mm}]
\draw[rounded corners=1] (10,-10) rectangle (40,70);
\draw    (0,60) -- (10,60) ;
\draw    (0,40) -- (10,40) ;
\draw    (0,20) -- (10,20) ;
\draw    (0,0) -- (10,0) ;
\draw    (40,60) -- (50,60) ;
\draw    (40,40) -- (50,40) ;
\draw    (40,20) -- (50,20) ;
\draw    (40,0) -- (50,0) ;
\node at (25,60) {\scriptsize $U$};
\node at (25,40) {\scriptsize $U^*$};
\node at (25,20) {\scriptsize $U^*$};
\node at (25,0) {\scriptsize $U$};

\draw[rounded corners=1] (50,35) rectangle (80,65);
\node at (65,50) {\scriptsize $#1$};
\draw    (80,60) -- (90,60) ;
\draw    (80,40) -- (90,40) ;
\draw    (49.9,20) .. controls (120,20) and (120,60) .. (89.9,60) ;
\draw    (49.9,0) .. controls (120,0) and (120,40) .. (89.9,40) ;
\end{tikzpicture}}%
\newcommand{\ketuid}{%
\begin{tikzpicture}[baseline={([yshift=-2.5pt]current bounding box.center)},scale=0.025,every node/.style={scale=0.1}, every path/.style={line width=0.2mm}]
\draw    (0,60) -- (5,60) ;
\draw    (0,40) -- (5,40) ;
\draw    (0,20) -- (15,20) ;
\draw    (0,0) -- (15,0) ;
\draw    (4.90,39.93) .. controls (15,43) and (15,57) .. (4.90,60) ;
\end{tikzpicture}}%
\newcommand{\brauid}{\begin{tikzpicture}[baseline={([yshift=-2.5pt]current bounding box.center)},scale=0.025,every node/.style={scale=0.1}, every path/.style={line width=0.2mm}]
\draw    (10,60) -- (15,60) ;
\draw    (10,40) -- (15,40) ;
\draw    (0,20) -- (15,20) ;
\draw    (0,0) -- (15,0) ;

\draw    (10.10,39.93) .. controls (0,43) and (0,57) .. (10.10,60) ;
\end{tikzpicture}}%
\begin{document}

\pagestyle{SPstyle}

\begin{center}{\Large \textbf{\color{scipostdeepblue}{
Fidelity decay and error accumulation in random quantum circuits
}}}\end{center}

\begin{center}\textbf{
Nadir Samos S{\'a}enz de Buruaga\textsuperscript{1$*$},
Rafa{\l} Bistro{\'n}\textsuperscript{2,3$\dagger$},
Marcin Rudzi{\'n}ski\textsuperscript{2,3},
Rodrigo M. C. Pereira\textsuperscript{1},
Karol {\.Z}yczkowski\textsuperscript{2,4} and
Pedro Ribeiro\textsuperscript{1}
}\end{center}

\begin{center}
{\bf 1} CeFEMA, LaPMET, Instituto Superior Técnico,
Universidade de Lisboa, Av. Rovisco Pais, 1049-001 Lisboa, Portugal.
\\
{\bf 2} Faculty of Physics, Astronomy and Applied Computer Science, Jagiellonian University, ul. \L{}ojasiewicza 11, 30-348 Krak{\'o}w, Poland.
\\
{\bf 3} Doctoral School of Exact and Natural Sciences, Jagiellonian University, ul. \L{}ojasiewicza 11, 30-348 Krak{\'o}w, Poland.
\\
{\bf 4} Center for Theoretical Physics, Polish Academy of Sciences,  Al. Lotnik{\'o}w 32/46, 02-668 Warszawa, Poland.
\\[\baselineskip]
$\star$ \href{mailto:email1}{\small nadir.samos@tecnico.ulisboa.pt}\,,\quad
$\dagger$ \href{mailto:email2}{\small rafal.bistron@doctoral.uj.edu.pl}
\end{center}

\section*{\color{scipostdeepblue}{Abstract}}
\textbf{\boldmath{%
Fidelity decay captures the inevitable state degradation in any practical implementation of a quantum process. We devise bounds for the decay of fidelity for a generic evolution given by a random quantum circuit model that encompasses errors arising from the implementation of two-qubit gates and qubit permutations. We show that fidelity decays exponentially with both circuit depth and the number of qubits raised to an architecture-dependent power and we determine the decay rates as a function of the amplitude of the aforementioned errors. 
Furthermore, we demonstrate the utility of our results in benchmarking quantum processors using the quantum volume figure of merit and provide insights into strategies for improving processor performance. These findings pave the way for understanding how states evolving under generic quantum dynamics degrade due to the accumulation of different kinds of perturbations.
}}

\vspace{\baselineskip}

\noindent\textcolor{white!90!black}{%
\fbox{\parbox{0.975\linewidth}{%
\textcolor{white!40!black}{\begin{tabular}{lr}%
  \begin{minipage}{0.6\textwidth}%
    {\small Copyright attribution to authors. \newline
    This work is a submission to SciPost Physics. \newline
    License information to appear upon publication. \newline
    Publication information to appear upon publication.}
  \end{minipage} & \begin{minipage}{0.4\textwidth}
    {\small Received Date \newline Accepted Date \newline Published Date}%
  \end{minipage}
\end{tabular}}
}}
}


\vspace{10pt}
\noindent\rule{\textwidth}{1pt}
\tableofcontents
\noindent\rule{\textwidth}{1pt}
\vspace{10pt}


\section{Introduction}
\label{sec:intro}
Random quantum circuits (RQC) are an established tool to probe complex quantum dynamics, from disordered condensed matter setups to understanding quantum chaos and thermalization \cite{fisherRandomQuantumCircuits2023}. Their statistical properties, which can be analyzed using random matrix theory\cite{mehtaRandomMatrices}, crucially capture the behavior of complex yet structured phenomena. 

A rather generic set of RQC is implemented by the circuit in Figs.~\ref{fig:architecture} (a)-(b) which features $L$ qubits on a lattice where each layer $\tau=1,\dots,T$ implements a set of two-qubit random (Haar distributed) unitary operations \cite{meleIntroductionHaarMeasure2024} $u_{i,j}$, while $\Pi$ implements a given qubit permutation $P$ belonging to the symmetric group $S_L$. For future reference, we refer to this circuit configuration as \emph{original}. 

The role of permutations in RQC is to allow for direct interactions between any two degrees of freedom (qubits). 
When modeling complex dynamics, the choice of permutations reflects the geometry of the underlying structure.
For example, the brick-wall circuit—consisting of a sequential alternation of leftward and rightward single-qubit shifts—is a paradigmatic model of local random quantum circuits \cite{Random_uni_KZ}. It has been studied extensively in the context of information spreading \cite{nahumOperatorSpreadingRandom2018}, thermalization \cite{vonkeyserlingkOperatorHydrodynamicsOTOCs2018}, and measurement-induced phase transitions \cite{fisherRandomQuantumCircuits2023}.
In contrast, random permutations have been used to model black hole dynamics with non-local interactions \cite{haydenBlackHoleMirrors2007,sekinoFastScramblers2008}, to establish bounds on entanglement generation \cite{dahlstenEmergenceTypicalEntanglement2007,oliveiraGenericEntanglementCan2007}, to study pseudo-randomness and unitary $k$-designs—ensembles that reproduce Haar-random statistics up to the $k$-th moment \cite{emersonConvergenceConditionsRandom2005,harrowRandomQuantumCircuits2009}—and to investigate quantum complexity \cite{chapmanQuantumComputationalComplexity2022}, among other applications. 

\begin{figure}[h]
\centering
  \includegraphics[width=0.8\textwidth]{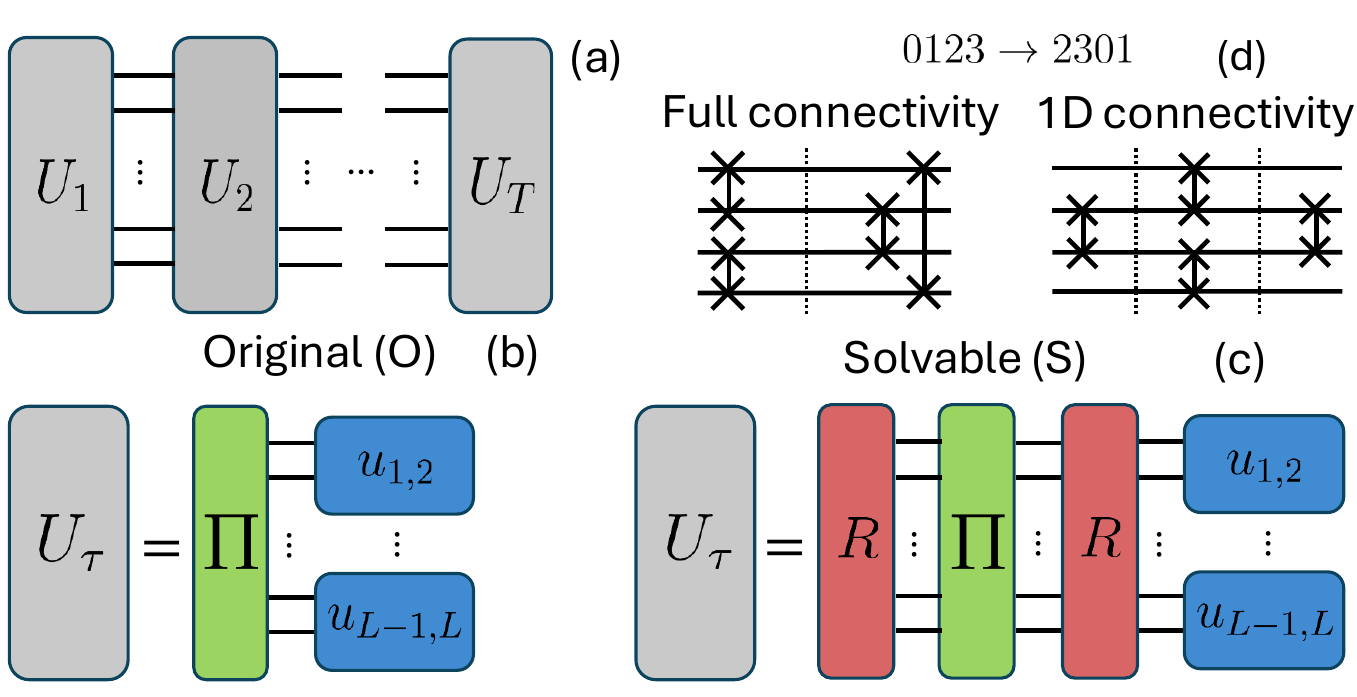}
  \caption{
  Diagram of the considered random circuit \textbf{(a)}. In the \textit{original} circuit \textbf{(b)}, each layer of unitaries is composed of a random permutation and faulty 2-qubit gates. The \textit{solvable} model \textbf{(c)} is obtained from the previous by acting with random global unitary gates $R$ before and after each permutation. 
Example of decomposition of the permutation $(0123)\to(2301)$ into SWAP gates \textbf{(d)} in fully connected and 1D architectures, requiring $2$ and $3$ sublayers, respectively.
  }
  \label{fig:architecture}

\end{figure}

The versatility of RQCs makes then an ideal framework for addressing fundamental questions regarding universal 
dynamics. Practically, RQCs have proven valuable for probing quantum supremacy \cite{aaronsonComplexityTheoreticFoundationsQuantum2016,boixoCharacterizingQuantumSupremacy2018,aruteQuantumSupremacyUsing2019} and benchmarking noisy intermediate-scale \cite{preskillQuantumComputingNISQ2018} quantum processors\cite{emersonScalableNoiseEstimation2005,knillRandomizedBenchmarkingQuantum2008,magesanCharacterizingQuantumGates2012,crossScalableRandomisedBenchmarking2016}. Specifically, the well-established quantum volume (QV) test \cite{mollQuantumOptimizationUsing2018,crossValidatingQuantumComputers2019}, which roughly measures the maximum number of qubits that can be fully entangled, uses the layer configuration shown in Figs.~\ref{fig:architecture} (a)-(b). Therefore, understanding the dynamic features of RQCs is highly desirable, given their fundamental importance and practical applications. 

To model a generic quantum algorithm or general quantum dynamical processes, the permutations are uniformly sampled from $S_L$. However, in a quantum processor, each particular permutation is implemented by a decomposition into two-qubit gates that respect the processor's underlying architecture, i.e. the connectivity graph between qubits, as can be seen in Fig.~\ref{fig:architecture} (d).

Here, we address the fundamental question of the stability of RQC dynamics against deviations from ideal unitary operations and how these affect the quantum state. At a practical level, answering these questions will also help us understand how errors accumulate in a non-ideal circuit implementation. 

We rely on quantum fidelity, $\mathcal{F} = \mel{\Psi}{\tilde{\rho}}{\Psi}$, to quantify the deviations between the outcome state of the ideal quantum circuit $\ket{\Psi}$ and the outcome of physical realization $\tilde{\rho}$. 
In the following, $\tilde{A}$ denotes the perturbed version of the physical quantity $A$. Indeed, quantum fidelity has been a cornerstone concept in several areas, such as statistical mechanics \cite{guFidelityApproachQuantum2010,heylDynamicalQuantumPhase2018}, quantum information\cite{Nielsen_Chuang_2010,bengtssonGeometryQuantumStates2006} and quantum chaos \cite{emersonFidelityDecayEfficient2002,kalagaQuantumChaosSystems2006,prosenGeneralRelationQuantum2002}. 
In quantum computing, quantum fidelity quantifies the error accumulation in the quantum processor and the consequent deviation between an ideal and a real outcome.

In the context of quantum chaos and dynamic phase transitions, quantum fidelity is termed Loschmidt echo\cite{heylDynamicalQuantumPhase2018,goussevLoschmidtEcho2012}, measuring the extent to which a complex system is recovered after applying an imperfect (perturbed) time-reversal. 
In the framework of time-independent Hamiltonians, the behavior of the Loschmidt echo is well understood for single-particle quantum systems whose dynamics are fully chaotic in the classical limit: it typically exhibits an initial parabolic decay, followed by an exponential one, and eventually saturates \cite{goussevLoschmidtEcho2012}. This pattern has also been observed in many-body systems \cite{zangaraLoschmidtEchoManyspin2016}, and similar behaviour is expected in systems governed by time-dependent Hamiltonians, such as quantum circuits \cite{dalzellRandomQuantumCircuits2024}.
While a quantitative understanding is valuable in its own right, it becomes particularly pertinent in light of the technological relevance of quantum circuits.

The purpose of this work is to establish bounds on quantum fidelity for imperfect generic quantum evolution in RQC by analyzing its scaling with the circuit's width $L$ and depth $T$.
We analyze the effects of implementing \emph{faulty 2-qubit gates} affected by generic unstructured noise, and \emph{faulty permutations} by assuming $\tilde{\Pi}$ is implemented by a combination of malfunctioning \emph{swap} gates $\tilde{S}$ that interchange nearest-neighbor qubits within a specific architecture. 
We assume that the preparation of basis states and the final measurements are noiseless.

Our main result is an analytical expression for fidelity decay as a function of the number of qubits, layers and noise, that takes into account effective model architecture.  

Furthermore, we demonstrate a practical application of this finding for benchmarking quantum processors by pinpointing the noise source mitigation of which most significantly improves the quantum volume metric. Additionally, we provide supporting evidence for the intuitive conjecture in Ref. \cite{crossValidatingQuantumComputers2019} about the correlation between increased QV and connectivity. The appendices below the main text provide further details supporting our analytic and numerical results.

\section{Setting the scene} 
In this section, we provide further details on the unitary errors considered in this work and the quantification of their impact through quantum fidelity.
\subsection{Error modeling}
For the original circuit in Fig.~\ref{fig:architecture} (b), we model independently the errors that occur in permutations $\Pi_{\tau}$, and in the random two-qubit gates, $u_{r,r'}$, of the unitary layer $V_\tau=\bigotimes_{r=1}^{L/2}u_{2r-1,2r}$.

Due to architectural restrictions, each permutation has to be decomposed into available swap operations. 
Our error protocol assumes that the faulty swaps are implemented by a too short or too long pulse $S \to S^{\beta}$, where for each swap, the exponent $\beta$ is independently drawn from a Gaussian distribution with mean $1$ and variance $\sigma^2$. 
After averaging over $\beta$, the faulty permutation operator, 
$\tilde{\Pi}$ can alternatively be reinterpreted as a composition of swaps $S$, each having a probability $p =  \left[1 - \text{exp}(-\pi^2 \sigma^2/2)\right]/2$ of being omitted (see the proof in Appendix \ref{app:permutations}). Since the same result can be obtained by averaging over $\beta$ or omitted swaps, we use the former in further discussion. 

Since the two-qubit gates are already random, noise must be modelled as a random deviation from the uniform sampling defined by the Haar measure. To this end, to preserve the symmetry with which the gates were sampled we consider that each random unitary $u_{r,r'}$ is independently perturbed by unstructured noise: $\tilde{u}_{r,r'} = e^{i\alpha h_{r,r'}} u_{r,r'}$, where $h_{r,r'}$ is drawn from the Gaussian Unitary Ensemble (GUE), and $\alpha \geq 0$ controls the noise strength. 
Notably, the ability to model noise independently in both the permutations and the two-qubit gates provides significant flexibility and control in our framework.


\subsection{Average Quantum Fidelity}
The main object we analyze is the average fidelity $\overline{\mathcal{F}}=|\overline{\braket{\Psi}{\tilde{\Psi}}}|^2$, between the ideal state
    $$\ket{\Psi}=U_TU_{T-1},\dots,U_1\ket{\psi_0},$$
and the faulty one
    $$\ket{\tilde{\Psi}}=\tilde{U}_T\tilde{U}_{T-1},\dots,\tilde{U}_1\ket{\psi_0},$$
where $\tilde{U}_\tau$ corresponds to the faulty, albeit unitary, realization of the faultless layer unitary $U_\tau$. 
For computing the average fidelity, it is useful to use the vectorized notation
\begin{align}
\ket{\psi^{T}} & =\bra{\psi}^{T}\nonumber \\
\kket{\psi\phi} & =\ket{\psi}\otimes\ket{\phi^{T}}\nonumber,
\end{align}
and work on a four-copy Hilbert space $\mathcal{H}(2^{4L})$. Hence we can write
$$ |\braket{\Psi}{\tilde{\Psi}}|^2 \to  \bbrakket{\pone}{\tilde \Psi, \tilde \Psi ^*, \Psi, \Psi^*},$$
where we define
$$\kket{\pone}:=\sum_{\bm{n}\bm{m}}\kket{\bm{m}\bm{n}\bm{n}\bm{m}}, \quad \bs{m}=m_1,\dots, m_L,\  m_i=0,1.$$

Equipped with this compact notation we obtain (see details in Appendix \ref{app:avg_fid})
 \begin{equation}
\overline{\mathcal{F}}=\frac{1}{2^L}\bbra{\bs{+^{14;23}}}\prod_{\tau=1}^T\pmb{\mathcal{U}}\kket{\bs{+^{1234}}},
\label{eq:avg_fidelity}
 \end{equation}
 with
$$\pmb{\mathcal{U}}=\overline{\left[\tilde{U}\otimes\tilde{U}^{*}\otimes U\otimes U^{*}\right]} ~\text{ and }~ \kket{\pthree}:=\sum_{\bm{n}}\kket{\bm{n}\bm{n}\bm{n}\bm{n}}.$$
Hereafter, we use bold calligraphic typeface symbols,
$\pmb{\mathcal{U}},\pmb{\mathcal{V}},\pmb{\mathcal{R}}, \ldots$, to represent averaged superoperators acting on the four-copy Hilbert space.
We have assumed that each layer is sampled independently and that the average over initial states is taken over all computational basis states.
Expression \eqref{eq:avg_fidelity} corresponds to the \textit{entanglement fidelity} between perfect $\prod_\tau U_\tau$ and faulty $\prod_\tau \tilde{U}_\tau$ channels, estimation of which was analyzed in \cite{flammiaDirectFidelityEstimation2011}. 

\section{Solvable model}
In the numerical simulations, we directly evaluated the average fidelity $\overline{\mathcal{F}(\alpha,p)}$ in terms of the two noise parameters. 
 However such a calculation was not possible to perform analytically. Although each layer of 2-qubit gates is sampled independently, the presence of permutations shuffles the qubits, creating correlations between consecutive layers that prevent the average from factorizing as a product of independently averaged layers.

To overcome this challenge, we introduce a solvable model that exhibits a qualitative fidelity decay with $L$ and $T$ akin to the original circuit. The solvable circuit, presented in Fig. \ref{fig:architecture} (c) is derived from the original, presented in Fig. \ref{fig:architecture} (b) by introducing \emph{faultless} $2^L\times 2^L$ Haar random unitaries $R_{2\tau-1}$ and $R_{2\tau}$ before and after each permutation $\Pi_\tau$, which in 4-copy formalism \eqref{eq:avg_fidelity} leads to
$\pmb{\mathcal{R}} = \overline{R \otimes R^* \otimes R\otimes R^*}$.
Due to the simple form of the average over $\pmb{\mathcal{R}}$-layers \cite{collinsWeingartenCalculus2022} one can explicitly compute $\overline{\mathcal{F}}$ using Eq. \eqref{eq:avg_fidelity}. 

It turns out that the difference between the average fidelity of the original and solvable models vanishes rapidly with $L$ and $T$. Intuitively, this can be attributed to the large mixing capacity of the original circuit. 
 After a few layers of permutations braided with 2-qubit unitary gates, quantum states are so random that their properties are, on average, not influenced by the contribution of large random unitaries $R$. Thus, the introduction of $R$ does not affect the action of the next layers.
Moreover, although the ability of two-qubit unitary layers to attain the degree of entanglement or expressibility of Haar distributed unitaries ($U(2^L)$) requires exponential deep circuits \cite{Random_uni_KZ, simExpressibilityEntanglingCapability2019, knillApproximationQuantumCircuits1995}, the average fidelity can already be reproduced by a 2-design, thus we may expect relatively shallow circuits to approximate $U(2^L)$ features.

The outline of the calculation leading to the closed analytical form of the average fidelity decay of the solvable model is as follows. 
The average of each layer can be decomposed as $\pmb{\mathcal{U}}= \pmb{\mathcal{R}}\; \pmb{\mathcal{P}} \; \pmb{\mathcal{R}} \;\pmb{\mathcal{V}}$. 
The contribution of permutation $\pmb{\mathcal{P}}$, embedded by averages $\pmb{\mathcal{R}}$, can be expressed in terms of a global effective spin-$1/2$ orthonormal basis $\{ \kket{\Up},\kket{\Down} \}$, 
that remains invariant under $\pmb{\mathcal{R}}$~\cite{collinsWeingartenCalculus2022,fisherRandomQuantumCircuits2023}:
\begin{equation}
\label{eq:deltap}
\pmb{\mathcal{R}}\; \pmb{\mathcal{P}} \; \pmb{\mathcal{R}} = \kket{\Up}\bbra{\Up}+\frac{\delta-1}{4^L-1}\kket{\Down}\bbra{\Down},
\end{equation}
where
\begin{equation}
    \delta=\overline{(\tr{\tilde{\Pi}(p)\Pi^T})^{ 2}} = \overline{4^{m(P,p)}},
    \label{eq:errorfactor}
\end{equation}
is the $p$-dependent error factor. 
The last equality is obtained by a straightforward calculation (see Appendix  \ref{app:permutations}) where $m(P,p)$ is the number of cycles in the permutation $\tilde P(p) P^{-1}$, and $\tilde P(p)$ is the faulty implementation of permutation $P$ generated by our error protocol with error rate $p$. 
The contribution of faulty 2-qubit gates $\pmb{\mathcal{V}}$, is obtained by conveniently assembling each 2-qubit superoperator
\begin{align}
    \bs{u}_{2r-1,2r} &=
     \kket{\Up_{2r-1,2r}}\bbra{\Up_{2r-1,2r}} \nonumber \\
     &+\frac{4f(\alpha)+1}{5}\kket{\Down_{2r-1,2r}}\bbra{\Down_{2r-1,2r}},
\end{align}
where the orthonormal basis ${ \{ \kket{\Up}_{r,r'},\kket{\Down}_{r,r'} \} }$, spans the subspace of four-copy Hilbert space $\mathcal{H}(2^4)$ of two qubits. 
The function $f(\alpha)$ is closely related to the GUE$(4)$ spectral form factor \cite{liuSpectralFormFactors2018} and can be approximated by $f(\alpha)\approx e^{-5\alpha^2}$ in the relevant limit where the parametrized noise is small $\alpha\ll1$ (see Appendix \ref{app:solvable}).

Combining the contribution of both noise sources, we obtain the main result of this work, the formula for the average fidelity (see Appendix \ref{app:solvable})
\begin{eqnarray}
    \overline{\mathcal{F}}=
    \lp1-\frac{1}{2^L}\rp 
    \lp\frac{(\delta 
    -1)(
    \Delta
    -1)}{(4^L-1)^2}\rp^T + \frac{1}{2^L}.
    \label{eq:avgfid}
\end{eqnarray}
The  error factor
$\delta$ is given in Eq.\eqref{eq:errorfactor},
while $\Delta=2^L
\lp3f(\alpha)+1\rp^{L/2}$. 
The last term corresponds to the
fidelity between two random $L$-qubit pure states
\cite{zyczkowskiAverageFidelityRandom2005}. 

\section{Error accumulation}
In this section, we present both numerical and analytical results on the impact of accumulating two-qubit and permutation errors across various architectures. We present compelling numerical evidence that the difference between the average fidelity of the original and solvable models vanishes rapidly with $L$ and $T$. 

\subsection{Faulty two-qubit gates}
We first consider the scenario where only two-qubit gates have errors and permutations are noiseless, in which case $\delta(0)=4^L$ in Eq.~(\ref{eq:avgfid}).
For this case, Fig.~\ref{fig:fig_faulty_unitaries} (a) and (b), show the fidelity as a function of the number of layers $T$ and noise level $\alpha$, respectively. In these and subsequent figures, the points represent numerical simulations of the original model, while the solid curves correspond to Eq.~\eqref{eq:avgfid} obtained for the solvable model. Note the remarkable agreement between the two models even for small circuits. Fig.~\ref{fig:fig_faulty_unitaries}(b) also shows that, for small $\alpha$, the asymptotic for of  Eq.~(\ref{eq:avgfid}) with $\delta=4^L$,
\begin{equation}
    \overline{\mathcal{F}}\sim e^{-\frac{15}{8}\alpha^2 LT},
    \label{eq:fid_only2qg}
\end{equation}  
fits the data for both models.

Interestingly, the result in Eq.~(\ref{eq:fid_only2qg}) is also found in brick-wall circuit where the dynamics is local. A perturbative derivation of Eq.~(\ref{eq:fid_only2qg}) for the brick-wall circuit, valid for small $\alpha$, is given Appendix \ref{app:brickwall}, using a mapping to an effective 2D classical Ising model on a triangular lattice \cite{nahumOperatorSpreadingRandom2018}. 
This result shows that the average fidelity exhibits universal decay across widely different models, further supporting our heuristic arguments justifying the validity of the solvable model.

\begin{figure}[h]
\centering
  \includegraphics[width=0.8\textwidth]{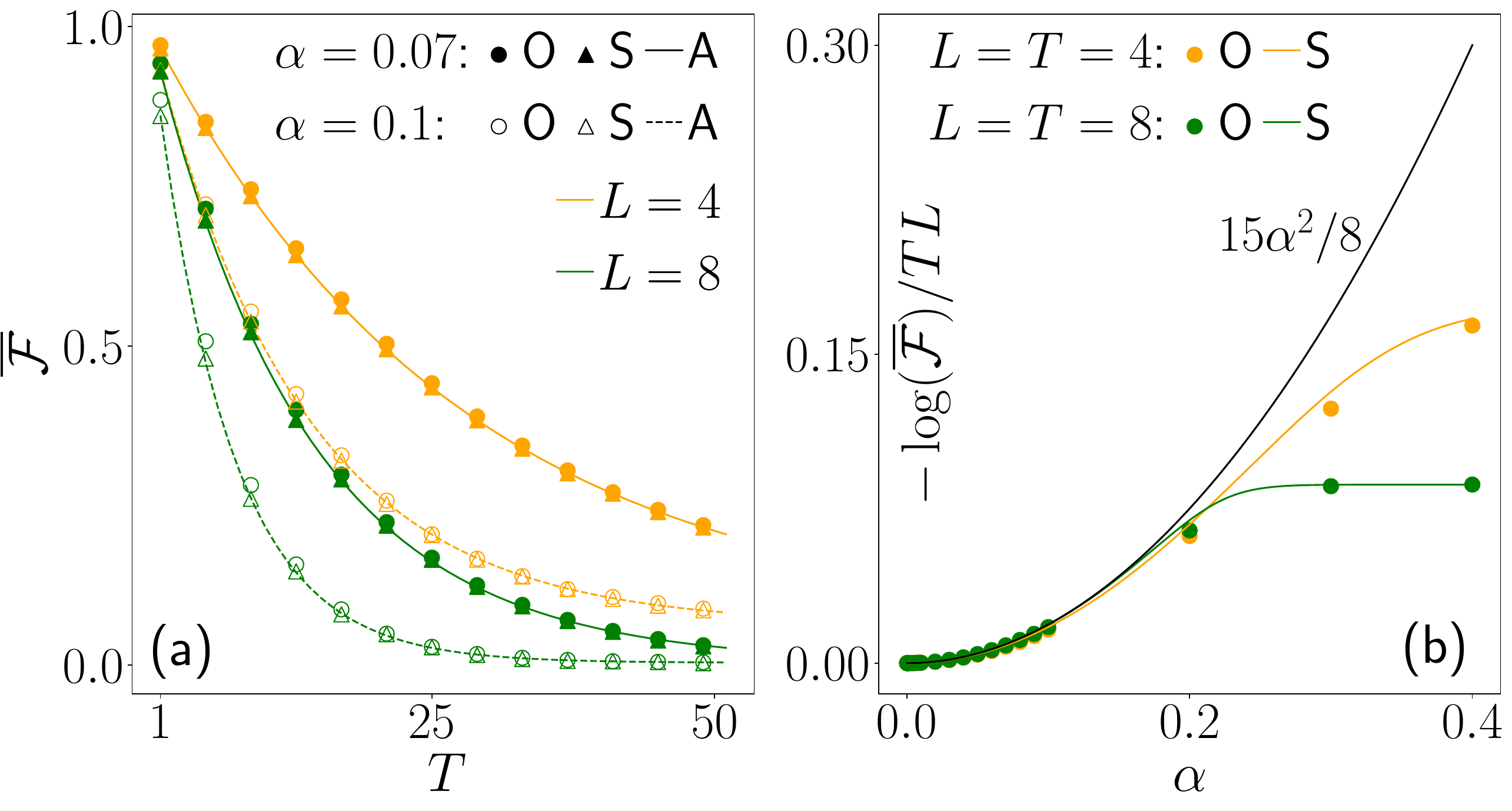}
  \caption{
  Fidelity evolution for the system sizes $L = 4$ and $8$  for the \textit{solvable} (S) and \textit{original} (O) circuit together with the analytical prediction (A), for perfect permutations, $p = 0$. \textbf{(a)} Decay as a function of the number of layers $T$. Solid and dashed lines correspond to the analytical result for the (S) model with $\alpha = 0.07$ and $\alpha =  0.1$, respectively.
  \textbf{(b)} Exponent in fidelity decay as a function of the parameter $\alpha$ for $L = T = 4,8$. Asymptotically,  $\overline{\mathcal{F}} \approx e^{-15\alpha^2 LT/8}$. Observe that for sufficiently large $T$ or error strength $\alpha$, the average fidelity approaches $1/2^L$.}
  \label{fig:fig_faulty_unitaries}
\end{figure}

\subsection{Faulty permutations}
In quantum processors, faulty permutations model architecture-specific connectivity errors. On the other hand, while simulating generic quantum dynamics, faulty permutations can be interpreted as lattice defects that modify how individual degrees of freedom interact.

\medskip 

\subsubsection{Full connectivity.} 
Consider the ideal processor in which each pair of qubits can be swapped in one move. In this case, each permutation $P$ can be decomposed into a set of cycles $\{C_i\}$ of $k_i$ elements, and each cycle $C_i$ can be decomposed into $k_i-1$ swaps combined in just two layers \cite{Permutations_decompositions1} (see Appendix \ref{app:permutations}).
Thus, permutations of $L$ qubits can be implemented using on average only $L - H_L \approx L - \log{L} - \gamma$ swaps, where $H_L \approx \log{L} + \gamma$ are Harmonic numbers describing the average number of cycles in permutation \cite{Permutations_decompositions1} and $\gamma$ is Euler's constant.

The property that for fully connected architectures any permutation can be implemented in two layers of different swaps implies that the omission of one swap in $\tilde{\Pi}(p)$ reduces exactly one cycle in $\tilde P(p)P^{-1}$. 
The error factor, $\delta(p)$, can be directly calculated (see Appendix \ref{app:permutations})
\begin{equation}
    \delta_\text{FC}(p) \; \approx \; 4^L e^{-\frac{3}{4}p (L - \log L - \gamma) }~,
\end{equation}
by neglecting sub-leading corrections in higher powers of $p$.
In the absence of two-qubit errors, i.e. $\alpha=0$,  this corresponds to a fidelity decay 
\begin{equation}
\label{Fully_f_decay}
\overline{\mathcal{F}}_
\text{FC} \; \approx \; e^{-\frac{3}{4} p T L \left(1 - \frac{\log L + \gamma}{L} \right)} \; \approx \; e^{-\frac{3}{4} p T L }, 
\end{equation}
in the limit of the large number of qubits $L$ and layers $T$.

\subsubsection{1D Architecture.}
Full connectivity is not feasible in practical scenarios and serves solely as a theoretical extreme case. 
For more general architectures, even optimal decompositions of permutations in terms of swaps' number, imply configurations where the omission of a given swap in $\tilde{\Pi}(p)$ may cancel previous errors, instead of introducing a new one. In the following, we focus on the 
regime of sparse errors when the frequency of error cancellation is negligible $p \ll 1$. In this case,
\begin{equation}
\label{general_delta}
    \delta \;  \approx \; 4^L e^{-\frac{3}{4}p \overline{w(L)} },
\end{equation}
where $\overline{w(L)}$ is an average number of swaps needed to implement a permutation of $L$ qubits.  With the increase of error probability $p$, this approximation deviates from the exact value of the error factor and starts serving as a lower bound (see Appendix \ref{app:other}).

Before discussing higher-dimensional lattice structures, let us look at a one-dimensional qubit chain of size $L$.
Here, a typical permutation $P$ changes the position of a qubit by an amount proportional to $L$. 
In this case, the decomposition of $P$ yielding the minimal number of swaps is obtained by the odd-even sort (also known as brick-sort or parity-sort algorithm) \cite{LAKSHMIVARAHAN1984295}. 
The number of swaps is given by a Mahonian distribution, with mean $w(L) = L(L-1)/4$, which quickly converges to Gaussian \cite{Machonian_distribution}, yielding:
\begin{equation}
    \delta_{d=1} \; \simeq \; 4^L e^{-\frac{3}{4}p \frac{L(L-1)}{4}},
\end{equation}
 which, in the absence of other errors, corresponds to fidelity decay:
\begin{equation}
\label{1D_F_decay}
\overline{\mathcal{F}}_{d=1} \; \approx \;
e^{-\frac{3}{16} p T L^2 \left(1 - \frac{1}{L}\right)} \; \approx \; e^{-\frac{3}{16} p T L^2},
\end{equation}
in the limit of the large number of qubits $L$ and layers $T$.

\begin{figure}[h]
\centering
  \includegraphics[width=0.8\textwidth]{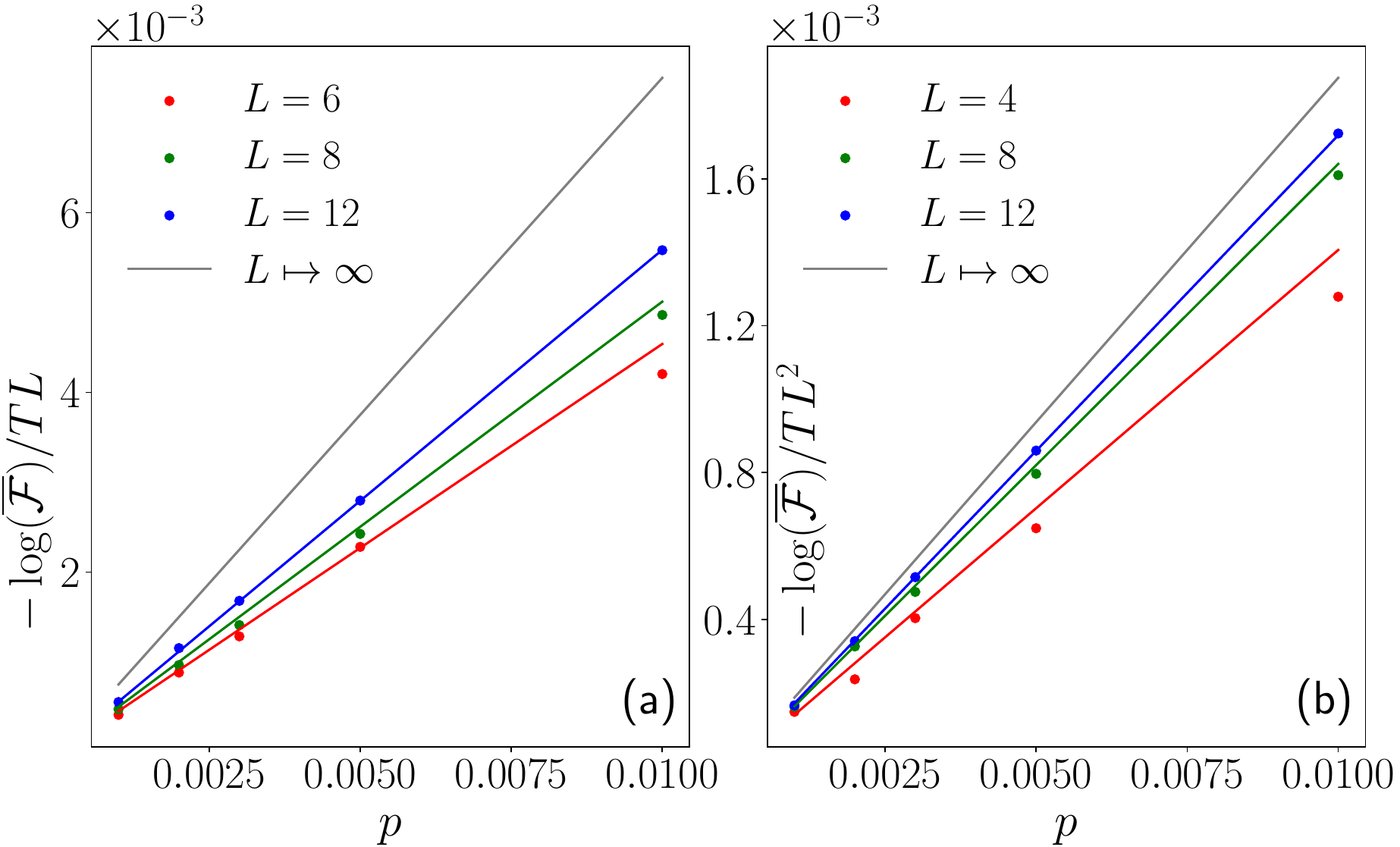}
  \caption{Fidelity decay as a function of swap omission probability $p$ for quantum volume circuits with  perfect two-qubit gates $\alpha = 0$ with $L$ qubits and $T = L$ layers.  \textbf{(a)}  fully connected architecture,  \textbf{(b)}  linear architecture. To compare different numbers of qubits $L$ and layers $T$ the index in fidelity decay is divided by general trends. Lines correspond to theoretical bounds for a given number of qubits \eqref{Fully_f_decay}, \eqref{1D_F_decay}, the asymptotic behaviour for many qubits is denoted by the grey line.
}\label{fig:fig_faulty_connectivities}
\end{figure}

\subsubsection{Other Architectures.} 
Although the exact optimal implementations are not known for cubic lattices with $d>1$, the scaling in the number of swaps and layers required can be directly derived.
As a simple example, in a $2$D square lattice of $L$ qubits, the typical permutation displaces any qubit by a distance proportional to the length of a square $\sqrt{L}$. Therefore, allowing only nearest neighbour swaps, the average minimal number of swaps cannot be smaller than $\overline{w(L)} \propto L^{3/2}$.
Furthermore, we have devised an algorithm to decompose permutations in cubic d-dimensional architectures that has allowed us to derive upper bounds on $\overline{w(L)}$ as well (see Appendix \ref{app:other} and implementation in \cite{github}). 
From our algorithm, we obtain  
$\delta_{d} \gtrsim 4^L e^{-\frac{3}{4}(d - 1/2) p L^{1 + \frac{1}{d}}}$.
In the absence of other errors, this leads to
asymptotic bounds for the fidelity decay,
\begin{equation}
\overline{\mathcal{F}}_{d} \; > \; e^{- \frac{3}{4}(d - 1/2) p T L^{1+1/d}}, 
\end{equation}
valid for a large number of qubits $L$ and layers $T$.
In the case $d=2$ one arrives at  $\overline{\mathcal{F}}_{d=2}  >  e^{-\frac{9}{8} p T L^{3/2}}$.

\begin{figure}[h]
\centering
  \includegraphics[width=0.8\textwidth]{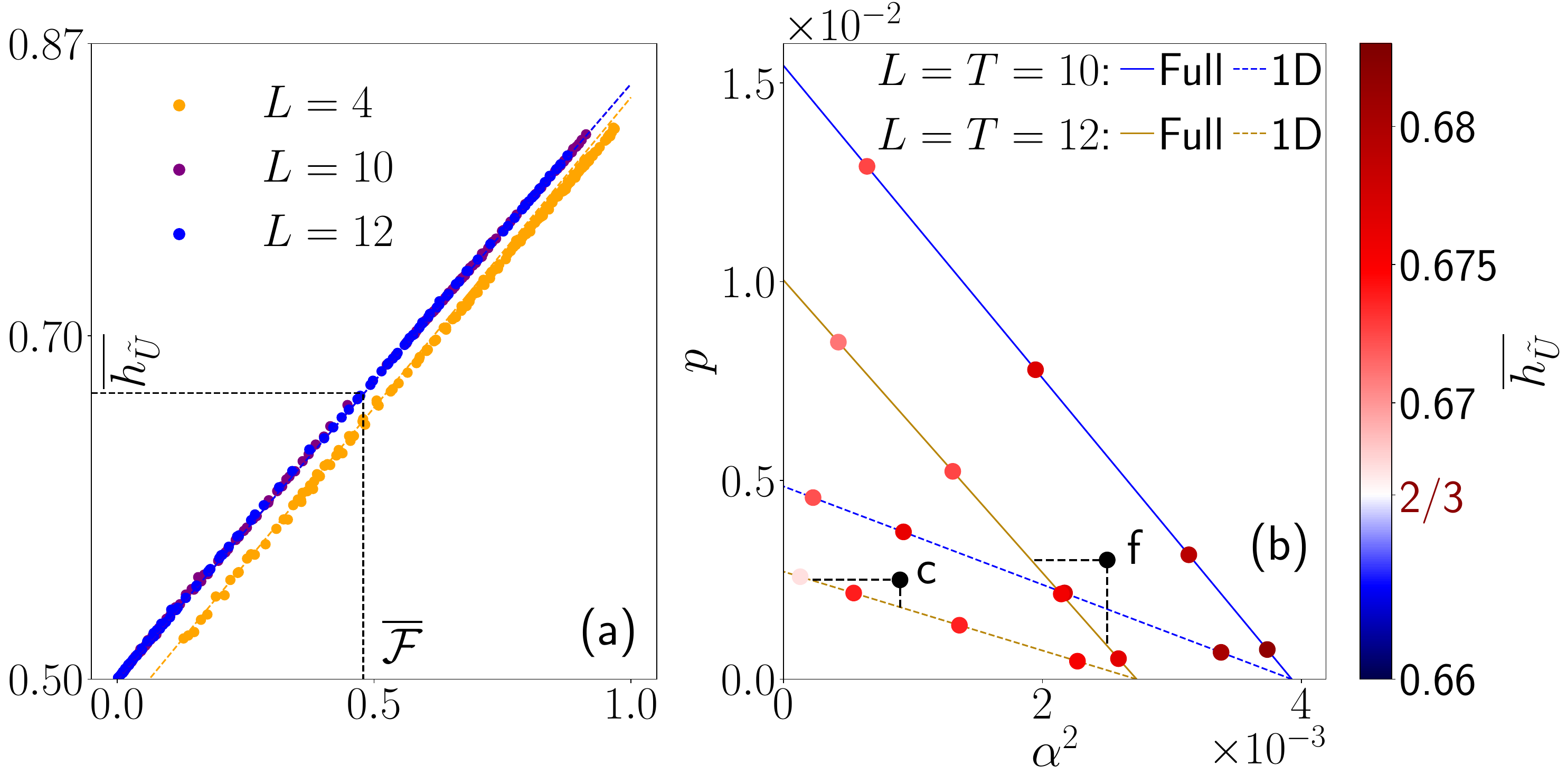}
\caption{\textbf{(a)} Relations between average the heavy output frequency $\overline{h_{\tilde{U}}}$ and average fidelity $\overline{\mathcal{F}}$ for the \textit{original} circuit (dots) and postulated linear relation (lines)
for $12 \leq T \leq 20$, 1D and fully connected architectures, and different values of $\alpha$ and $p$. 
The threshold value $h^*_{\tilde{U}}=2/3$ corresponds to a fidelity of $\mathcal{F}^*\approx 0.479$ (black dashed lines).
\textbf{(b)} Threshold contour curves for $QV=2^L$ for 1D (dashed line) and fully connected (full line) architectures. 
The coloured points represent the corresponding numerical values of $\overline{h_{\tilde{U}}}$ for the original circuit. 
Points f and c serve as representative quantum processors with full and 1D architectures, respectively. The QV of processor f improves equally by reducing either $\alpha^2$ or $p$, whereas for processor c reducing $p$ is more effective.}
  \label{fig:fig_faulty_all}
\end{figure}

\section{Heavy output frequency vs Fidelity}

Our approach can be generalized to a more operational framework,
directly related  to \textit{Quantum Volume}
\cite{mollQuantumOptimizationUsing2018,crossValidatingQuantumComputers2019}.
This figure of merit relies on the
heavy-output frequency $h_{\tilde{U}}$, 
which compares the outputs of faulty and ideal states represented in the computational basis.
It is calculated as the average sum of probabilities measured in the experiment $\tilde{p}(m) = |\langle m|\tilde{\rho}|m\rangle|$, that exceed the median of the ideal distribution $p(m)= |\braket{m}{\Psi}|^2$.
The analyzed processor passes the Quantum Volume test for a given number of qubits when the average heavy output frequency is greater than  $h_{\tilde{U}}^* = 2/3$ -- see Ref.~\cite{aaronsonComplexityTheoreticFoundationsQuantum2016}.

Fig.~\ref{fig:fig_faulty_all}(a) (see also Appendix \ref{app:huvsF}) provides  
strong evidence that for the original circuit subjected to the noise models discussed, $\overline{\mathcal{F}}$ and $\overline{h_{\tilde{U}}}$ are linearly related.
These results suggest that the linear dependence noted in the specific case of global depolarizing noise  \cite{baldwinReexaminingQuantumVolume2022} 
is more general.

Plugging the limiting value of fidelity corresponding to $h^*_{\tilde{U}}=2/3$  into the analytical expression Eq. \eqref{eq:avgfid} allows us to obtain the threshold line in the space of parameters $\alpha$ and $p$ below which a circuit of $L$ qubits passes the quantum volume test with QV equal to $2^L$. Fig. \ref{fig:fig_faulty_all}(b) shows this threshold line for two examples of two architectures, together with discrepancies obtained from the numerical study of the heavy output frequency on the original circuit.

\section{Conclusion}
We quantify error accumulation for generic quantum dynamics implemented in a wide class of random quantum circuits. Unitary errors, arising from faulty permutations and defective two-qubit gates, are characterized by error parameters $p$ and $\alpha$, respectively. Our central result, in Eq.~\eqref{eq:avgfid}, offers an exact analytical prediction for fidelity decay when qubit connectivity follows a $d$-dimensional lattice structure.
For a large number of qubits and layers, the average fidelity decays exponentially as
\begin{equation}
\label{eq:final_decay}
\overline{\mathcal{F}}_d \approx \text{exp}\Big(-\nu \alpha^2 LT  \Big)  \text{exp}\Big( - \frac{3}{4} \mu(d) p L^{1 + \frac{1}{d}} T \Big),
\end{equation}
where $\nu = 15/8 $ is obtained using an effective solvable model, that approximates remarkably well all numerical results, and the lower bound $ \mu(d) \leq d - \frac{1}{2}$ is established using our algorithm (see Appendix \ref{app:other} and \cite{github}).   

This result further substantiates the claim of Ref.~\cite{crossValidatingQuantumComputers2019}, that quantum computer architectures with higher connectivity yield substantially improved performances.
Indeed, when errors affect two-qubit unitaries only ($p=0$), the average fidelity decays in a universal manner across widely different circuit architectures.

By combining these predictions with our robust numerical findings of a linear relation between the average fidelity \eqref{eq:final_decay} and the heavy output frequency, our results can be operationally employed to access QV.
Specifically, for a quantum processor characterized by noise parameters $(\alpha,p)$, we can directly compute its QV and identify the source of the limiting error. This insight can guide effective strategies for improving QV.

It is worth noting that in practice quantum processors often use classical optimization procedures to determine the most suitable way of implementing a given algorithm \cite{baldwinReexaminingQuantumVolume2022}, such as to minimize the number of gates or the use of error prone-qubits. 
For example, optimizing the decomposition of permutations into swaps within certain architectures -- token swapping -- has been extensively studied \cite{Permutations_decompositions1, Token_swapping1, Token_swapping2, Token_swapping3}. 
However, these manipulations cannot decrease the order at which the number of two-qubit gates and swaps scale with $L$ and thus can only change the decay by a $p$ and $\alpha$-independent constant (see Appendix \ref{app:other}). 
The architectures of existing quantum processors often correspond to rather irregular graphs. In those cases, our results continue to provide bounds for the fidelity decay given an estimated architecture dimensionality.    

Presented results establish a general framework for understanding the degradation of quantum states in the implementation of a generic quantum evolution due to error accumulation. Specifically, this approach highlights how fidelity decay depends on errors from both local (two-qubit gates) and non-local (permutations) sources across varying levels of connectivity of the underlying architecture. 

Although this work is limited to studying two of the most significant unitary errors, the versatility and agnostic nature of random circuits make the results presented here valuable for modeling modern quantum computers. Future work may focus on extending this framework to encompass other types of errors, including memory errors, crosstalk, and implementation-specific errors. In particular, non-unitary errors, which play a critical role in realistic quantum systems, are a key focus of our ongoing investigations. In addition, it could be interesting to explore the connection between average fidelity in the generic models considered and out-of-time order correlators (OTOCs), inspired by the known relation between the Loschmidt echo and OTOCs in systems governed by time-independent Hamiltonians \cite{yanInformationScramblingLoschmidt2020}.


\section*{Acknowledgements}
It is a pleasure to thank Lucas S{\'a}, David Amaro Alcal{\'a}, Mateo Casariego,
Ryszard Kukulski and Javier Molina-Villaplana for fruitful discussions and constructive remarks. We also acknowledge Toma\v{z} Prosen and Sergiy Denisov for the early discussions that triggered the project.


\paragraph{Funding information}
This work realized within the DQUANT  QuantEra II Programme  
was supported by the National Science Centre, Poland, under the project  2021/03/Y/ST2/00193, and by FCT-Portugal Grant Agreement No. 101017733.\footnote{\url{https://doi.org/10.54499/QuantERA/0003/2021}}   
It received funding from the European Union’s Horizon 2020 research and innovation programme under Grant Agreement No 101017733.
PR, NB, and RP acknowledge further support from FCT-Portugal through Grant No. UID/CTM/04540/2020. 

\begin{appendix}
\numberwithin{equation}{section}

\section{Average fidelity computation}
\label{app:avg_fid}
In this section, we derive the expression Eq. (1) of the main document. The faultless circuit yields a state $\ket{\Psi}$
\begin{equation}
    \ket{\Psi}=U_TU_{T-1},\dots,U_1\ket{\psi_0},
\end{equation}
whereas the more realistic circuit produces
\begin{equation}
    \ket{\tilde{\Psi}}=\tilde{U}_T\tilde{U}_{T-1},\dots,\tilde{U}_1\ket{\psi_0},
\end{equation}
where $\tilde{U}_\tau$ corresponds to the faulty realization of the clean layer unitary $U_\tau$. As discussed in the main text, we are interested in computing the average of the fidelity
\begin{equation}
    \mathcal{F}=\abs{\braket{\Psi}{\tilde{\Psi}}}^{2}.
    \label{eq:sm_fid}
\end{equation}
To do so, it is useful to use the vectorized notation.
\begin{align}
\ket{\psi^{T}} & =\bra{\psi}^{T}\\
\kket{\psi\phi} & =\ket{\psi}\otimes\ket{\phi^{T}}.
\end{align}

Equipped with it we can write the fidelity in a 4-copy Hilbert $\mathcal{H}(2^{4L})$ space:
\begin{align}
\mathcal{F}= & \abs{\braket{\Psi}{\tilde{\Psi}}}^{2}\nonumber\\
= & \bra{\psi_{0}}U_{1}^{\dagger}...U_{T}^{\dagger}\tilde{U}_{T}....\tilde{U}_{1}\ket{\psi_{0}}\bra{\psi_{0}}\tilde{U}_{1}^{\dagger}...\tilde{U}_{T}^{\dagger}U_{T}....U_{1}\ket{\psi_{0}}\nonumber \\
= & \sum_{\bs{nm}}\bra{\psi_{0}}U_{1}^{\dagger}...U_{T}^{\dagger}\ket{\bm{m}}\bra{\bm{m}}\tilde{U}_{T}....\tilde{U}_{1}\ket{\psi_{0}}\bra{\psi_{0}}\tilde{U}_{1}^{\dagger}...\tilde{U}_{T}^{\dagger}\ket{\bm{n}}\bra{\bm{n}}U_{T}....U_{1}\ket{\psi_{0}}\nonumber \\
= & \sum_{\bs{nm}}\bbra{\bm{m}\bm{n}}\left(\tilde{U}_{T}....\tilde{U}_{1}\otimes\tilde{U}_{T}^{*}...\tilde{U}_{1}^{*}\right)\ket{\psi_{0}}\otimes\ket{\psi_{0}^{T}}\bbra{\bm{n}\bm{m}}U_{T}....U_{1}\otimes U_{T}^{*}...U_{1}^{*}\kket{\psi_{0}\psi_{0}\psi_{0}\psi_{0}}\nonumber \\
= & \sum_{\bm{n}\bm{m}}\bbra{\bm{m}\bm{n}\bm{n}\bm{m}}\overleftarrow{\prod_{\tau}}\left[\tilde{U}_{\tau}\otimes\tilde{U}_{\tau}^{*}\otimes U_{\tau}\otimes U_{\tau}^{*}\right]\kket{\psi_{0}\psi_{0}\psi_{0}\psi_{0}},
\end{align}
where $\leftarrow$ indicates the time order $T,T-1,\dots,1$ and
$$\sum_{\bm{n}}\ket{\bs{n}}=\left(\sum_{m=0,1}\ket{m}\right)^{\otimes L}.$$
The average fidelity is then
\begin{equation}
   \overline{\mathcal{F}}=\sum_{\bm{n}\bm{m}}\bbra{\bm{m}\bm{n}\bm{n}\bm{m}}\prod_{\tau = 1}^T\overline{\left[\tilde{U}\otimes\tilde{U}^{*}\otimes U\otimes U^{*}\right]}\kket{\psi_{0}\psi_{0}\psi_{0}\psi_{0}}=\sum_{\bm{n}\bm{m}}\bbra{\bm{m}\bm{n}\bm{n}\bm{m}}\prod_{\tau = 1}^T\pmb{\mathcal{U}}\kket{\psi_{0}\psi_{0}\psi_{0}\psi_{0}},
\end{equation}
where we have assumed that each layer is sampled independently, so we can remove the subindex $\tau$.

It is convenient to introduce these two different notations
\begin{equation}
    \sum_{\bm{n}\bm{m}}\kket{\bm{m}\bm{n}\bm{n}\bm{m}}=\kket{\pone}=\ketv,\quad \text{and} \quad \sum_{\bm{n}\bm{m}}\kket{\bm{m}\bm{m}\bm{n}\bm{n}}=\kket{\ptwo}=\ketu,
    \label{eq:sm_diagrams_states}
\end{equation}
and to average the above quantity also with respect to all the all the computational basis states $\ket{\psi_0} = \ket{\bm{m}}$
\begin{equation}
    \sum_{\bm{n}}\kket{\bm{n}\bm{n}\bm{n}\bm{n}}=\kket{\pthree}=\ketz.
    \label{eq:sm_diagrams_1234}
\end{equation}
Hence, the final expression of the average fidelity is
\begin{equation}
   \overline{\mathcal{F}}=\frac{1}{2^L}\bbra{\bs{+^{14;23}}}\prod_{\tau = 1}^T\pmb{\mathcal{U}}\kket{\bs{+^{1234}}}.
   \label{eq:sm_avgfid_generic}
\end{equation}

\section{Average fidelity of solvable model circuit}
\label{app:solvable}
In this section, we will obtain the expression Eq. (5) of the main document corresponding to the average fidelity of the solvable model introduced in the main text, where each layer of the solvable model consists of a faulty permutation $\tilde{\Pi}$ embedded between two faultless large unitaries $R_1,R_2 \in\text{CUE}(2^L)$ belonging to the {\sl circular unitary ensemble}, that is, sampled uniformly according to the Haar measure, and a gate made of non-overlapping random faulty 2-qubit unitaries $\tilde{V}=\bigotimes_{r=1}^{L/2}\tilde{u}_{2r-1,2r}$. Therefore, we have the average of the superoperator $\pmb{\mathcal{U}}$ in Eq. \eqref{eq:sm_avgfid_generic} can be further decompose:
\begin{equation}
\pmb{\mathcal{U}}=\pmb{\mathcal{V}}\ \pmb{\mathcal{R}}\ \pmb{\mathcal{P}}\ \pmb{\mathcal{R}},\ 
\end{equation}
where
\begin{eqnarray}
    \pmb{\mathcal{V}}=\overline{\tilde{V}\otimes\tilde{V}^* \otimes V\otimes V^*},\\
    \pmb{\mathcal{P}}=\overline{\tilde{\Pi}\otimes\tilde{\Pi} \otimes \Pi\otimes \Pi},\\
    \pmb{\mathcal{R}}=\overline{R\otimes R^*\otimes R \otimes R^*}.
\end{eqnarray}
On what follows, we will describe in detail each of the above quantities. First, the averages in the circular unitary ensemble have been widely studied and computed by means of Weingarten calculus \cite{collinsWeingartenCalculus2022,miszczakSymbolicIntegrationRespect2017}. In particular, following the tensor network perspective introduced in \cite{polandNoFreeLunch2020}, we have that given a random unitary $d\times d$ matrix $R\in\text{CUE}(d)$,
\begin{equation}
    \pmb{\mathcal{R}}=\overline{R\otimes R^{*}\otimes R\otimes R^{*}}=\frac{1}{d^2-1}\Biggl( \ \ketu \brau \ + \ \ketv \brav \ - \ \frac{1}{d}\Biggl(\ \ketu \brav \ +\ \ketv\brau \ \Biggr)\ \Biggr) 
    \label{eq:sm_avg_haar},
\end{equation}
where we have used the diagrammatic notation introduced above in Eq. \eqref{eq:sm_diagrams_states}.

Observe that the states $\kket{\pone}$, $\kket{\ptwo}$ and $\kket{\pthree}$ are not orthogonal:
\begin{eqnarray}
     \bbrakket{\ptwo}{\ptwo}= \brau\hspace{-0.5pt}\ketu \ = d^2 = \ \brav\hspace{-0.5pt}\ketv \ =  \bbrakket{\pone}{\pone} \nonumber\\
    \bbrakket{\ptwo}{\pone}= \brau\hspace{-0.5pt}\ketv \ = d = \ \brau\hspace{-0.5pt}\ketv \ =  \bbrakket{\pone}{\ptwo} \\
    \bbrakket{\pthree}{\pone}= \braz\hspace{-0.5pt}\ketv \ = d = \ \brau\hspace{-0.5pt}\ketz \ =  \bbrakket{\pthree}{\ptwo}.
    \label{eq:sm_innerstates}
\end{eqnarray}
Hence, we find an orthogonal basis via the Gram-Schmidt procedure:
\begin{equation}
    \kket{\Up} =\frac{1}{d} \kket{\ptwo}\ \quad \ \kket{\Down}=\frac{1}{\sqrt{d^2-1}}\left(\kket{\pone}-\frac{1}{d}\kket{\ptwo}\right).
    \label{eq:sm_spinbasis}
\end{equation}
On this basis, the expression Eq. \eqref{eq:sm_avg_haar} takes the simpler form
\begin{equation}
    \pmb{\mathcal{R}}=\kket{\Up}\bbra{\Up}+\kket{\Down}\bbra{\Down}. 
    \label{eq:sm_avg_haar_diag}
\end{equation}
It is clear from the above expression that $\pmb{\mathcal{R}}$ is a projector since $\pmb{\mathcal{R}}^k=\pmb{\mathcal{R}}$, $k\in \mathds{N}$. This property will be convenient in what follows. 
\subsection{Computing \boldmath $\mathcal{V}$}
On this subsection we aim to compute the corresponding average contribution of the faulty 2-qubit gates
\begin{eqnarray}
    \pmb{\mathcal{V}} &=&\overline{\bigotimes_{r=1}^{L/2}\tilde{u}_{2r-1,2r}\otimes\ \bigotimes_{r=1}^{L/2}\tilde{u}^*_{2r-1,2r}\otimes\ \bigotimes_{r=1}^{L/2}u_{2r-1,2r}\otimes\ \bigotimes_{r=1}^{L/2}u^*_{2r-1,2r}}\nonumber \\
&=&\bigotimes_{r=1}^{L/2}\overline{\tilde{u}_{2r-1,2r}\otimes\ \tilde{u}^*_{2r-1,2r}\otimes\ u_{2r-1,2r}\otimes\ u^*_{2r-1,2r}}= \bigotimes_{r=1}^{L/2} \bs{u}_{2r-1,2r}.
    \label{eq:sm_mathfrakV}
\end{eqnarray}
As discussed in the main text, each faulty gate is modeled as a random unitary $u_{r,r'}\in\text{CUE}(4)$ poissoned with an unstructured and generic noise given by a random unitary whose generator belongs to the {\sl Gaussian unitary ensemble}, $\tilde{u}_{2r-1,2r}=e^{i\alpha h_{2r-1,2r}}u_{2r-1,2r}$ $h_{2r,2r-1}\in\text{GUE}(4)$, where $\alpha\geq0$.
Therefore, the local average of the 4-copied unitaries can be further decomposed.
\begin{eqnarray}
    \bs{u}_{2r,2r-1}&=&\overline{\tilde{u}_{2r-1,2r}\otimes\ \tilde{u}^*_{2r-1,2r}\otimes\ u_{2r-1,2r}\otimes\ u^*_{2r-1,2r}}\\\nonumber
    &=&\left(\overline{e^{i\alpha h_{2r-1,2r}}\otimes\ e^{-i\alpha h_{2r-1,2r}}\otimes\ \mathds{1}\otimes\ \mathds{1}}\right)\lp\overline{u_{2r-1,2r}\otimes\ u^*_{2r-1,2r}\otimes\ u_{2r-1,2r}\otimes\ u^*_{2r-1,2r}}\rp.
    \label{eq:sm_avg_faultygates}
\end{eqnarray}
We identify the rightmost average as the one outlined in the previous section $\pmb{\mathcal{R}}$ particularized for $d=4$. Therefore, it remains to compute the average $\overline{e^{i\alpha h_{2r-1,2r}}\otimes\ e^{-i\alpha h_{2r-1,2r}}}$ with respect to the GUE measure. To keep the discussion more general, let us compute this quantity for an arbitrary $d$ random Hamiltonian $H\in\text{GUE}(d)$.
\begin{equation}
   \overline{e^{i\alpha H}\otimes e^{-i\alpha H^*}}=\int dH e^{-\tr \frac{H^2}{2}}\left(e^{-i\alpha H}\otimes e^{i\alpha H^*}\right),\quad dH=\prod_{i=1}^d d H_{ii}\prod_{i>j}\frac{1}{\sqrt{2}}\mathfrak{Re}H_{ij} \mathfrak{Im}{H_{ij}}.
   \label{eq:sm_avggue}
\end{equation}

It is convenient to exploit the fact that the GUE measure is invariant under unitary transformations. In particular, this implies that the eigenvectors of $H$ do not favor any specific direction in Hilbert space. Therefore, the unitary matrix $U$ that diagonalizes $e^{i\alpha H}$, namely
$$
    e^{i\alpha H} = U D_\lambda U^\dagger, \quad \text{with} \quad
    D_\lambda = \text{diag}(e^{i\alpha\lambda_1}, \hdots, e^{i\alpha\lambda_d}),
$$
must be distributed according to the Haar measure. As a result, the average in Eq. \eqref{eq:sm_avggue} can be decomposed into an average over two copies of the Haar-distributed unitaries, as discussed above, together with an average over the eigenvalues encoded in $D_\lambda \otimes D^*_\lambda$, which must be performed with respect to the GUE joint probability distribution.

\begin{equation}
    \overline{e^{i\alpha H}\otimes e^{-i\alpha H^*}}=\overline{(U\otimes U^*)\overline{(D_\lambda\otimes D^*_\lambda)}(U^\dagger \otimes U^T)} = \overline{(U\otimes U^*)\mathfrak{D}(U^\dagger \otimes U^T)}.
\end{equation}
In terms of the diagrammatic notation, we can write:
\begin{equation}
    \overline{e^{i\alpha H}\otimes e^{-i\alpha H^*}}=\hspace{-45pt}\twoboxfourlegs{\mathfrak{D}} \hspace{-45pt}= \twoboxfourlegsprime{\mathfrak{D}}.
\end{equation}

After interchanging rows three and four, we identify the first box as the average computed above $\pmb{\mathcal{R}}$ Eq. \eqref{eq:sm_avg_haar}. Hence, it can be shown that
\begin{equation}
    \overline{e^{i\alpha H}\otimes e^{-i\alpha H*}}=\frac{1}{d^2-1}\Biggl(\Bigl( \boxtwotrA{\ \mathfrak{D}\ }-\frac{1}{d}\ \boxtwotrB{\ \mathfrak{D} \ }\Bigr)\ \ketub\braub \quad +\quad \Bigl( \boxtwotrB{\ \mathfrak{D} \ }-\frac{1}{d}\ \boxtwotrA{\ \mathfrak{D} \ }\Bigr)\ \idtwo\ \Biggr).
\end{equation}
The above expression can be further simplified taking into account
\begin{eqnarray}
        \boxtwotrA{\mathfrak{D}}\ &=& \ \braub\hspace{-0.5pt}\ketub = d\  \text{ since }\quad  D_\lambda D_\lambda ^T=\mathds{1}\, \nonumber \\
        \boxtwotrB{\mathfrak{D}}\ &=& \ \Biggl(\boxonetr{D_\lambda}\Biggr)^2 = \overline{\tr^2{D_\lambda}} = \overline{\sum_{m,n=1}^d e^{i\alpha (\lambda_m-\lambda_n})}. 
        \label{eq:sm_trboxes}
\end{eqnarray}
We recognize in the second row the definition of the \emph{spectral form factor} (SFF) \cite{leviandierFourierTransformTool1986,brezinSpectralFormFactor1997}: the Fourier transform of two point function, which is the simplest correlation function that displays universality. In this context, the strength of the noise $\alpha$ plays the role of time. For the GUE, there is a translational invariance between eigenvalues, and we can eliminate the specific dependence on $n,m$:
\begin{align}
\label{fdalpha}
    f_d(\alpha)&:= f_d(\lambda_n,\lambda_m,\alpha) = \int \prod_{i=1}^d d\lambda_i  P_{\text{GUE}}(\lambda_1,\dots,\lambda_d)e^{i\alpha(\lambda_m-\lambda_n)}\nonumber \\
    &= \int\prod_{i=1}^d (d\lambda_i e^{-\frac{\lambda_i^2}{2}})\prod_{\lambda_i>\lambda_j}(\lambda_i-\lambda_j)^2e^{i\alpha(\lambda_m-\lambda_n)}, \ n\neq m,
\end{align}
where $P_{\text{GUE}}$ is GUE probability density function. We postpone the computation of this expression for the next section. Now, it is enough to write the SFF as $$\overline{\tr^2{D_\lambda}}=d(1+(d-1)f_d(\alpha)).$$
Thus, the average Eq. \eqref{eq:sm_avggue} is
\begin{equation}
    \overline{e^{i\alpha H}\otimes e^{-i\alpha H^*}}=\frac{1-f_d(\alpha)}{d+1} \ \ketub\braub \ +\ \frac{df_d(\alpha)+1}{d+1}\  \idtwo\ .
    \label{eq:sm_avgguefinal}
\end{equation}
Next, we compute the product of the expression in Eq. \eqref{eq:sm_avgguefinal} ( previously adding two identities/ horizontal lines to the diagrams) with the average over four unitaries as described in Eq. \eqref{eq:sm_avg_haar}. Notice that specifying $d=4$ results in the required $\bs{u}_{2r-1,2r}$ as shown in Eq.\eqref{eq:sm_avg_faultygates}, but for the sake of generality, we shall make this replacement at the end of the computation. Thus, we have:
\begin{align}
    &\overline{\lp e^{i\alpha H}\otimes e^{-i\alpha H^*}\otimes\mathds{1}\otimes\mathds{1}\rp\lp R\otimes R^*\otimes R\otimes R^*\rp}\nonumber\\
    &= \frac{1}{d^2-1}\Biggl\{\frac{1-f_d(\alpha)}{d+1}\Biggl[\ \ketuid\hspace{-0.5pt}\brauid \hspace{-0.5pt}\ketu \brau \  + \ \ketuid\hspace{-0.5pt}\brauid\hspace{-0.5pt}\ketv \brav \ - \ \frac{1}{d}\Biggl(\ \ketuid\hspace{-0.5pt}\brauid\hspace{-0.5pt}\ketu \brav \ +\ \ketuid\hspace{-0.5pt}\brauid\hspace{-0.5pt}\ketv\brau \ \Biggr)\ \Biggr]\nonumber\\
    &+\frac{df_d(\alpha)+1}{d+1} \Biggl[\ \ketu \brau \ + \ \ketv \brav \ - \ \frac{1}{d}\Biggl(\ \ketu \brav \ +\ \ketv\brau \ \Biggr)\ \Biggr]
    \Biggr\}\\\nonumber
    &= \frac{1}{d^2-1}\Biggl\{\lp 1+\frac{f_d(\alpha)-1}{d(d+1)}\rp \ \ketu\brau \ +\ \frac{df_d(\alpha)+1}{d+1}\ \Biggl(\ \ketv\brav\ -\ \frac{1}{d}\ \ketu\brav-\ \frac{1}{d}\ \ketv\brau\ \Biggr)\Biggr\},
\end{align}
where we have used the overlaps computed before Eq. \eqref{eq:sm_innerstates}. It is convenient to write the above expression in the state notation Eq. \eqref{eq:sm_diagrams_states} and ommit the labeling of the gates, since the above result is valid for all dimensions:
\begin{eqnarray}
\label{eq:sm_u_nonorthogonal}
    \overline{\lp e^{i\alpha H}\otimes e^{-i\alpha H^*}\otimes\mathds{1}\otimes\mathds{1}\rp\lp R\otimes R^*\otimes R\otimes R^*\rp} 
    = \frac{1}{d^2-1}\biggl\{(1+\frac{f_d(\alpha)-1}{d(d+1)})\kket{\ptwo}\bbra{\ptwo}\nonumber \\
    +\frac{df_d(\alpha)+1}{d+1}\biggl(\kket{\pone}\bbra{\pone}-\frac{1}{d}\kket{\ptwo}\bbra{\pone}-\frac{1}{d}\kket{\pone}\bbra{\ptwo}\biggr)\biggr\}.
\end{eqnarray}

In terms of the spin basis $1/2$ introduced above, we invert Eq. \eqref{eq:sm_spinbasis}:
\begin{equation}
    \kket{\ptwo}= d\kket{\Up}\ \quad \ \kket{\pone}=\sqrt{d^2-1}\kket{\Down}+\kket{\Up}.
    \label{eq:sm_spinbasis_reverted}
\end{equation}
Plugging the above in Eq. \eqref{eq:sm_u_nonorthogonal} yields
\begin{align}
    &\overline{\lp e^{i\alpha H}\otimes e^{-i\alpha H^*}\otimes\mathds{1}\otimes\mathds{1}\rp\lp R\otimes R^*\otimes R\otimes R^*\rp}=\frac{1}{d^2-1}\Biggl\{
    \lp 1+\frac{f_d(\alpha)-1}{d(d+1)}\rp d^2 \kket{\Up}\bbra{\Up}\nonumber\\
    &+ \frac{df_d(\alpha)+1}{d+1}\Bigl( (d^2-1)\kket{\Down}\bbra{\Down} + \kket{\Up}\bbra{\Up}+\sqrt{d^2-1}\lp\kket{\Down}\bbra{\Up}+\kket{\Up}\bbra{\Down}\rp\nonumber\\ 
    &-\frac{d}{d}\sqrt{d^2-1}\lp \kket{\Up}\bbra{\Down}+\kket{\Down}\bbra{\Up}\rp-2\frac{d}{d}\kket{\Up}\bbra{\Down}\Bigr)\Biggr\} \nonumber \\
    &= \frac{1}{d^2-1}\Biggl(d^2+\frac{df_d(\alpha)-d}{d+1}-\frac{df_d(\alpha)+1}{d+1}\Biggr)\kket{\Up}\bbra{\Up} + \frac{1}{d^2-1}\Biggl( \frac{df_d(\alpha)+1}{d+1}\lp d^2-1\rp \Biggr)\kket{\Down}\bbra{\Down}\nonumber \\
    &=\kket{\Up}\bbra{\Up}+\dfrac{df_d(\alpha)+1}{d+1}\kket{\Down}\bbra{\Down}.
    \label{eq:sm_avg_faulty_diag}
\end{align}

Observe that the average is also diagonal in this basis, but the $\Down$ sector depends on $f_4(\alpha)$, and consequently the projector property that the $\pmb{\mathcal{R}}$ possesses Eq. \eqref{eq:sm_avg_haar_diag} is not valid anymore. 

To conclude, we shall reintroduce the labelling of the unitaries and restrict the case for $d=4$, yielding: 
\begin{equation}
\pmb{\mathcal{V}}=\bigotimes_{r=1}^{L/2}\kket{\Up_{2r-1,2r}}\bbra{\Up_{2r-1,2r}}+\frac{4f_4(\alpha)+1}{5}\kket{\Down_{2r-1,2r}}\bbra{\Down_{2r-1,2r}}.
\label{eq:sm_avgV}
\end{equation}
In the remainder of this subsection, we compute explicitly $f_d(\alpha)$.

\subsubsection{Evaluating the function $f_d(\alpha)$}

In this subsection we analyze 
the function $f_d(\alpha)$
introduced in Eq. (\ref{fdalpha}), 
intimately connected with the spectral form factor. Whereas the large $L$ limit is well known; see, for instance, \cite{mehtaRandomMatrices,liuSpectralFormFactors2018}, the exact computation is less standard. Here we present a detailed derivation, similar to the one presented in \cite{delcampoScramblingSpectralForm2017}, for arbitrary $d$ and $\alpha$, although in practice we are interested in $d=4$ and $\alpha\ll1$. This is, in a time-evolution Hamiltonian perspective, we are interested in the non-universal regime, where the nature of the matrix ensemble takes a great relevance. For a complete introduction to the random matrix, see, for example, \cite{mehtaRandomMatrices,haakeQuantumSignaturesChaos2010}. Here we start by taking into account the definition of the n-point function:
\begin{equation}
    \rho^{(n)}(\lambda_1,\dots,\lambda_n)=\int d\lambda_{n+1}\dots d\lambda_d P_{\text{GUE}}(\lambda_1,\dots,\lambda_d),
    \label{eq:sm_npoint}
\end{equation}
we see that 
\begin{equation}
    f_d(\alpha)= \int \rho^{(2)}(\lambda_m,\lambda_n) e^{i\alpha(\lambda_m-\lambda_n)},\ n\neq m.
    \label{eq:sm_fd_npoint}
\end{equation}

The term $\prod_{\lambda_i>\lambda_j}(\lambda_i-\lambda_j)^2$ that appears with the Jacobian is called Vandermonde determinant $\Delta_d(\boldsymbol{\lambda})$:
\begin{equation}
    \Delta^2_d(\boldsymbol{\lambda})=(\mathbf{\lambda})\prod_{\lambda_i>\lambda_j}(\lambda_i-\lambda_j)^2=\det [\{\lambda_j^{i-1}\}_{i,j=1}^d]^2=\begin{vmatrix}
1 & 1 & \cdots & 1 \\
\lambda_1 & \lambda_2 & \cdots & \lambda_d \\
\vdots & \vdots & \vdots & \vdots \\
\lambda_1^{d-1} & \lambda_2^{d-1} & \cdots & \lambda_d^{d-1} 
\end{vmatrix}^2 .
\end{equation}
Now we can harness the properties of the determinant. That is, 
\begin{equation}
    \Delta_d(\boldsymbol{\lambda})=\det [\{\lambda_j^{i-1}\}_{i,j=1}^d]=\det [\{P_{i-1}(\lambda_j)\}_{i,j=1}^d],
\end{equation}
where $P_k(x)$ $k=0,\dots,d-1$ are a family of monic orthogonal polynomials. In our case, the Hermite family is suitable. The reason is that they are orthogonal to the weight $e^{-\frac{x^2}{2}}$ that appears in our probability density function $P_{\text{GUE}}$:
\begin{equation}
     \intff dx e^{-\frac{x^2}{2}}H_m(x)H_n(x)=\sqrt{2\pi}n!\delta_{mn}.
\end{equation}
In order to play with orthonormal objects, we shall define the Hermite wavefunctions:
\begin{equation}
    \Psi_m(x)=\frac{1}{(2\pi)^{1/4}}\frac{1}{\sqrt{m!}}e^{-\frac{x^2}{4}}H_m(x).
\end{equation}
Therefore, we write: 
\begin{eqnarray}
f_d(\alpha) &=& C\int\prod_{i=1}^d d\lambda_i\det[\{e^{-\frac{\lambda^2_j}{4}}H_{i-1}(\lambda_j)\}_{i,j=1}^d]\det[\{e^{-\frac{\lambda^2_j}{4}}H_{i-1}(\lambda_j)\}_{i,j=1}^d]e^{i\alpha(\lambda_m-\lambda_n)}\\\nonumber
  &=& (2\pi)^{d/2}\prod_{k=1}^{d-1}k!C\int\prod_{i=1}^d d\lambda_i\det[\{\Psi_{i-1}(\lambda_j)\}_{i,j=1}^d]\det[\{\Psi_{i-1}(\lambda_j)\}_{i,j=1}^d]e^{i\alpha(\lambda_m-\lambda_n)}, \ n\neq m.
\end{eqnarray}
Taking into account that $\det[A]\det[B]=\det[AB]$ and $\det[A]=\det[A^T]$:

\begin{eqnarray}
    f_d(\alpha) &=& (2\pi)^{d/2}\prod_{k=1}^{d-1}k!C\int\prod_{i=1}^d d\lambda_i\det[\{\sum_{k=0}^{d-1} \Psi_{k}(\lambda_i)\Psi_{k}(\lambda_j)\}_{i,j=1}^d]e^{i\alpha(\lambda_m-\lambda_n)}\\
    &=& C_d\int\prod_{i=1}^d d\lambda_i\det[\{K_d(\lambda_i,\lambda_j)\}_{i,j=1}^d]e^{i\alpha(\lambda_m-\lambda_n)}, \ n\neq m,
\end{eqnarray}
where we have collected the constants, and
\begin{equation}
    K_d(x,y)=\sum_{k=0}^{d-1}\Psi_k(x)\Psi_k(y),
\end{equation}
is the reproducing Hermite kernel. The reason is that it satisfies the property:
\begin{equation}
    K_d(x,y)=\intff du K_d(x,u)K_d(u,y).
\end{equation}
This property allows the $d\times d$ determinant to be sequentially simplified by reducing its dimension. It can be shown that
\begin{equation}
  C_d\int d\lambda_n \det[\{K_d(\lambda_i,\lambda_j)\}_{i,j=1}^n]= C_d(K_d(\lambda_n,\lambda_n)-n+1)\det[\{K_d(\lambda_i,\lambda_j)\}_{i,j=1}^{n-1}],  
\end{equation}
with $K_d(\lambda_n,\lambda_n)=d$. 
Hence, iterating the above expression $d$ times, we obtain the normalization constant of the probability density function $C_d=1/d!\ $, and we rewrite the 2-point function Eq. \eqref{eq:sm_npoint}
\begin{equation}
    \rho^{(2)}(\lambda_m,\lambda_n)=\frac{(d-n)!}{d!}   \begin{vmatrix}
K_d(\lambda_m,\lambda_m) & K_d(\lambda_m,\lambda_n)\\
K_d(\lambda_n,\lambda_m) & K_d(\lambda_n,\lambda_n)
\end{vmatrix}.
\end{equation}
Using the above expression in Eq. \eqref{eq:sm_fd_npoint} yields
\begin{eqnarray}
    f_d(\alpha)&=&\frac{(d-2)!}{d!}\iint d\lambda_m d\lambda_n 
   \begin{vmatrix}
K_d(\lambda_m,\lambda_m) & K_d(\lambda_m,\lambda_n)\\
K_d(\lambda_n,\lambda_m) & K_d(\lambda_n,\lambda_n)
\end{vmatrix}e^{i\alpha(\lambda_m-\lambda_n)} \nonumber \\
&=& \frac{1}{d(d-1)}\left( \left(\int dx K_d(x,x)e^{-i\alpha x}\right)^2-\left(\int dxdy K_d(x,y)e^{-i\alpha (x-y)}\right)^2\right).
\label{eq:sm_formfactor}
\end{eqnarray}
We have to compute the following integral:
\begin{equation}
    \int_{-\infty}^{\infty}dx e^{-\frac{x^2}{2}}H_{k}(x)H_{k'}(x)e^{\pm i\alpha x}.
    \label{eq:sm_int_to_solve}
\end{equation}
We make use of the following three results:
\begin{enumerate}
    \item Given a function $f(x)$ that vanishes at infinity, the integral
\begin{equation}
   \int_{-\infty}^{\infty}dx e^{-\frac{x^2}{2}}H_{k}f(x)=\int_{-\infty}^{\infty}dx e^{-\frac{x^2}{2}}f^{(k)}(x), \quad f^{(k)}(x)=\frac{d^k}{dx^k}f(x),
   \label{eq:sm_int_parts_hermite}
\end{equation}
as can be seen by integrating $k$ times by parts. 
\item The generalized Leibniz rule:
\begin{equation}
    (f\times g)^{(k)}=\sum_{l=0}^{k}\dfrac{k!}{l!(k-l)!}f^{(k-l)}g^{(l)}.
    \label{eq:sm_leibniz}
\end{equation}
\item The following recursion relation that the Hermite polynomials hold:
\begin{equation}
    H^{(m)}_n(x)=\dfrac{n!}{(n-m)!}H_{n-m}(x).
    \label{eq:sm_derivative_Hermite}
\end{equation}
\end{enumerate}

To solve the integral Eq. \eqref{eq:sm_int_to_solve}, we first use the identity Eq. \eqref{eq:sm_int_parts_hermite} with $f(x)=H_{k'}(x)e^{\pm i \alpha x}$, following by the Leibniz Eq.\eqref{eq:sm_leibniz} rule, and finally Eq. \eqref{eq:sm_derivative_Hermite}: 
\begin{equation}
        \int_{-\infty}^{\infty}dx e^{-\frac{x^2}{2}}H_{k}(x)H_{k'}(x)e^{\pm i\alpha x}=\sum_{l=0}^k \frac{k!k'!(\pm i \alpha)^{k'-k+2l}}{l!(k-l)!(k'-k+l)!}\sqrt{2\pi}e^{-\frac{\alpha^2}{2}}
        \label{eq:sm_int_result}
\end{equation}
We use the above result to obtain $f_d(\alpha)$. It is interesting to distinguish between the contribution of the connected and disconnected parts of the (Fourier transform of) the two-point function.
The disconnected part is obtained by particularizing $k=k'$:
\begin{align}
    \Bigl(\intff dx K_d(x,x)e^{\pm i\alpha x}\Bigr)^2 &= \frac{1}{\sqrt{2\pi}}\sum_{k=0}^{d-1}\frac{1}{k!}\intff dx e^{-x^2/2} H^2_k(x)e^{-i\alpha x}\nonumber \\
    &=e^{-\alpha^2}\Bigl(\sum_{k=0}^{d-1}k!\sum_{l=0}^k \frac{(\pm i \alpha)^{2l}}{(l!)^2(k-l)!}\Bigr)^2
\end{align}

The connected part is:
\begin{eqnarray}
&\intff dx dy K_d(x,y)K_d(y,x)e^{\pm i\alpha (x-y)}=\frac{1}{2\pi}\sum_{k=0}^{d-1}\sum_{k'=0}^{d-1} \frac{1}{k!k'!}\Bigl(\intff dx e^{-x^2/2}e^{i\alpha x}H_k(x)H_{k'}(x) \Bigr)\nonumber \\
&\times \Bigl(\intff dx e^{-y^2/2}e^{-i\alpha x}H_k(y)H_{k'}(y) \Bigr) \\\nonumber
&= e^{-\alpha^2}\Bigl(\sum_{k=0}^{d-1}\sum_{k'=0}^{d-1}k! k'!(-1)^{k'-k+1}\alpha^{k'-k}\sum_{l=l'=0}^{k}\frac{\alpha^{2(l+l')}}{l!l'!(k-l)!(k-l')!(k'-k+l)!(k'-k+l')!}\Bigr).
\end{eqnarray}

So we arrive to the final result:
\begin{equation}
\begin{split}
    f_d(\alpha)&=\frac{e^{-\alpha^2}}{d(d-1)}\Bigl\{\Bigl(\sum_{k=0}^{d-1}k!\sum_{l=0}^k \frac{(\pm i \alpha)^{2l}}{(l!)^2(k-l)!}\Bigr)^2\\
        &- \Bigl(\sum_{k=k'=0}^{d-1}k! k'!(-1)^{k'-k+1}\alpha^{k'-k}\abs{\sum_{l=0}^{k}\frac{(i\alpha)^{2l}}{l!(k-l)!(k'-k+l)!}}^2\Bigr)\Bigr\}.
\end{split}
    \label{eq:sm_fdalpha}
\end{equation}
In particular, we find that
\begin{equation}
    f_4(\alpha)=\frac{1}{36} e^{-\alpha ^2} \left(-\alpha ^{10}+\frac{25 \alpha ^8}{2}-64 \alpha ^6+138 \alpha ^4-144 \alpha ^2+36\right).
\end{equation}
Remarkably, it can be seen that $f_d(\alpha)=1-(d+1)\alpha^2+O(\alpha^4)$, so we shall propose a more manageable expression:
\begin{equation}
    f_d(\alpha)\sim e^{-(d+1)\alpha^2}.
    \label{eq:sm_ansatz}
\end{equation}
The validity of this approximation can be examined in Fig. \ref{fig:sm_ansatz_check}
\begin{figure}[h]
    \centering
    \includegraphics[width=0.5\textwidth]{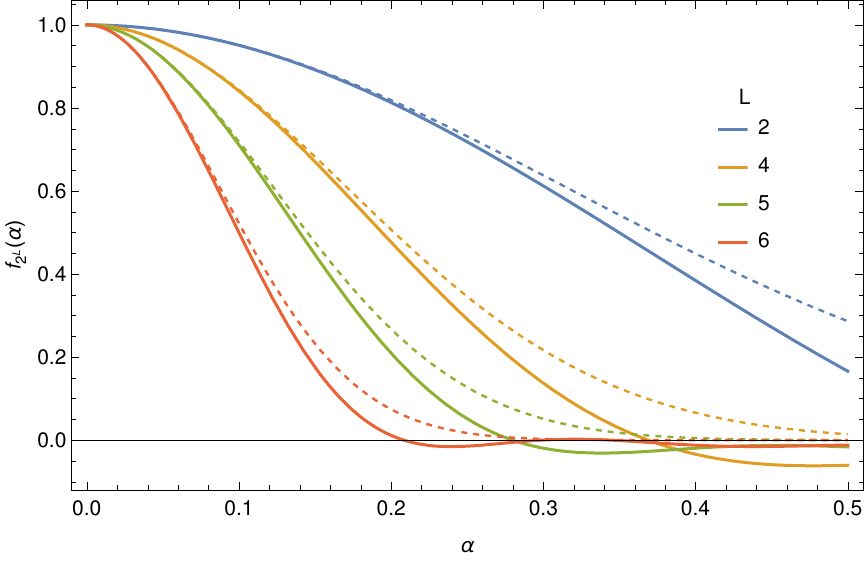}
    \caption{Plot of the function $f_d(\alpha)$ for $d=4,16,32,64$. The thick lines correspond to the analytic result Eq. \eqref{eq:sm_fdalpha}, and the dashed lines to Eq. \eqref{eq:sm_ansatz}.}
    \label{fig:sm_ansatz_check}
\end{figure}
\subsection{Computing \boldmath $\mathcal{R}\ \mathcal{P} \ \mathcal{R}$}
Now we shall focus on the average of the faulty permutation contribution $\pmb{\mathcal{P}}$. The large $d\times d$ unitaries that \emph{encapsulate} the permutation contribution allow us to assume that the effect of each permutation is decorrelated, making for a great simplification. As a first step, we harness the left/right invariance of the Haar measure:
\begin{eqnarray}
        \pmb{\mathcal{R}}\ \pmb{\mathcal{P}} \ \pmb{\mathcal{R}}&=&\lp\overline{R\otimes R^*\otimes R \otimes R^* }\rp\ \lp\overline{\tilde{\Pi}\otimes\tilde{\Pi} \otimes \Pi\otimes \Pi}\rp \ \lp\overline{R\otimes R^*\otimes R \otimes R^* }\rp\nonumber\\\nonumber
   &=& \lp\overline{R\otimes R^*\otimes R \otimes R^* }\rp \ \lp\overline{\tilde{\Pi}\Pi^T\otimes\tilde{\Pi}\Pi^T \otimes \mathds{1}\otimes \mathds{1}}\rp \ \lp\overline{R\otimes R^*\otimes R \otimes R^* }\rp.
\end{eqnarray}
Hence,
\begin{equation}
\begin{split}
\pmb{\mathcal{R}}\ \pmb{\mathcal{P}} \ \pmb{\mathcal{R}} &= \frac{1}{(d^2-1)^2}\Biggl[ \ \Biggl( \ \ketu -\ \frac{1}{d} \ \ketv \Biggr) \  \brau\ \ +\ \Biggl(\ \ketv -\frac{1}{d} \ \ketu \Biggr) \  \brav \  \Biggr]\\
&\times \quad\boxfourlegs{\Xi\otimes\Xi}\times \quad\Biggl[ \ \ketu \Biggl(  \brau -\ \frac{1}{d} \ \brav\  \Biggr) \ +\ \ketv \  \Biggl( \brav -\frac{1}{d} \ \brau \ \Biggr)  \  \Biggr],
\end{split}
\end{equation}
where $\Xi=\tilde{\Pi}\Pi^T$. Using Eq. \eqref{eq:sm_trboxes} we find that

\begin{eqnarray}
\brau\hspace{-0.5pt}\boxfourlegs{\Xi\otimes\Xi}\hspace{-0.5pt}\ketu=d \boxtwotrA{\Xi\otimes\Xi} &=& d^2\nonumber \\ \brav\hspace{-0.5pt}\boxfourlegs{\Xi\otimes\Xi}\hspace{-0.5pt}\ketv\ = \boxtwotrB{\Xi\otimes\Xi} &=& \overline{\tr^2{\tilde{\Pi}\Pi^T}} := \delta.\\
    \label{eq:sm_deltadef}
\brav\hspace{-0.5pt}\boxfourlegs{\Xi\otimes\Xi}\hspace{-0.5pt}\ketu = \brau\hspace{-0.5pt}\boxfourlegs{\Xi\otimes\Xi}\hspace{-0.5pt}\ketv= \boxtwotrA{\Xi\otimes\Xi} &=&d.\nonumber
\end{eqnarray}
Therefore, we find that 
\begin{equation}
    \pmb{\mathcal{R}}\ \pmb{\mathcal{P}} \ \pmb{\mathcal{R}}= \frac{1}{(d^2-1)^2}\Biggl\{(d^2-2+\frac{\delta}{d^2})\ketu\brau+(\delta-1)\Biggl(\ketv\brav-\frac{1}{d}\ketu\brav-\frac{1}{d}\ketv\brau\Biggr)\Biggr\} .
\end{equation}
Finally, we write it in terms of the orthogonal basis Eq.\eqref{eq:sm_spinbasis}:
\begin{equation}
    \pmb{\mathcal{R}}\ \pmb{\mathcal{P}} \ \pmb{\mathcal{R}} = \kket{\Up}\bbra{\Up}+\frac{\delta-1}{d^2-1}\kket{\Down}\bbra{\Down}.
    \label{eq:sm_avgRPR}
\end{equation}
Recall that $\delta=\overline{\tr^2{\tilde{\Pi}\Pi^T}}$. In the next section, we give details regarding the calculation of this average and the nature itself of the faulty permutation $\tilde{\Pi}$.
\subsection{Putting all together}

We have all the ingredients to compute the average fidelity Eq. \eqref{eq:sm_avgfid_generic}, but they need to be written on a common basis: while $\pmb{\mathcal{R}}\pmb{\mathcal{P}}\pmb{\mathcal{R}}$ is written in terms of the total spins $\Up, \Down$, $\pmb{\mathcal{V}}$ is written in terms of local spins $\Up_{2r-1,2r},\Down_{2r-1,2r}$ with $r=1,\hdots,L/2$ as can be seen in Eq. \eqref{eq:sm_avgV}. To do so, we introduce the notation $\kket{m,i}$ where $m$ counts the number of spins $\Down_{2r-1,2r}$ and $i$ labels the degeneracy. We shall write all the quantities, namely $\kket{\Up}$, $\kket{\Down}$, $\pmb{\mathcal{U}}$ in terms of this basis. 

The easiest starting point is $\kket{\Up}$, since
\begin{equation}
    \kket{\Up}=\otimes_{r=1}^{L/2}\kket{\Up_{2r-1,2r}} =\kket{0,1}.
    \label{eq:sm_up_newbasis}
\end{equation}
The state $\kket{\Down}$ is a complicated combination consequence of being a superposition state (see Eq. \eqref{eq:sm_spinbasis})
Equivalently (see Eq. \eqref{eq:sm_spinbasis})
\begin{equation}
\begin{split}
    \kket{\Down}&=\frac{1}{\sqrt{4^L-1}}\lp\bigotimes_{r=1}^{L/2}\kket{\pone_{2r-1,2r}}-\frac{1}{2^L}\bigotimes_{r=1}^{L/2}\kket{\ptwo_{2r-1,2r}}\rp\\
    &=\frac{1}{\sqrt{4^L-1}}\lp\bigotimes_{r=1}^{L/2}\lp\kket{\Up_{2r-1,2r}} + 15^{1/2}\kket{\Down_{2r-1,2r}}\rp-\kket{0,1}\rp\nonumber \\
    &=\frac{1}{\sqrt{4^L-1}}\sum_{m=1}^{L/2}15^{m/2}\sum_{i=1}^{L/2\choose{m}}\kket{m,i},
    \label{eq:sm_down_newbasis}
\end{split}
\end{equation}
where in the second equality we have inverted Eq. \eqref{eq:sm_spinbasis} and particularized for $d=4$. It is important to note that the summation index starts from 1 and not from 0 in the final result.

Now we express $\pmb{\mathcal{V}}$ Eq. \eqref{eq:sm_avgV} and $\pmb{\mathcal{R}}\ \pmb{\mathcal{P}}\ \pmb{\mathcal{R}}$ Eq. \eqref{eq:sm_avgRPR} in the new basis $\{\kket{m,i}\}$:
\begin{eqnarray}
\pmb{\mathcal{V}}&=& \bigotimes_{r=1}^{L/2}\kket{\Up_{2r-1,2r}}\bbra{\Up_{2r-1,2r}}+\frac{4f_4(\alpha)+1}{5}\kket{\Down_{2r-1,2r}}\bbra{\Down_{2r-1,2r}}\nonumber\\
&=&\sum_{m=0}^{L/2}\lp \frac{4f_4(\alpha)+1}{5}\rp^m\sum_{i=1}^{L/2\choose{m}}\kket{m,i}\bbra{m,i}\nonumber \\
    \pmb{\mathcal{R}}\ \pmb{\mathcal{P}} \ \pmb{\mathcal{R}} &=& \kket{\Up}\bbra{\Up}+\frac{\delta-1}{d^2-1}\kket{\Down}\bbra{\Down}\\\nonumber
    &=&\kket{0,1}\bbra{0,1}+\lp\frac{\delta-1}{4^L-1}\rp\lp\frac{1}{4^L-1}\rp\sum_{m,n=1}^{L/2}15^{\frac{m+n}{2}}\sum_{i,j=1}^{{L/2\choose{m}} {L/2\choose{n}}}\kket{m,i}\bbra{n,j}.
\end{eqnarray}
Observe that the summation index in $\pmb{\mathcal{V}}$ runs from $0$ in this case. Hence, we are finally able to write the average of one layer:
\begin{eqnarray}
    \pmb{\mathcal{U}}=\pmb{\mathcal{V}}\ \pmb{\mathcal{R}}\ \pmb{\mathcal{P}}\ \pmb{\mathcal{R}} &=& \lp\sum_{m=0}^{L/2}\lp \frac{4f_4(\alpha)+1}{5}\rp^m\sum_{i=1}^{L/2\choose{m}}\kket{m,i}\bbra{m,i}\rp \nonumber\\
    &\times& \lp\kket{0,1}\bbra{0,1}+\lp\frac{\delta-1}{4^L-1}\rp\lp\frac{1}{4^L-1}\rp\sum_{n,p=1}^{L/2}15^{\frac{n+p}{2}}\sum_{j,k=1}^{{L/2\choose{n}} {L/2\choose{p}}}\kket{n,j}\bbra{p,k}\rp \\\nonumber
    &=&\kket{0,1}\bbra{0,1}+\lp\frac{\delta-1}{4^L-1}\rp\lp\frac{1}{4^L-1}\rp\sum_{m,n=1}^{L/2}15^{\frac{m+n}{2}}\lp\frac{4f_4(\alpha)+1}{5}\rp^m\sum_{i,j=1}^{{L/2\choose{m}} {L/2\choose{n}}}\kket{m,i}\bbra{n,j},
\end{eqnarray}
where we have used that $\bbrakket{m,i}{n,j}=\delta_{m,n}\delta_{i,j}$.  Indeed, we are able to exponentitate the above quantity thanks to the fact that the crossed terms vanish:
\begin{eqnarray}
\lp\pmb{\mathcal{U}}\rp^T&=&\kket{0,1}\bbra{0,1}+\Bigl[\lp\frac{\delta-1}{4^L-1}\rp\lp\frac{1}{4^L-1}\rp\sum_{m=1}^{L/2}15^{m}\lp\frac{4f_4(\alpha)+1}{5}\rp^m {L/2\choose{m}}\Bigr]^{T-1}\nonumber \\
&\times& \Bigl[\lp\frac{\delta-1}{4^L-1}\rp\lp\frac{1}{4^L-1}\rp\sum_{m,n=1}^{L/2}15^{\frac{m+n}{2}}\lp\frac{4f_4(\alpha)+1}{5}\rp^m\sum_{i,j=1}^{{L/2\choose{m}} {L/2\choose{n}}}\kket{m,i}\bbra{n,j}\Bigr].
\end{eqnarray}

To finally find the average fidelity, we only need to compute the overlaps $\bbra{\pone}\pmb{\mathcal{U}}^T\kket{\pthree}$.
First,
\begin{equation}
    \bbrakket{\pone}{0,1}=\frac{1}{2^L}\bbrakket{\pone}{\ptwo}=1\nonumber, 
\end{equation}
where we have used Eq. \eqref{eq:sm_up_newbasis} and Eq. \eqref{eq:sm_innerstates}. It is straightforward to check that
\begin{eqnarray}
    &\bbra{\pone}\lp\sum_{m=1}^{L/2}15^{\frac{m}{2}}\lp\frac{4f_4(\alpha)+1}{5}\rp^m\sum_{i=1}^{{L/2\choose{m}}}\kket{m,i}\rp \nonumber\\
    &= \sum_{m,n=1}^{L/2}15^{\frac{m+n}{2}}\lp\frac{4f_4(\alpha)+1}{5}\rp^m\sum_{i,j=1}^{{L/2\choose{m}} {L/2\choose{n}}}\bbrakket{n,j}{m,i}
    =\sum_{m=1}^{L/2}15^m\lp\frac{4f_4(\alpha)+1}{5}\rp^m {L/2\choose{m}}
\end{eqnarray}
Also, from Eq. \eqref{eq:sm_innerstates} and Eq. \eqref{eq:sm_spinbasis} it is clear that 
\begin{equation}
    \bbrakket{0,1}{\pthree}=\frac{1}{2^L}\bbrakket{\ptwo}{\pthree}=1,
\end{equation}
and that $\bbrakket{\Down_{2r-1,2r}}{\pthree_{2r-1,2r}}=3/\sqrt{15}$. Then we can fin the last overlap:
\begin{equation}
    \lp\sum_{m=1}^{L/2}15^{\frac{m}{2}}\sum_{j=1}^{ {L/2\choose{m}}}\bbra{m,j}\rp\kket{\pthree} = \sum_{m=1}^{L/2}3^m {L/2\choose{m}}.
\end{equation}
Hence, we finally obtain:

\begin{eqnarray}
    \overline{\mathcal{F}} & = & \frac{1}{2^L}\Bigl(1+\Bigl[\lp\frac{\delta-1}{4^L-1}\rp\lp\frac{1}{4^L-1}\rp\lp\sum_{m=1}^{L/2}15^m\lp\frac{4f_4(\alpha)+1}{5}\rp^m {L/2\choose{m}}\rp\Bigr]^T \lp\sum_{m=1}^{L/2}3^m {L/2\choose{m}}\rp\Bigr)\nonumber\\
    & = & \frac{1}{2^L}\Bigl(1+\Bigl[\lp\frac{\delta-1}{4^L-1}\rp\frac{\lp15\frac{4f_4(\alpha)+1}{5}+1\rp^{L/2}-1}{4^L-1}\Bigr]^T(2^L-1)\Bigr)\nonumber\\
    &=& \lp1-\frac{1}{2^L}\rp\Bigl[\lp\frac{\delta-1}{4^L-1}\rp\lp\frac{(12f_4(\alpha)+4)^{L/2}-1}{4^L-1}\rp\Bigr]^T + \frac{1}{2^L}.
    \label{eq:_smfid_solvable}
\end{eqnarray}
Defining $\Delta(\alpha)=2^L(3f_4(\alpha)+1)^{L/2}$, we get to the final result Eq. (4) of the main text.

\section{Faulty gates in brick-wall architecture}
\label{app:brickwall}
On this section, we shall consider brick-wall or brickwork architecture, which is a popular random circuit considered as a toy model of simulating black holes, chaotic evolution, and monitored setting, among others \cite{fisherRandomQuantumCircuits2023}.The main trait that differentiate it from the \emph{original} circuit, or quantum volume circuit that we considered in the main text is that it implements a well defined notion of spatial locality. Of course, it can be seen as a particular realization of the original circuit, with two kind of fixed permutations:
\begin{eqnarray}
    \Pi_{\text{even}} = \Bigl(\begin{smallmatrix}
        1 & 2 & 3 & \cdots & L\\
        L & 1 & 2 & \cdots & L-1
    \end{smallmatrix}\Bigr), \quad 
    \Pi_{\text{odd}} =  \Bigl(\begin{smallmatrix}
        1 & 2 & 3 & \cdots & L\\
        2 & 3 & 4 & \cdots & 1
    \end{smallmatrix}\Bigr).
\end{eqnarray}
Since we want to preserve the notion of locality, we only consider faulty gates. Then, using Eqs. \eqref{eq:sm_avgfid_generic} and \eqref{eq:sm_mathfrakV} we have: 
\begin{equation}
    \overline{\mathcal{F}}=\bbra{\pone}(\pmb{\mathcal{V}}_{\text{even}}\pmb{\mathcal{V}}_{\text{odd}})^{T/2}\kket{\pthree}= \bbra{\pone}\Bigl[\lp\bigotimes_{r=1}^{L/2}\bs{u}_{2r-1,2r}\rp\lp\bigotimes_{r=1}^{L/2}\bs{u}_{2r,2r+1}\rp\Bigr]^{T/2}\kket{\pthree},
    \label{eq:sm_brickfidelity}
\end{equation}
where we implicitly assumed cyclic boundary conditions $L+1\equiv 1$.
As before, we could write the even and odd contributions terms of the spin $1/2$ basis Eq. \eqref{eq:sm_avgV} $\kket{\sigma_{2r-1,2r}}$ and $\kket{\sigma_{2r,2r+1}}$ respectively where $\sigma=\up,\down$. However, the states are not orthogonal, and consequently, the computation becomes quite involved. From the previous section, Eq. \eqref{eq:sm_u_nonorthogonal}, we have:
\begin{eqnarray*}
    \bs{u}_{rr'}&=&\frac{1}{d^2-1}\biggl\{(1+\frac{f_d(\alpha)-1}{d(d+1)})\kket{\ptwo_{rr'}}\bbra{\ptwo_{rr'}}\nonumber \\
    &+&\frac{df_d(\alpha)+1}{d+1}\biggl(\kket{\pone_{rr'}}\bbra{\pone_{rr'}}-\frac{1}{d}\kket{\ptwo_{rr'}}\bbra{\pone_{rr'}}-\frac{1}{d}\kket{\pone_{rr'}}\bbra{\ptwo_{rr'}}\biggr)\biggr\} \\
\end{eqnarray*}
\begin{equation}
\hspace{-6.7 cm}= \kket{\uparrow_{rr'}}\bbra{\ptwo_{rr'}}+\kket{\downarrow_{rr'}}\bbra{\pone_{rr'}},
    \label{eq:sm_ubricknoise}   
\end{equation}
particularized for $d=4$, and where we have introduced two new (only ket) states:
\begin{equation}
\begin{split}
    \kket{\uparrow_{rr'}}&=\frac{1}{300}\biggl((19+f_4(\alpha))\kket{\ptwo_{rr'}}- (4f_4(\alpha)+1)\kket{\pone_{rr'}}\biggr)\\
    \kket{\downarrow_{rr'}}&=\frac{4f_4(\alpha)+1}{75}\biggl(\kket{\pone_{rr'}}-\frac{1}{4}\kket{\ptwo_{rr'}}\biggr).
\end{split}
    \label{eq:sm_states_brick}
\end{equation}
To simplify further the calculation, we can represent  Eq. \eqref{eq:sm_ubricknoise} as a sum of two diagrams:
\begin{equation}
    \includegraphics[width=1\textwidth]{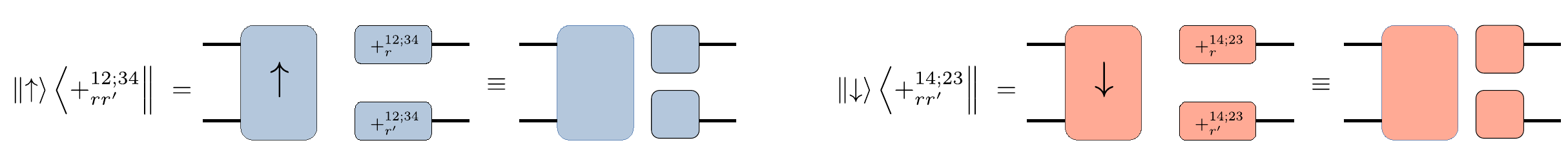}.
    \label{eq:sm_diagrams_brick}
\end{equation}

Equipped with them, we can assemble them yielding a brickwall setup as the original circuit as shown in Fig. \ref{fig:brick_fid} (left). However, observe that, despite this similarity, it is conceptually different, since, each box represents a four-copy deterministic (averaged) superoperator, instead of a random gate. 
Furthermore, we can calculate the contractions between the states $\bbrakket{\bm{+^s}_{rr'}}{\sigma}$, with $\mathbf{s}=\mathbf{12;34}$ or $\mathbf{s}=\mathbf{14;23}$, of the layer $\tau-1$, and $\sigma=\uparrow,\downarrow$ of layer $\tau$. Notice that the above contractions correspond to integrating out the spins $\bm{+^s}_{rr'}$, retaining only the spins $\sigma_r$ placed on the vertices of a triangular lattice, as the red triangles in Fig. \ref{fig:brick_fid} (left) indicate. Hence, we have a 2D classical Ising model \cite{nahumOperatorSpreadingRandom2018} of $L/2$ spins on a triangular lattice of $T-1$ columns. Each triangular plaquette is described by the (unnormalized) Boltzmann weights between the spin neighbors $\Delta_{\sigma_1,\sigma_2}^{\ \sigma_3}$. All of them are listed in Fig. \ref{fig:brick_fid} (right). For example, the condition $\Delta_{\uparrow,\uparrow}^{\ \downarrow} = 0$ indicates the impossibility of configurations where two $\uparrow$ spins are at the base of the triangle with a $\downarrow$ spin at the apex.

Finally, notice that the computation of the fidelity Eq. \eqref{eq:sm_brickfidelity} corresponds to the fixation of the boundary conditions on the lattice and accouting for all potential configurations $\mathcal{Z}$ that adhere to these conditions. According to Eq. \eqref{eq:sm_diagrams_brick}, the left boundary \emph{injects} spins $\downarrow$, while the right boundary is free since following Eq. \eqref{eq:sm_innerstates}, $\bbrakket{\pthree_{r,r'}}{\pone_{rr'}}= \bbrakket{\pthree_{r,r'}}{\pone_{rr'}} = 4$.
\begin{figure}
    \centering
    \includegraphics[width=1\textwidth]{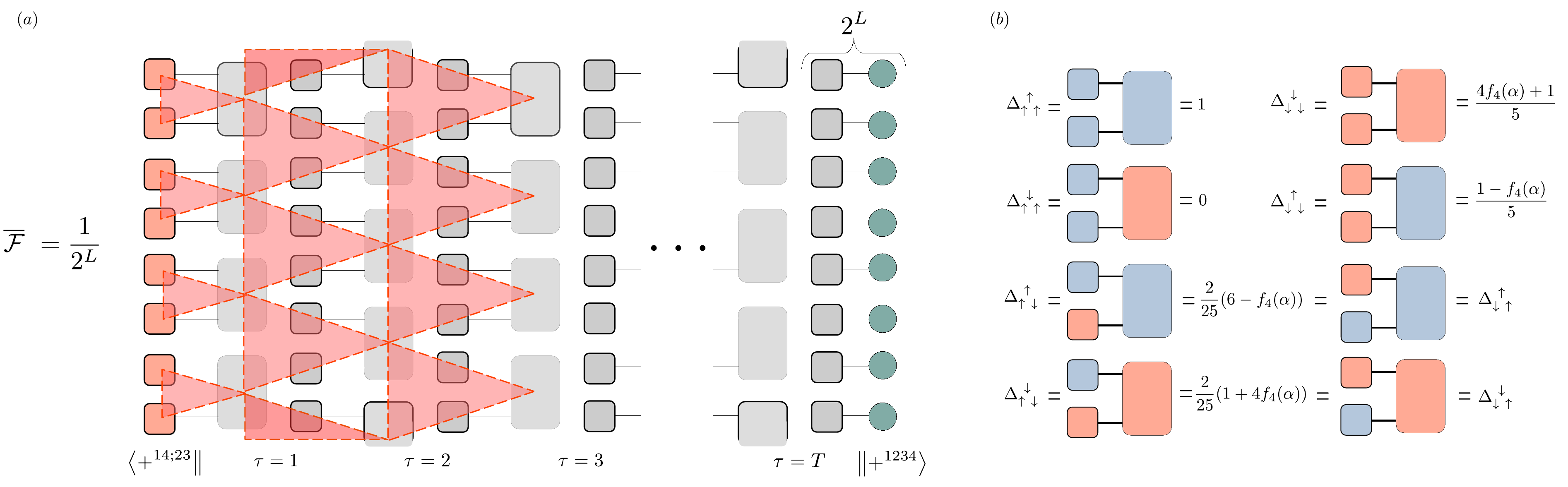}
    \caption{(a): Triangular lattice of a spin $1/2$ statistical model $L/2$ ($L=8$ in this case) spins. The left boundary is completely fixed, whereas the right boundary is free, yielding a $2^L$ factor during overlap with the final column.(b): Boltzmann weights obtained from Eq. \eqref{eq:sm_states_brick}.}
        \label{fig:brick_fid}
\end{figure}
Let us first consider the situation with no error $\alpha=0$. Taking into account Fig. \ref{fig:brick_fid} (b) and that $f(0)=1$ (see Eq. \eqref{eq:sm_fdalpha}), the Boltzmann weights are more symmetrical (see \cite{nahumOperatorSpreadingRandom2018}): $\Delta^{\ \uparrow}_{\uparrow\ \uparrow}=\Delta^{\ \downarrow}_{\downarrow\ \downarrow} = 1$, whereas $\Delta^{\ \downarrow}_{\uparrow\ \uparrow}=\Delta^{\ \uparrow}_{\downarrow\ \downarrow} = 0$,  and finally  $\Delta^{\ \uparrow}_{\uparrow\ \downarrow}=\Delta^{\ \uparrow}_{\downarrow\ \uparrow}=\Delta^{\ \downarrow}_{\uparrow\ \downarrow}=\Delta^{\ \downarrow}_{\downarrow\ \uparrow} = 2/5$. Since the left boundary forces all spins to be $\downarrow$, and there is no possibility of generating a \emph{defect} $\uparrow$ ($\Delta^{\ \uparrow}_{\downarrow\ \downarrow}=0$), all spins in the lattice are fixed to be $\downarrow$. Therefore, the only weight involved in all plaquettes is $\Delta^{\ \downarrow}_{\downarrow\ \downarrow}=1$, and the right boundary produces a factor $2^L$ that conveniently cancels the denominator, so we arrive at $\overline{\mathcal{F}}=1$.

The situation becomes more interesting and involved when $\alpha\neq0$. Since we are interested in the small error regime $\alpha\ll1$, taking into account Eq. \eqref{eq:sm_fdalpha} and expanding the $f_4(\alpha)$ to leading order, we can write the weights as
\begin{eqnarray}
\label{eq:B_weights}
    \Delta^{\ \uparrow}_{\uparrow\ \uparrow}=1\nonumber\quad &;& \quad\Delta^{\ \downarrow}_{\uparrow\ \uparrow}=0 \nonumber\\
    \Delta^{\ \downarrow}_{\downarrow\ \downarrow}\approx 1-4\alpha^2\quad  &;&\quad \Delta^{\ \uparrow}_{\downarrow\ \downarrow}\approx\alpha^2\\
    \Delta^{\ \uparrow}_{\uparrow\ \downarrow}=\Delta^{\ \uparrow}_{\downarrow\ \uparrow}\approx\frac{2}{5}(1-4\alpha^2) \quad &;& \quad \Delta^{\ \downarrow}_{\uparrow\ \downarrow}=\Delta^{\ \downarrow}_{\downarrow\ \uparrow}\approx\frac{2}{5}(1+5\alpha^2)\nonumber.
\end{eqnarray}
 
\begin{figure}[h]
    \centering
    \includegraphics[width=1\textwidth]{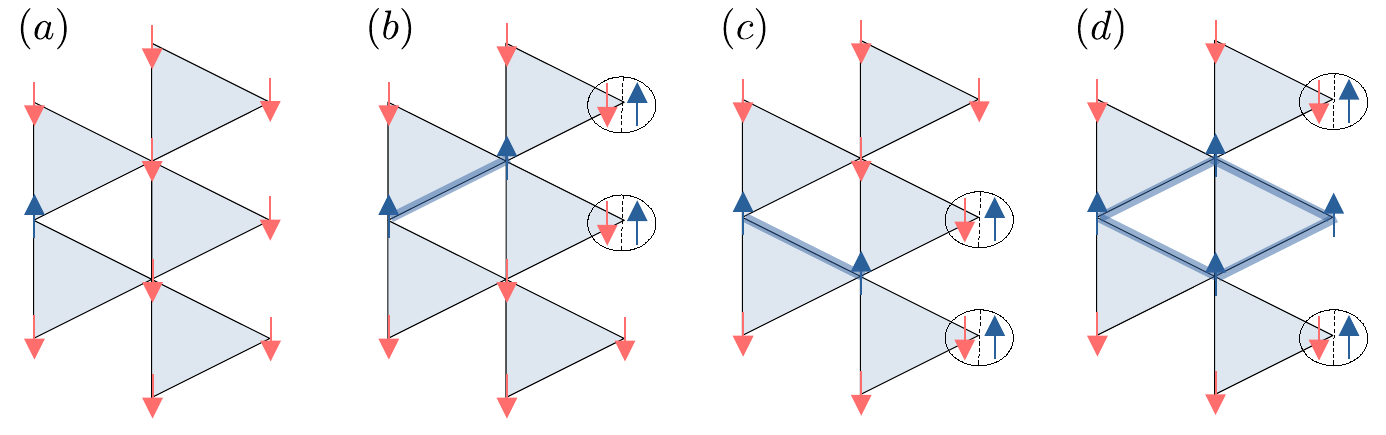}
    \caption{Scheme of the possible configurations at times $\tau$ and $\tau+1$ originated by the appearence of a defect at $\tau-1$. Discussion in the main text.}
    \label{fig:possible_triangles}
\end{figure}
Since we are interested on leading order, it is clear that only one plaquette can weigh $\Delta^{\ \uparrow}_{\downarrow\ \downarrow}=\alpha^2$: Providing a dynamical interpretation where the spins rearrange from left to right, a pair of $\downarrow$ spins can result in a single $\uparrow$ spin at the apex.
Hence, we can write the average fidelity as
\begin{equation}
\overline{\mathcal{F}}\approx\mathcal{Z}_0+\mathcal{Z}_1, 
    \label{eq:sm_avgfid_partition}
\end{equation}
where
\begin{equation}
    \mathcal{Z}_0=(\Delta^{\ \downarrow}_{\downarrow\ \downarrow})^{LT/2} = 1-2LT\alpha^2+ O(\alpha^4),
\end{equation}
and $\mathcal{Z}_1$ is the sum of all configurations with only one creation of a defect, i.e. just one $\Delta^{\ \uparrow}_{\downarrow\ \downarrow}=\alpha^2$ on one of the possible triangles.

Hence, the next step in the calculation is to count all such possible configurations. This is not a trivial task since even though we only allow the creation of one defect, once it is created, it can spread and reproduce towards the right boundary since $\Delta^{\ \uparrow}_{\uparrow\ \downarrow}=\Delta^{\ \uparrow}_{\downarrow\ \uparrow}=\Delta^{\ \downarrow}_{\uparrow\ \downarrow}=\Delta^{\ \downarrow}_{\downarrow\ \uparrow}\approx\frac{2}{5}$. To see this, suppose that a defect is created in column $\tau-1$ with cost $\alpha^2$. Then, all the vertices of the basis of the triangles of column $\tau$ are $\downarrow$ except one. Then, we can distinguish four possible scenarios schematized in Fig. \ref{fig:possible_triangles}:
\begin{enumerate}[label=(\alph*),noitemsep, leftmargin=*]
       \item The defect \emph{dies} with a cost of $(2/5)^2$ if the affected triangles are $\Delta^{\ \downarrow}_{\downarrow\ \uparrow}$ and $\Delta^{\ \downarrow}_{\uparrow\ \downarrow}$.
       \item (c) The defect spreads with a cost $(2/5)^2$ either up $\Delta^{\ \uparrow}_{\downarrow\ \uparrow}\Delta^{\ \downarrow}_{\uparrow\ \downarrow}$ or down $\Delta^{\ \downarrow}_{\downarrow\ \uparrow}\Delta^{\ \uparrow}_{\uparrow\ \downarrow}$ to the right.
       \setcounter{enumi}{3}
       \item The defect reproduces and moves both up and down to the right with a cost $$\Delta^{\ \uparrow}_{\downarrow\ \uparrow}\Delta^{\ \uparrow}_{\uparrow \ \uparrow}\Delta^{\ \uparrow}_{\uparrow\ \downarrow}=(2/5)^2$$.
\end{enumerate}

Observe that all spins outside the light cone of the defect are $\downarrow$ and $\Delta^{\ \downarrow}_{\downarrow\ \downarrow}\approx 1$, and that all the above posibilities weight the same. Thus, it is convenient to introduce the notion of \emph{length} of the defect, which corresponds to the number of layers that the propagation of the defect lasts until it dies, and it represented with the blue line in the figures. Hence, we can further decompose
\begin{equation} \mathcal{Z}_1=\sum_{\ell=0}^T\mathcal{Z}_1(\ell).
\end{equation}

Then $\mathcal{Z}_1(\ell)$ contains all the states with a defect that appears at the apex of one of the triangles of column $\tau$, spreads following Fig. \ref{fig:possible_triangles} (b),(c) and (d) until $\tau+\ell$, meaning that a situation Fig. \ref{fig:possible_triangles}(a) is reached at $\tau+\ell+1$ . Consider as a warm-up the case $\ell=0$. The defect can be created at $\tau=1,\dots T-2$ and die because the triangles at $\tau+1$ are those of case (a). In addition, the defect can be injected into the boundary (see Fig. \ref{fig:brick_fid} (a)). Finally, the defect can be created in the last layer $T-1$, without dying since it lies in the right (free) boundary. Putting all together:
\begin{equation}
    \mathcal{Z}_1(0)=\alpha^2\frac{L}{2}\Bigl((T-1)\lp\frac{2}{5}\rp^2+1\Bigr) + O(\alpha^4).
\end{equation}

Consider now the case $\ell=1$. We make the following considerations. First, only cases (b) and (c) yield a length defect $\ell=1$. The case (d) always produces a defect of at least $\ell+1$.
Second, the last possible column to harbor the defect is at $T-2$,and third, there are three possible configurations ((b), (c), and (d)) at the boundary. Hence, we have:
\begin{equation}
    \mathcal{Z}_1(1)=\alpha^2\frac{L}{2}\Bigl((T-2)\lp\frac{2}{5}\rp^42+\lp\frac{2}{5}\rp^23\Bigr) + O(\alpha^4).
\end{equation}
\begin{figure}
    \centering
    \includegraphics[width=0.7\textwidth]{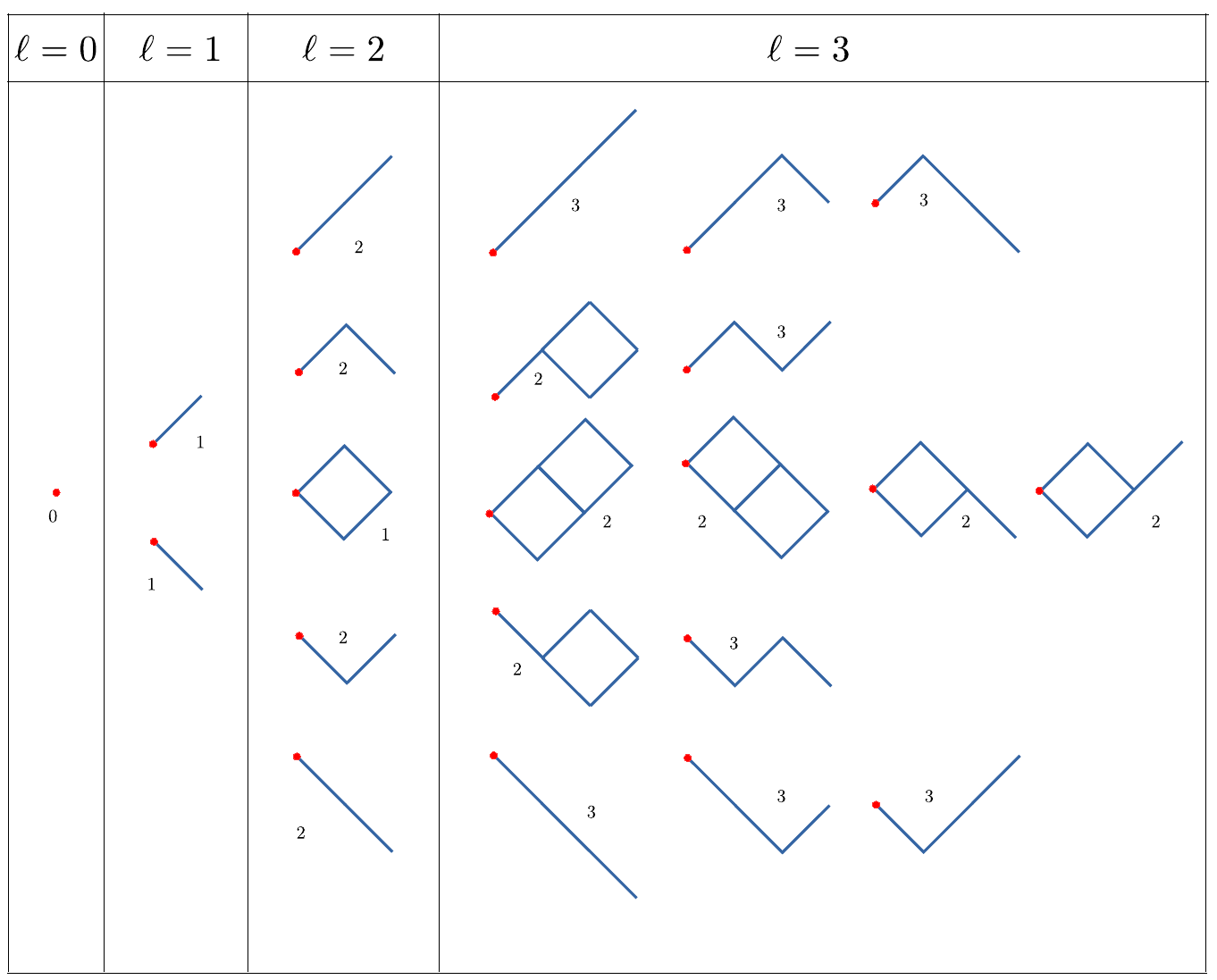}
    \caption{All possible diagrams that account for the possible configrations of the domain wall in Fig. \ref{fig:possible_triangles}. The Catalan numbers $\chi_{\ell+1}$ can be obtained by counting the number of diagrams of each column $\ell$, whereas the number of total configurations $\xi_\ell$ is obtained by counting all diagrams with the same time steps to be produced. This is indicated with the little number  next to each diagram.}
    \label{fig:catalan_table}
\end{figure}
Now we shall write the partition function of defect configurations of length $\ell$ originating only from a single insertion ($\Delta^{\ \uparrow}_{\downarrow \downarrow}$). Suppose first that case (d) is not possible, and the only possibilities are that the defect propagates to the right, either moving up or down. This can be seen as a \emph{zigzagging snake} of length $\ell$ if we study the movement of the domain wall, represented by the blue bold lines in Fig. \ref{fig:possible_triangles}. Thus, each snake weigths $(2/5)^{2\ell+1}$  and we have to count all of them. Since at each time it can move up or down, there are $2^\ell$ different snakes. The situation is more complicated when we include the case (d) which describes the bifurcation of the snake. Fortunately, this only affects the total number of such configurations $\chi_\ell$, since the only spins that contribute nontrivially are those located in the boundaries of the propagation (see Fig. \ref{fig:possible_triangles} (d)). In terms of the domain, at each step it can: move up, move down one unit length, or bifurcate, and consequently jump two length units. In Fig. \ref{fig:catalan_table} we represent the first configurations until length $\ell=2$. It can be seen that $\chi_\ell=1,2,5,14,\dots$ which corresponds to the Catalan numbers \cite{OEIS1} 
\begin{equation}
    \chi_m=\frac{(2m)!}{(m+1)!m!},
\end{equation}
for $m=\ell+1$.

 Furthermore, we must count all the configurations $\xi_\ell$ of defects of length $\ell$ that touch the right boundary without room to die, and consequently weight differently $(2/5)^{2\ell}$. We find $\xi_\ell=1,3,10,35,126,\dots$ \cite{OEIS2}, which can be obtained by counting  the diagrams with the same number in  Fig. \ref{fig:catalan_table}, and it can be written as:
\begin{equation}
   \xi_\ell=\binom{2\ell+1}{\ell+1} .
\end{equation}
Hence, putting all together we obtain:
\begin{equation}
    \mathcal{Z}_1(\ell)=\alpha^2\frac{L}{2}\Bigl((T-1-\ell)\lp\frac{2}{5}\rp^{2(\ell+1)}\chi_{\ell+1}+\lp\frac{2}{5}\rp^{2\ell}\xi_\ell\Bigr) + O(\alpha^4),
\end{equation}

 \begin{figure}[h]
     \centering
     \includegraphics[width=0.45\textwidth]{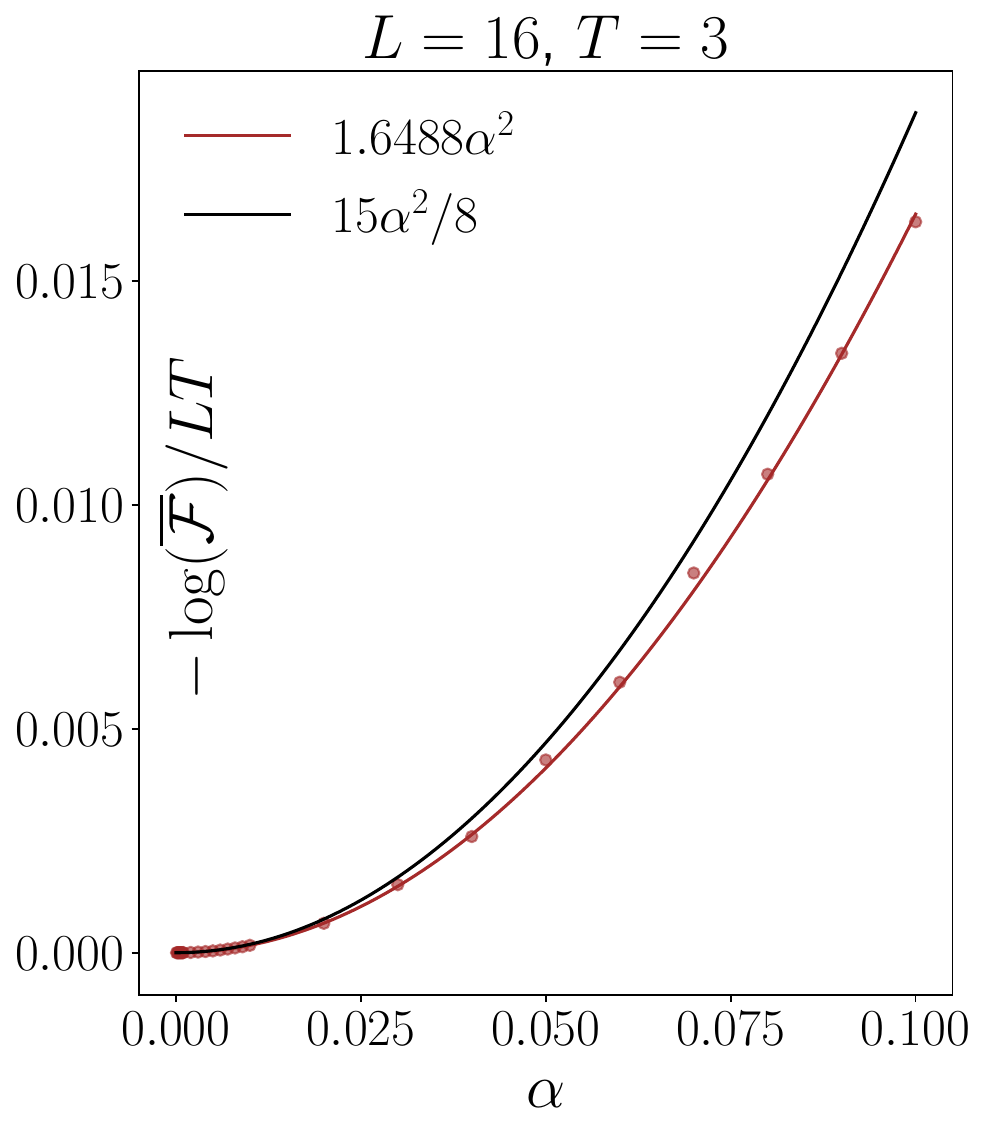}\includegraphics[width=0.45\textwidth]{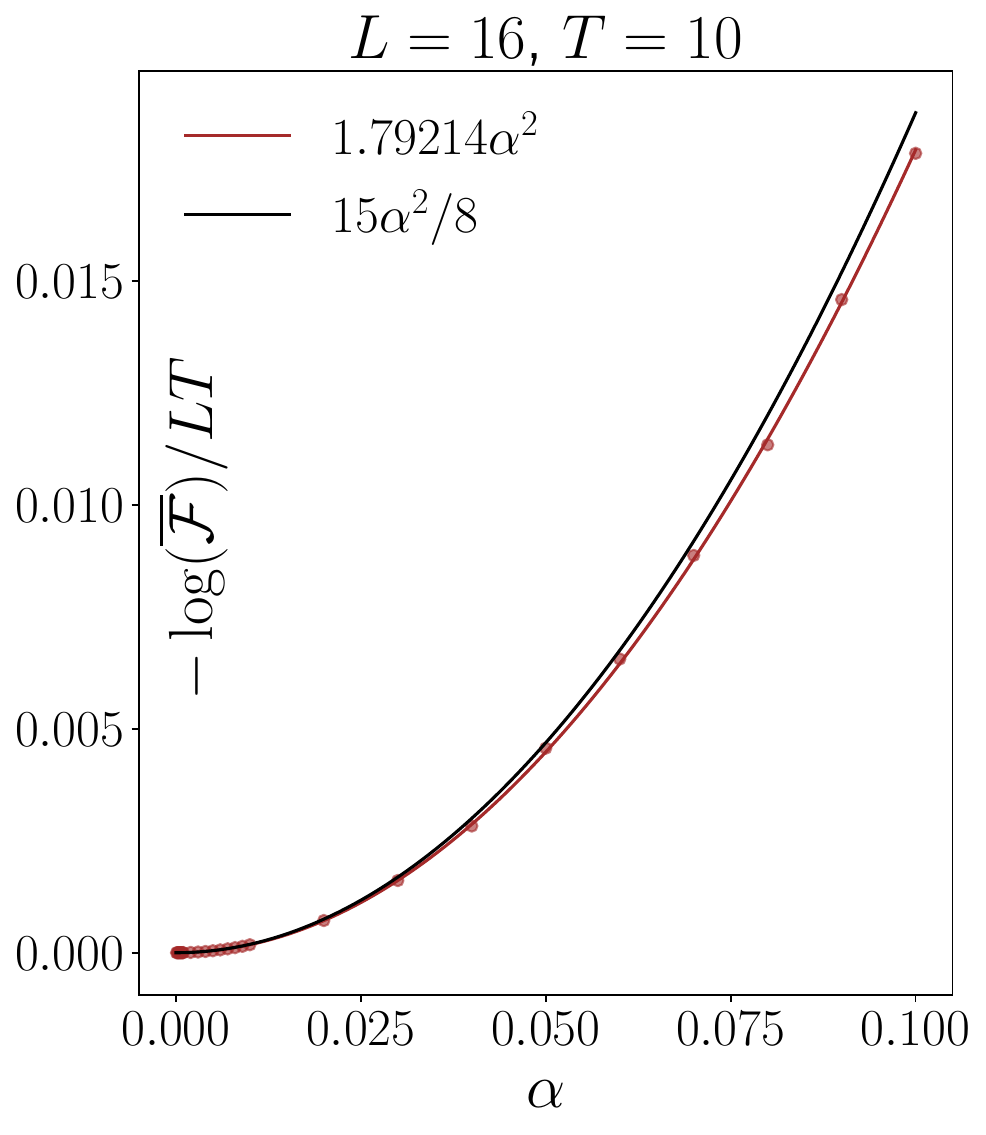}
     \caption{Average fidelity of the brickwall circuit with faulty gates. The brown curves correspond to the prediction of Eq. \eqref{eq:sm_avgfid_brick_resul}. Observe that the approximation of the solvable model becomes better as we consider deeper circuits.}
     \label{fig:fid_brick}
 \end{figure}
Hence, the average fidelity for small error Eq. \eqref{eq:sm_avgfid_partition} is
\begin{align}
\overline{\mathcal{F}} &= 1-\alpha^2 L \Biggl(2T-\frac{1}{2}\sum_{\ell=0}^{T-1}(T-1-\ell)\left(\frac{2}{5}\right)^{2(\ell+1)}\chi_{\ell+1}+\left(\frac{2}{5}\right)^{2\ell}\xi_\ell\Biggr) \nonumber \\
    &=1-\alpha^2\left[\frac{15}{8}L T-L \left(\frac{5}{6}-\left(\frac{4}{5}\right)^{2 T}\frac{9}{25\sqrt{\pi}}\Gamma \left(T+\frac{3}{2}\right)\  _2\tilde{F}_1\left(2,T+\frac{3}{2};T+2;\frac{16}{25}\right)\right)\right],
\label{eq:sm_avgfid_brick_resul}
\end{align}

where $_2\tilde{F}_1(a;b;c;d)$ is the regularized Gauss hyper-geometric function. 

Hence, observe that for $LT\gg1$, 
\begin{equation}
    \overline{\mathcal{F}}\approx 1-\frac{15}{8}LT\alpha^2\, 
    \label{eq:sm_asymptotic_brick}
\end{equation}
which is the same value that we obtain by taking the Taylor expansion of the average fidelity of the solvable model Eq. \eqref{eq:_smfid_solvable}.

For shallow circuits, the finite-size terms are also relevant. In Fig. \ref{fig:fid_brick} we show the average fidelity for small error $\alpha$. We also plot the prediction given by the asymptotic case Eq. \eqref{eq:sm_asymptotic_brick}, Observe that it improves as we consider deeper circuits, which is convenient, since Quantum Volume circuits are square.

To wrap up this section, it is important to recognize that the feasibility of this computation stems from the uniform arrangement of the permutation matrices that form the brickwall, which are faultless. The presence of faulty permutations significantly complicates the adaptation to the statistical model, making the approach based on a solvable model much more advantageous.

\section{Permutations implementation} 
\label{app:permutations}
In this section, we discuss the structure of permutations and their implementation. Our results then let us establish the analytical formula for the error factor $\delta$ of permutation for the fully connected architecture as a starting point to for discussing outer architectures.

\subsection{Imperfect SWAP gates}
We address the imperfections in SWAP gate operations, characterizing each swap $S$ in a permutation $\Pi$ as imprecise due to variations in impulse duration.
Such an model of error is natural, if one uses $\sqrt{S}$ as the universal 2-qubit gate (instead of, for example, CNOT), in which case the $S$ gate is a simple composition of two fundamental gates $S = \sqrt{S}\sqrt{S}$ and each of them could be performed imperfectly.
 
We consider an error model in which faulty implementation of $S$ of the form $S \to S^{\beta}$, where $\beta$ follows a Gaussian distribution with a mean of 1 and variance $\sigma$, independently for each swap.
Since swap operation can be decomposed as $S =  P_S -  P_A$, where  $P_S$ is a projection onto symmetric subspace and $P_A$ is a projection onto an antisymmetric subspace, the faulty implementation possess similar form  $S^{\beta} = P_S+\text{e}^{i\pi\beta} P_A$.
For any density matrix $\rho$ the average over imperfect swapping reads
\begin{equation*}
\begin{aligned}
&\int_{-\infty}^{\infty} d\beta~ S^{\beta} \rho (S^{\beta})^\dagger \frac{1}{\sigma\sqrt{2 \pi}}  e^{ - \frac{(\beta - 1)^2}{2 \sigma^2}}\nonumber \\ 
&= \int_{-\infty}^{\infty} d\beta~ \left[(P_S \rho P_S) + (P_A \rho P_S)e^{i\pi \beta} + (P_S \rho P_A)e^{-i\pi \beta} + (P_A  \rho P_A)\right]   \frac{1}{\sigma\sqrt{2 \pi}}  e^{ - \frac{(\beta - 1)^2}{2 \sigma^2}}  = \\
& = (P_S \rho P_S) - (P_A \rho P_S)e^{-\frac{1}{2} \pi ^2 \sigma^2} - (P_S \rho P_A)e^{-\frac{1}{2} \pi ^2 \sigma^2} + (P_A  \rho P_A)  = p \rho + (1-p)~ S \rho S.
\end{aligned}
\end{equation*}
Which turns out to simplify into a probabilistic mixture with swapping probability $1-p$ and no swapping probability $p$, where
\begin{equation}
p = \frac{1}{2}(1 - e^{-\frac{1}{2} \pi ^2 \sigma^2}).
\end{equation} 
Thus from now on, we will use such an integrated scenario keeping in mind the original source of errors.
Since the model of imperfect swapping directly corresponds to the probabilistic occurrence of swaps with $p \in [0,1/2]$ and larger values of $p$ are futile from the perspective of quantum computer performance, we restrict ourselves to $p$ to such a range.

This model allows us to perform various estimations of the error factor $\delta(p) = \delta$ as in \eqref{eq:sm_deltadef}. In the simplest one of those, when we consider the implementation of a generic permutation of $L$ qubit system by expanding in the following series (averaging over all permutations of $L$ elements) 

\begin{equation}
\begin{aligned}
\label{eq:sm_ineq}
\delta(p) = \frac{1}{L!} \sum_{P \in \Sigma_L} \la \Tr[\tilde{\Pi}(p)\Pi^\top]^2 \ra_p = \frac{1}{L!} \sum_{P \in \Sigma_L} (1-p)^{w(P)} \Tr[\Pi ~\Pi^\top]^2 + \cdots > 4^L(1-p)^{w(P)},
\end{aligned}
\end{equation}
where $P$ is a permutation of $L$ qubits, $\Pi$ is the corresponding operator and $\tilde{\Pi}(p)$ is a faulty realization of $\Pi$. In the second step, we extracted the cases with all swaps executed properly.
Note that the only parameters are the probability $p$ of not performing a swap and the number of swaps $w$ necessary to implement a permutation.

The analytical expressions for the number of swaps $w$ necessary to implement each permutation are not known for most architectures. In such cases, we may further bound Eq. \eqref{eq:sm_ineq} further in the following way. Let $m_w$ be a number of permutations which demand $w$ swaps to implement, then:

\begin{equation}
\label{eq:sm_ineq2}
\delta(p) \geq \frac{4^L}{L!}\sum_{P \in \Sigma_L }  (1-p)^{w} = \frac{4^L}{L!}\sum_{w} m_w (1-p)^{w} \geq \frac{4^L}{L!} \sum_{w} m_w (a w + b).
\end{equation}

Where $f(w) = a w + b$ is a tangent line to a function $(1-p)^{w}$ in a point with $w$ equal its average over all permutations $w = \overline{w}$. More precisely, $a = \ln(1-p) (1-p)^{\overline{w}_L}$, $b = (1 -  \ln(1-p) \overline{w}_L) (1-p)^{\overline{w}_L}$. Because $f(w) = a w + b$ is tangent to a convex function, the last inequality in \eqref{eq:sm_ineq2} is obvious. The remaining calculations are quite simple:

\begin{equation}
\label{eq:sm_ineq3}
\begin{split}
\delta(p) \geq \frac{4^L}{L!}  \sum_{w} m_w (a w + b) &= \frac{4^L}{L!}\left(  a \sum_w m_w w +  b \sum_w m_w \right)=
\frac{4^L}{L!} (a\; L!\; \overline{w}_m + b\; L!) \\ &= a  \overline{w} + b = 4^L (1 - p)^{\overline{w}_L} = 4^L e^{-q \overline{w}_L},
\end{split}
\end{equation}
with $q = - \ln(1-p) \approx p \approx \frac{\pi^2 \sigma^2 }{4}$ for small $p$.
Therefore to know a lower-bound one just has to know the average number of swaps.

\subsection{Fully connected architecture}

In this section we provide a detailed computation of Eq. (6) of the main text.
We consider a fully connected quantum architecture that allows arbitrary qubit interactions. In this model, any permutation can be decomposed into at most $L$ swap operations. The decomposition process involves organizing the permutation into cycles, where swaps within different cycles can be executed simultaneously. Each cycle can be transformed into 2-cycles using one layer of swaps, with the transformation process described as follows: starting with any two adjacent elements, subsequent swaps are performed between the next nearest elements until the cycle is completed.  Since a 2-cycle can be reduced to the identity with a single swap, every permutation can be implemented with exactly two layers. See Fig. \ref{fig:cycle-num}, decomposition of cycle $C$, and for more more detailed discussion  see \cite{Permutations_decompositions1}.

Moreover, because each cycle of $m$ elements involves only $m-1$ swaps, the number of necessary swaps equals $L$ minus the number of cycles. Fortunately, the average number of cycles is given by Harmonic numbers $H_L = \sum_{k = 1}^L \frac{1}{k} \approx \ln L +\gamma$, where $\gamma \approx 0.577$ is the Euler--Mascheroni constant. Therefore the average number of necessary swaps grows as 
\begin{equation}
\label{eq:sm_eq3}
\overline{w}_L = L - H_L \approx L-(\ln L +\gamma),
\end{equation}
with the dimension $L$. 

More precisely, the distribution of the permutations of $L$ elements with $m$ cycles is given by the Stirling numbers $\stirlingii{L}{m}$. The number of swaps is given by $k = L - m$ since for each cycle one swap can be omitted, as argued before. 

To derive the formula for the error factor $\delta(p)$ (Eq. (3) of the main text), we aim to translate the number of omitted swaps in $\tilde{P}(p)$ into the number of cycles in $\tilde{P}(p)P^{-1}$ for each permutation $P$ and its faulty realization $\tilde{P}(p)$. The next step is to average the formula for the error factor over all realizations of $\tilde{P}$ and finally over all permutations $P$.

Let us start by considering any $k$ element cycle $C$ (constructed by $k-1$ swaps) from permutation $P$, name its unperfect realization $\tilde{P}(p)$ and the number of omitted swaps as $l(p)$. We also want to emphasize that in order to maintain the connection between the number of cycles in qubit permutation $\tilde{P}$ and its trace, we treat each node unmoved by $\tilde{P}$ as a $1$-cycle.
According to the optimal implementation of the cycle presented in Fig. \ref{fig:cycle-num} each node in the cycle $C$ except two is connected with two "neighbouring" nodes via swaps. Thus, one can enumerate nodes in cycle $C$, starting from $0$, starting from the node which is moved only by the second layer. Then tag the node connected to it by second-layer swap as $1$, next tag the node connected to node $1$ by first-layer swap as $2$, next tag the node connected to node $2$ in the second layer as $3$ and so on, see fig \ref{fig:cycle-num}.

Notice that in the composition $C^{-1}$ with its faulty realization $D$ the errors in the first layer result in corresponding swaps, each connecting nodes with numbers $2m+1$ and $2m+2$ for some integer $m$, "sandwiched" by the second layer. Moreover, the errors in the second layer can be rewritten as additional swaps, connecting nodes with numbers $2m+2$ and $2m+3$ for some integer $m$, canceling appropriate swaps in the perfect second layer (see fig \ref{fig:cycle-num} first equality). In the next step, we combine two perfect second layers with the errors from the first layers sandwiched between them. This results in the shift of each extra first-layer swap from the pair nodes $2m+1$, $2m+2$ into the pair of nodes $2m$, $2m+3$ (see fig \ref{fig:cycle-num} second equality, blue lines). Finally, we multiply those transformed extra first-layer swaps by the swaps corresponding to the second-layer errors. Notice that each of the modified first-layer errors lowers the number of cycles in the product $C^\top D$ by one since each of them combines two $1$-cycles into one $2$-cycle. Moreover, each second-layer error connects two cycles into one larger cycle, because, by the previous discussion, it cannot cancel the first-layer swap.

\begin{figure*}[h!]
    \centering
    \includegraphics[width=5in]{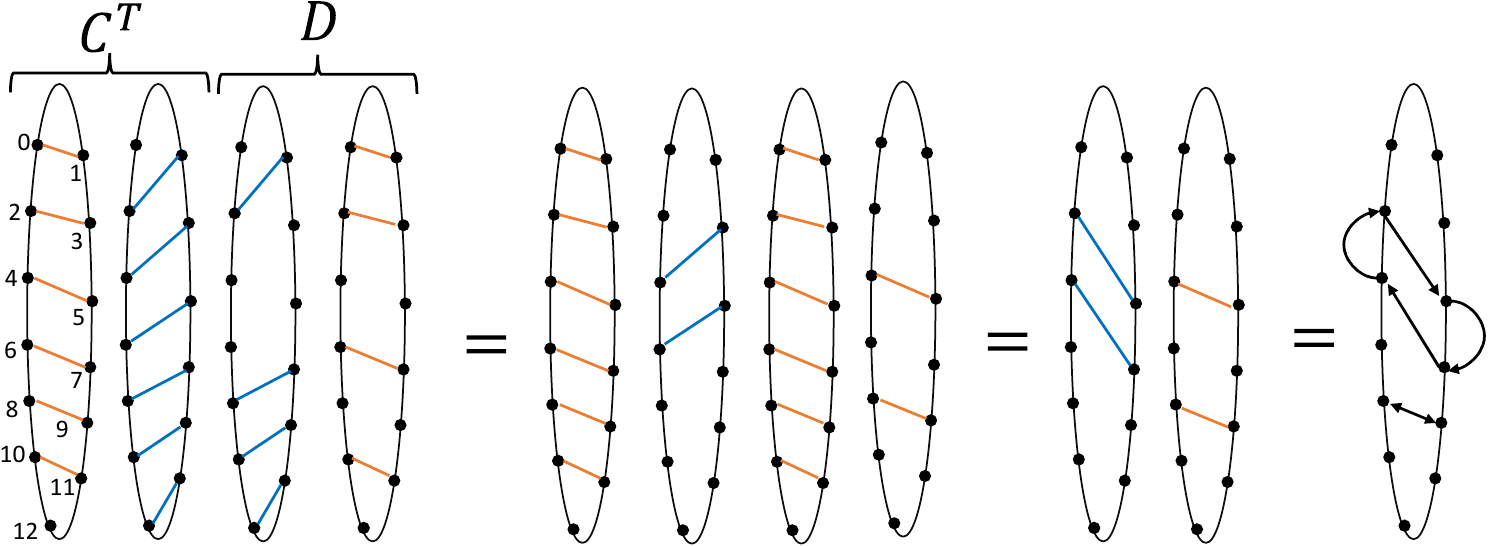}
    \caption{\label{fig:cycle-num}\small{Graphic representation of decomposed 13-cycle $C$ combined with its decomposed faulty realization $D$. Each line corresponds to a swap and each arrow to the shift in the final product. In the first step, the errors from the first (blue) layer are combined to the extra swaps and the errors in the second (red) layer in $D$ are decomposed by additional outer swaps. Next, the "perfect outer layer" moves the inner swaps, as described in the text and the outer errors complete them to cycles.}}
\end{figure*}

Thus, we established a linear dependence between the number of omitted swaps $l(p)$ and the number of cycles in $D(p)$: $m(C,p)$.
\begin{equation*}
m(C,p) = k - l(p) = k - \sum_l \binom{k-1}{l} p^{l}(1-p)^{l} = k - (k-1)p  = k(1-p) +p.
\end{equation*}
To consider the entire permutation $P$ one must add the $m(C,p)$ for all cycles $C$ in $P$.
\begin{equation}
m(P,p) = \sum_{cycles} k(1-p) +p = L (1-p) + c p,
\end{equation}
where $c$ is a number of cycles in $P$. Therefore, the average error factor factorizes in the following way:
\begin{equation*}
\delta_{full}(p) = \la \la \Tr[\Pi^{\top} \tilde{\Pi}(p)]^2\ra_p \ra_P =  \la \la 4^{m(P,p)}\ra_p \ra_P = \la \la \prod_{C_i} 4^{m(C_i,p)}\ra_p \ra_P.
\end{equation*}

Where the inner average is taken over all false realizations of $\tilde{P}(p)$ of $P$ and the outer average is taken over all permutations $P$ on $L$ elements, and in the last step we used the decomposition of permutation $P$ into its cycles $C_i$. Let us start with the calculation of the middle average. Consider one chosen cycle $C$ of $k$ elements, assuming that the positions of omitted swaps are uncorrelated:

\begin{equation}
\la 4^{m(C,p)} \ra_p  = \sum_{l = 0}^{k-1} \binom{k-1}{l} p^{l} (1-p)^{k-1-l} 4^{k - l} = 4 (4-3 p)^{k-1}.
\end{equation}

Because swaps in different cycles are uncorrelated we may combine the above calculation into an average of a trace of the entire permutation:

\begin{equation}
\label{eq:sm_general_F_call_10}
\begin{aligned}
\delta_{full}(p) = \la \la \Tr[\Pi^{\top} \tilde{\Pi}(p)]^2\ra_p \ra_P & =  \la \prod_{C_i} 4 (4-3 p)^{k_i-1} \ra_P = \la  4^c (4-3 p)^{L-c}\ra_P = \cdots 
\end{aligned}
\end{equation}
where $c$ is the number of cycles in the permutation. Because the distribution of number of cycles is given by Stirling numbers, we follow the calculations:

\begin{equation*}
\begin{aligned}
\delta_{full}(p)  = \sum_{c = 1}^L \frac{1}{L!}\stirlingii{L}{c} 4^c (4-3 p)^{L-c}  = 4^{f(p)L } \sum_{c = 1}^L \frac{1}{L!}\stirlingii{L}{c} 4^{g(p)c},
\end{aligned}
\end{equation*}
with $$f(p) = \log_4(4 - 3p) \approx 1-\frac{3 p}{4\log (4)} -\frac{9 p^2}{32 \log (4)}+ O(p^3)$$ and $$g(p) =1 - f(p) = \frac{3 p}{4\log (4)} +\frac{9 p^2}{32 \log (4)} +O(p^3).$$ 
So, one clearly sees that for small $p$ both $f(p)$ and $g(p)$ can be assumed to be linear with very small corrections for reasonably small $p$. For example, for the extremal value of $p = 0.5$, the corrections from all nonlinear terms sum to only $0.068\cdots$, whereas the $f(p)$ takes a value $f(0.5) \approx 0.661$.

Since from the study general bounds \eqref{eq:sm_ineq3}, we know that (at least for small $p$) the fidelity should decay exponentially as $\delta_{full}(p) \approx 4^L e^{- \alpha(L) L p }$, using this assumption we might derive the value of $\alpha(L)$ using Eq. \eqref{eq:sm_general_F_call_10}. Notice that by the above mentioned \textit{anzatz} we can express parameter $\alpha(L)$ as:

\begin{equation*}
\begin{aligned}
\alpha(L) &\approx - \frac{\partial \log(\delta(p))}{\partial p}\Big|_{p = 0} \frac{1}{L} = - \frac{\frac{\partial \delta(p)}{\partial p}\big|_{p = 0}}{\delta(0)} \frac{1}{L} \\
&= - \frac{1}{4^L L} \left( 4^{f(0) L} \log(4) L f'(p)|_{p = 0} \sum_{c = 1}^L \frac{1}{L!} \stirlingii{L}{c} 4^{g(0)c} + 4^{f(0)} \sum_{c = 1}^L \frac{1}{L!} \stirlingii{L}{c} 4^{g(0)c} \log(4) g'(p)|_{p = 0} c\right) \\
& = -  \frac{\log(4)}{4^L L} \left( 4^{L} L f'(0) \sum_{c = 1}^L \frac{1}{L!} \stirlingii{L}{c}  + 4^{L} \sum_{c = 1}^n \frac{1}{L!} \stirlingii{L}{c}  g'(0)c \right) \\
&= - \log(4)\left( f'(0) + g'(0) \frac{1}{n}\sum_{c = 1}^L \frac{1}{L!} \stirlingii{L}{c}c \right) = -\log(4) \left(  f'(0) + g'(0) \frac{H_L}{L}\right)\\ 
&\approx - \log(4) \left( f'(0) + g'(0) \frac{\log L + \gamma}{L}\right) = \frac{3}{4}\left(1 -  \frac{\log L + \gamma}{L} \right). 
\end{aligned}
\end{equation*}

Thus for large $n$ (and reasonably small p), up to the corrections of order $O(\log(L)/L)$, the decay constant is equal $\alpha = 3/4$, so the average error factor $\delta(p)$ behaves as
\begin{equation}
\delta_{full}(p) \approx 4^L e^{-\frac{3}{4} p L \left(1 - \frac{\log L + \gamma}{L} \right)} \approx 4^L e^{-\frac{3}{4} p L }.
\end{equation}
The quadratic and higher corrections in $p$ for this formula originate from the higher orders of $f(p)$ and $g(p)$ expansion and a mixed term which decays as $O\left(\log(L)/L\right)$,  thus as we have shown can be neglected to a good approximation.

\section{Other architectures}
\label{app:other}

From now on we focus our attention on simple models of physical architectures: $1$D architecture of all qubits connected in line, $2$D architecture with qubits placed in a square lattice and $3$D and higher dimensional architectures with qubits placed in a cubic lattice.

For each of those scenarios, we discuss the optimal, or almost optimal way to decompose permutations given the connectivity. Next given the error model discussed above, we calculate the formula for error factor $\delta$ from the formula of fidelity \eqref{eq:_smfid_solvable}, which for small errors is close to exact and for larger errors gives a lower bound.

In the quantum volume test, one has the freedom to modify and optimize the circuit as long as its overall action on quantum states is the same. Thus at the end of each subsection, we also present an estimated bound for error factor $\delta(p)$ given that the permutations are not explicitly implemented but each of them is separately "optimized" during implementation.

\subsection{Linear architecture}

In this section, we discuss linear architecture with only nearest-neighbour interactions.
Thus, the permutation must be implemented as a composition of swap operators $S_{i,i+1}$ between $i$ and $i+1$ qubits. One way of decomposition of a given permutation into such swaps is the brick sort algorithm (also known as left-right sort or parallel bubble sort) \cite{LAKSHMIVARAHAN1984295}. This approach guarantees the minimal number of swaps involved and gives an upper bound for a number of layers (equal to a number of qubits). Hence, if necessary, we will focus on this decomposition and apply it to the generic permutation.

The distribution of the number of swaps $w_s$ necessary to implement a typical permutation $\sigma$ of $n$ elements is known as Mahonian distribution \cite{Machonian_distribution}, and very quickly approximates the Gaussian distribution, with average $\overline{w}_{n}$ and variance $\text{Var}(w_s)$ equal:

\begin{equation}
\label{eq:sm_machon_params}
\overline{w}_{L} = \frac{L(L-1)}{4}, ~~~~ \text{Var}(w_L) = \frac{2 L^3 + 3 L^2 - 5 L }{72},
\end{equation}
see \cite{Machonian_distribution}.
The number of layers in each permutation could be bounded by the longest path in this permutation. This can be further lower bounded if we consider only right-moving paths. Then number of permutations of $n$ elements, with such path of length $k$ number is given by $T(n,k)=\max_i(\sigma_i - i) = k!((k+1)^{n-k}-k^{n-k})$. The average length of the longest right-moving path, using Ramanujan $P$-function, is given by \cite{bouvel2022preimages}
\begin{equation}
\label{eq:sm_eq_R_bound}
P(L) = \sum^{L-1}_{k=0}\frac{kT(L,k)}{k!}\approx (L-1) - \sqrt{\frac{(L-1)\pi}{2}}+O(1).
\end{equation}
Thus the number of layers one needs to use for the decomposition of standard permutation tends to the maximal number of layers  - $L$.

From a computational point of view, the simplest efficient way to obtain a decomposition of a given permutation $\sigma$ using a minimal number of swaps and a small number of layers is to utilize a brick sort algorithm also known as odd-even sort or parallel bubble sort \cite{LAKSHMIVARAHAN1984295}. One just needs to sort the permutation save the sequence of swaps and then apply it layer by layer in the inverse order on qudits. Since a brick sort may not be optimal, it gives an upper bound on a number. Moreover, since it is a parallelization of bubble sort, it can also be upper bounded by $n$ layers.

Later we will use one more distribution $T(L,k)$ \cite{OEIS3}  of the number of signed paths of length $k$ in all permutations of $n$ elements:

\begin{equation*}
\label{L_nk_def}
T(L,k) = (L - |k|) (L-1)!~, 
\end{equation*}
By convention, we will describe right-moving paths with positive $k$ and left-moving paths with negative $k$.
Using this distribution, one can, for example, calculate the average (unsigned) length of the path averaged over all permutations of $L$ elements

\begin{equation}
\label{eq:sm_l_average}
\overline{n}_L = \frac{1}{ L! L}\sum_k |k| T(L,k) = \frac{L^2 - 1}{3L}~.
\end{equation}

For general architecture, the direct connection between the number of omitted swaps in $\tilde{P}(p)$ and the cycles of $\tilde{P}(p) P^{-1}$ is unfortunately no longer valid. This is so because, contrary to the implementation of fully connected architecture, the omitted swaps can be stacked on top of each other, so that new errors do not create new cycles in $\tilde{P}(p) P^{-1}$ but modify the existing one.

Nevertheless, below we try to calculate the average Fidelity decay in scenarios where the errors are so sparse, later called \textit{sparse error regime}, that the abovementioned phenomena occur with negligible frequency. One sees, that when there is on average more than one error in each layer:  $p \geq \overline{l}/\overline{w} \approx 1/L$, the discussion below is almost certainly not valid. However, when there are on average one or fewer errors in the implementation of standard permutation: $p \leq 1/\overline{w} \approx 1/L^2$, there is a large chance that we did not exploit our assumptions too drastically.

We also emphasize that since further from sparse error regimes new omitted swaps have a smaller chance of reducing the error factor $\delta(p)$, thus the approximation we are deriving is in fact a lower bound for the error factor $\delta(p)$.

Thus, for sparse error regime (suitable small $p$), we assume that number of cycles in $\tilde{P}$, $m(P,p)$, is given by 
\begin{equation*}
    m(P,p) \approx L - l(p),
\end{equation*}
where $L$ is the number of qubits, and $l(p)$ the number of omitted swaps. Hence, one clearly sees that for $l \approx L$, so average one error in a layer, $p \approx 1/L$, this approximation cannot be true.
The average overall realizations for one permutation gives

\begin{equation}
\label{eq:sm_general_F_call_10_v2}
\begin{aligned}
\la 4^{m(P,p)} \ra_p &\approx \sum_{l = 0}^{w} \binom{w}{l} p^{l} (1-p)^{w-l} 4^{L-l} = 4^{L} \left(1-\frac{3 p}{4}\right)^w = 4^L 4^{f(p)w}.
\end{aligned}
\end{equation}
Where $$f(p) = \log_4(1 - 3p/4) \approx -\frac{3 p}{4 \log (4)}-\frac{9 p^2}{32 \log (4)} + O(p^3),$$ so exactly as before taking our assumptions of small $p$ into account we may safely assume that $f(p)$ is linear.

Since from the study of general bounds \eqref{eq:sm_ineq3}, we know that (at least for small $p$) the error factor should behave as $\delta(p)_{1D} \approx 4^L e^{- \alpha(L) L^2 p }$. using this assumption we might derive the value of $\alpha(L)$ from the Eq. \eqref{eq:sm_general_F_call_10_v2}. This results in 

\begin{equation*}
\begin{aligned}
\alpha(L) &\approx - \frac{\partial \log(\delta(p))}{\partial p}\Big|_{p = 0} \frac{1}{L^2} = - \frac{\frac{\partial \delta(p)}{\partial p}\big|_{p = 0}}{\delta(0)} \frac{1}{L^2} \\
&= - \frac{1}{4^L L^2} \left(4^L \sum_{w = 0}^{L(L-1)/2}  4^{f(0)w} \rho(w) \log(4) f'(p)|_{p = 0} w\right) = \\
& = -\frac{\log(4)f'(0)}{L^2} \left(\sum_{w = 0}^{L(L-1)/2} \rho(w)  w\right) = - \frac{\log(4)f'(0)}{L^2} \frac{L(L-1)}{4}  = \frac{3}{16} \left(1 - \frac{1}{L}\right).
\end{aligned}
\end{equation*}
Where $\rho(w)$ are weights coming from the Mahonian distribution, and so is the expectation value of $\overline{w}$. 
Thus, for large $L$ and appropriately small $p \ll \frac{1}{L}$, up to the corrections of order $O(1/L)$, the decay constant is equal $\alpha = 3/16$, so the average fidelity behaves as

\begin{equation}
\label{eq:sm_eq_F_line_approx}
\delta(p)_{1D} \gtrsim 4^L e^{-\frac{3}{16} p L^2 \left(1 - \frac{1}{L}\right)} \approx 4^L e^{-\frac{3}{16} p L^2}.
\end{equation}

The above derivation can be generalized in a straight-forward way into
\begin{equation}
\label{eq:sm_eq_F_ger_approx}
\delta(p)_{1D} \gtrsim 4^L e^{-\frac{3}{4}p \overline{w}_L },
\end{equation}
where $\overline{w}_L$ is the average number of swaps in permutations of $L$ elements for a given architecture.

\subsubsection{Upper bound of error factor $\delta$ for optimized permutation in $1$D}

As mentioned at the beginning of the section, in real-life quantum volume tests the permutations do not need to be exactly implemented. Only the performance of the entire circuit needs to agree.
This does not mean, however, that we cannot leverage the information about the average behavior of permutations and about $1$D architecture.

In particular, if permutation brings together a pair of distant qubits on which a 2-qubit gate is applied, then no matter how good one's transpiler is, those qubits need to be moved to each other.
Therefore to calculate \textit{general minimal} number of swaps to implement "optimized" permutation we first need to calculate the average distance $\overline{dist}$ between two qubits that are moved to each other.  

Consider a permutation $P$ represented by a permutation matrix $\Pi(P)$. Then a pair of elements that are moved to each other corresponds to a pair of rows in $\Pi(P)$ and the distance between those elements is the distance between columns in which there are $1$ in those two rows. Because we consider the average over all permutations, without loss of generality we can consider first two rows. Moreover, for each distance - each occupied pair of columns - the number of permutations is exactly the same. Hence the formula for the distance is given by:

\begin{equation*}
    \overline{dist} = \frac{\sum_{i \neq j = 1}^L |i - j|}{\sum_{i \neq j = 1}^L 1} = \frac{\sum_{ij}^L |i - j|}{L(L-1)} = \frac{\frac{1}{3} L \left(L^2-1\right)}{L(L-1)} = \frac{1}{3}(L+1).
\end{equation*}

Finally, for each pair of cubits, the sum of distances travelled by two of those must be at least the original distance between them. Thus we can give a strict and always true bound for the average number of swaps for $1$D architecture
\begin{equation}
\label{eq:sm_w_l_lower_bound}
\overline{w}_L \ge \lfloor L/2 \rfloor  ~(\overline{dist} -1)\approx \frac{L(L-2)}{6},
\end{equation}
which, using \eqref{eq:sm_eq_F_line_approx} gives the following upper bound for error factor for small $p$:

\begin{equation}
\delta(p) < 4^L e^{-\frac{1}{8} p L(L-2) } \approx 4^L e^{-\frac{1}{8} p L^2 }.
\end{equation}

\subsection{Square cube and hypercube architectures}

The most natural generalization of linear architecture is square architecture, where the $L$ qubits are arranged in a square of size $\sqrt{L} \times \sqrt{L}$, and the swaps are allowed between nearest neighbours in both axes. In this subsection, we will discuss such a case and along the way, we generalize presented results for higher dimensional cubes.

For such an arrangement, one can easily derive a lower bound of an average number of necessary swaps $\overline{w}_L$ and layers $\overline{t}_L$ to implement permutations of $L$ elements on a square.

Firstly let's consider the average number of layers $\overline{t}_L$. The average length of the longest right-moving path in one dimension was (asymptotically) given by $(L-1) + \sqrt{(L-1)\pi/2} + O(1)$, and in two dimensions elements can also move in "top-bottom" direction, effectively skipping $\sqrt{L}$ elements at once. Thus the average longest right-bottom-path, and therefore the average number of layers to implement a permutation, is bounded from below by 
\begin{equation}
\overline{t}_L \geq \frac{L-1}{\sqrt{L}}  + \sqrt{\frac{(L-1)\pi}{2 L}} + O(L^{-1/2})  = O(\sqrt{L}).
\end{equation}

In the case of higher-dimensional cubes of dimension $d$ the above expression generalizes to 
\begin{equation}
\label{eq:sm_low_bound_layer_pathL_hcube}
\overline{t}_L \geq \frac{L-1}{L^{1 - \frac{1}{d}}} + \frac{\sqrt{(L-1)\pi} }{\sqrt{2} L^{1 - \frac{1}{d}}} + O(L^{- \frac{d-1}{d}})   = O(L^{\frac{1}{d}}),
\end{equation}
by the analogous arguments.

To study a lower bound on the average number of swaps $\overline{w}_L$ necessary to implement permutations of $L$ elements let us notice that each swap can decrease the sum length of all paths in a permutation ($\sum_{i = 1}^L n_{i,P}^{2D}$) by at most $2$. Thus the number of swaps in permutation $\pi$: $w_{\pi}$ cannot be smaller than half of the sum of the length of all paths. 
Averaging this relation over all permutations of $n$ elements we obtain:
\begin{equation}
\overline{w}_L \geq \frac{1}{L!}\sum_{P \in \Sigma_L} \frac{1}{2} \sum_{i = 1}^L n_{i,P}^{2D}.
\end{equation}
In the next step, we once again bound the length of the path in $2D$ architecture, by its length on a line which, using \eqref{eq:sm_l_average}, gives us:
\begin{equation}
\overline{w}_L \geq \frac{1}{L!}\sum_{P \in \Sigma_L} \frac{1}{2} \sum_{i = 1}^L \frac{1}{\sqrt{L}}n_{i,P} = \frac{1}{2} \sqrt{L} \;\overline{n}_L = \frac{L^2 - 1}{6 \sqrt{L}} = O(L^{\frac{3}{2}})~.
\end{equation}

The generalization into higher-dimensional cubes of dimension $d$ gives us

\begin{equation}
\label{eq:sm_low_bound_swap_pathL_hcube}
\overline{w}_L \geq \frac{1}{2}  L^{\frac{d-1}{d}}\; \overline{n}_L = \frac{L^2 - 1}{6 L^{1-\frac{1}{d}}} = O(L^{1 + \frac{1}{d}})~.
\end{equation}

\subsubsection{Hypercube sorting}

Below we describe an algorithm which gives an efficient upper bound for the average necessary number of layers and swaps simultaneously to implement a permutation using hypercube architectures. The implementation of the algorithm in the Python language is provided in the \href{https://github.com/RafalBistron/Hypercube_sorting}{github repository}. The main idea of this algorithm is aligned with \cite{Permutations_decompositions1} theorem 4.3, but our derivation is self-sustained and fully structural thanks to the properties of discussed architectures. Moreover, due to explicit construction, we can argue simultaneously about both the necessary number of swaps and layers to implement a permutation.

As we already argued the problem of implementing a permutation is equivalent to the problem of sorting the inverse of that permutation, thus we focus on the second one for convenience.

Let us start with a square. If all elements from each column were in the correct columns, one would just perform brick sort in each column, see \ref{fig-mozaic}, thus the maximal number of layers would be equal $\sqrt{L}$, the maximal number of swaps $\sqrt{L} \times \frac{\sqrt{L}(\sqrt{L}-1)}{2} \approx L^{3/2}/2$. 
If some elements are in the wrong column, but in each row, there are elements from all columns, one must first sort the rows, again by brick sort. Thus placing all elements in the correct columns, and simplifying the problem to the previous one, see \ref{fig-mozaic}b. 

In general, however, elements which should be placed in one column are randomly scattered through the entire square, see \ref{fig-mozaic}c. We claim that the general case can be reduced to the one described in the above paragraph. One may mark each element in a square by natural numbers from $1$ to $\sqrt{L}$ in such a way, that in each column there are all marks (without repetitions) and elements which should be in one column are marked with all marks (without repetitions). The proof that such enumeration is always possible, and an explicit algorithm for such enumeration, is placed at the end of the section for clarity. Then after sorting by marks in each column, using brick sort, the elements with marker $i$ end up in row number $i$, so by the properties of enumeration we reduced the problem to the above-described.

Overall to sort a 2D square of $L$ elements we thus did 3 times parallel brick sort - one in columns one in rows and again one in columns - giving maximally $3\sqrt{L}$ layers and no more than $3 L^{3/2}/2$ swaps.

\begin{figure*}[h!]
    \centering
    \begin{subfigure}[h!]{0.3\textwidth}
        \includegraphics[width=1.2in]{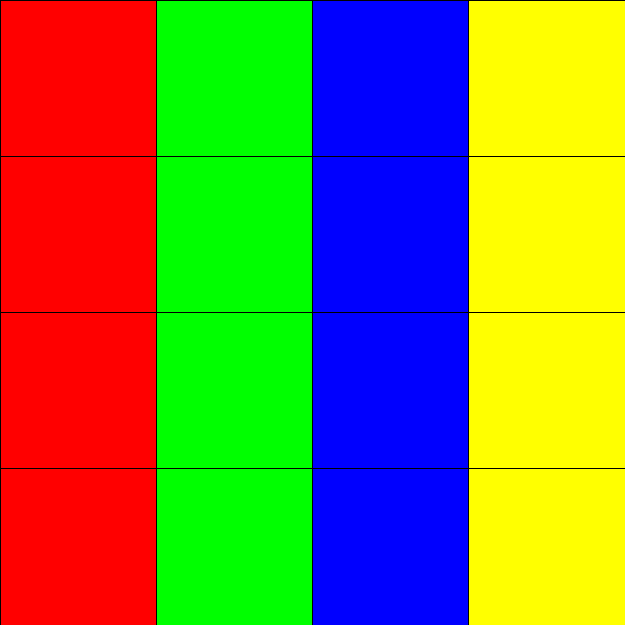}
        \caption{}
    \end{subfigure}%
    ~
    \begin{subfigure}[h!]{0.3\textwidth}
        \includegraphics[width=1.2in]{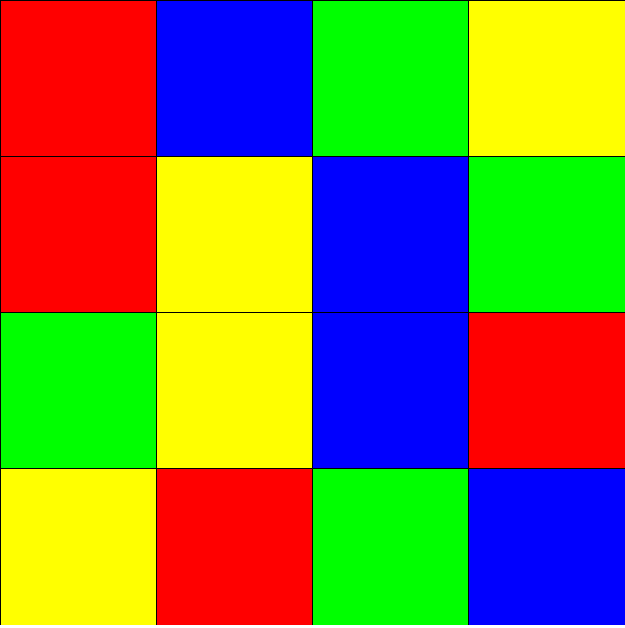}
        \caption{}
    \end{subfigure} 
    ~
    \begin{subfigure}[h!]{0.3\textwidth}
        \includegraphics[width=1.2in]{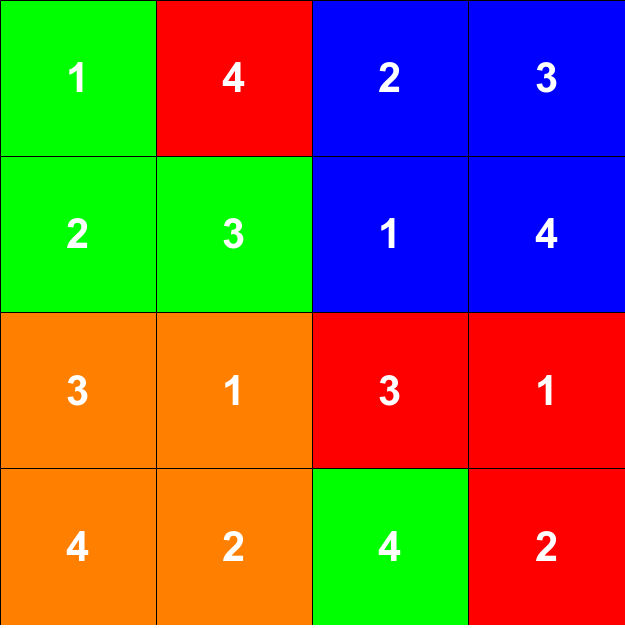}
        \caption{ }
    \end{subfigure}
    \\[10 pt]
   
    \caption{ \label{fig-mozaic}\small{
    In the picture (a), elements on a square lattice with correct columns, the original column for each element is denoted by its colour. In the middle picture (b), columns are incorrect, however, there is only one colour in each row, so one could sort it by brick sort. In the picture (c), a most complex case with appropriate marking is presented.  }}
\end{figure*}

Now we can iteratively generalize this method to hypercubes of $L$ elements. Similarly as above we mark the elements in $d$ dimensional hypercube by natural numbers form $1$ to $L^{\frac{1}{d}}$, such that in each $1$ dimensional column there are all markers and the elements which should be in one column are marked by all the marks. Then we perform sorting over marks in each $1$ dimensional column. After this sorting, by the properties of marking, in each $L^{\frac{1}{d}}$ of $d-1$ dimensional stratums there are elements with the same marks, thus in each stratum there are elements which should be placed in all columns. The next step is to perform recursive sorting in those $d-1$ dimensional stratums, after which all elements are in the correct columns, so we finish sorting by applying brick sort in each column separately.

It is a simple proof by induction that such a way of sorting gives the following bound on the maximal, hence also the average number of layers and swaps:

\begin{equation}
\label{eq:sm_upp_bound_rec_rost}
\overline{t}_L \leq L^{\frac{1}{d}} (2d-1) ~~\text{and}~~ \overline{w}_L < L^{\frac{d+1}{d}} \left(d - \frac{1}{2}\right).
\end{equation}

\begin{figure*}[h!]
    \centering
    \begin{subfigure}[h!]{0.5\textwidth}
        \centering
        \includegraphics[width=3in]{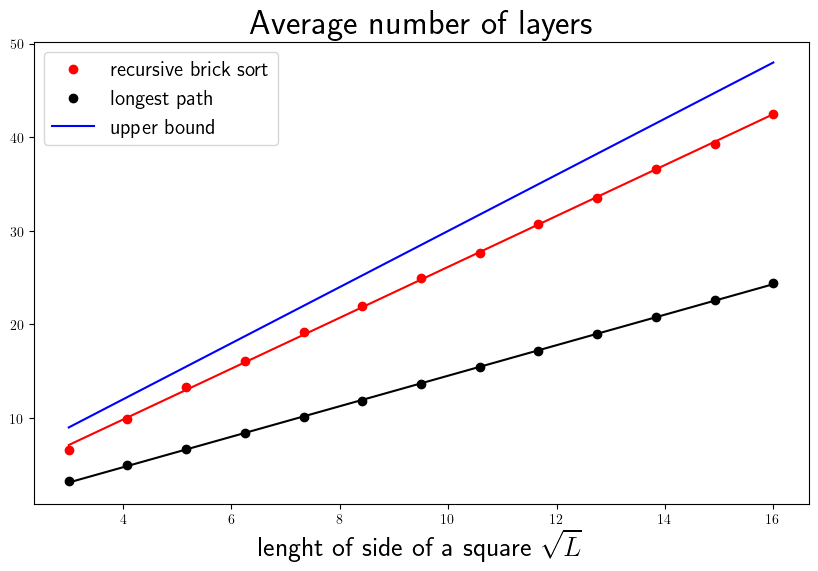}
    \end{subfigure}%
    ~ 
    \begin{subfigure}[h!]{0.5\textwidth}
        \centering
        \includegraphics[width=3in]{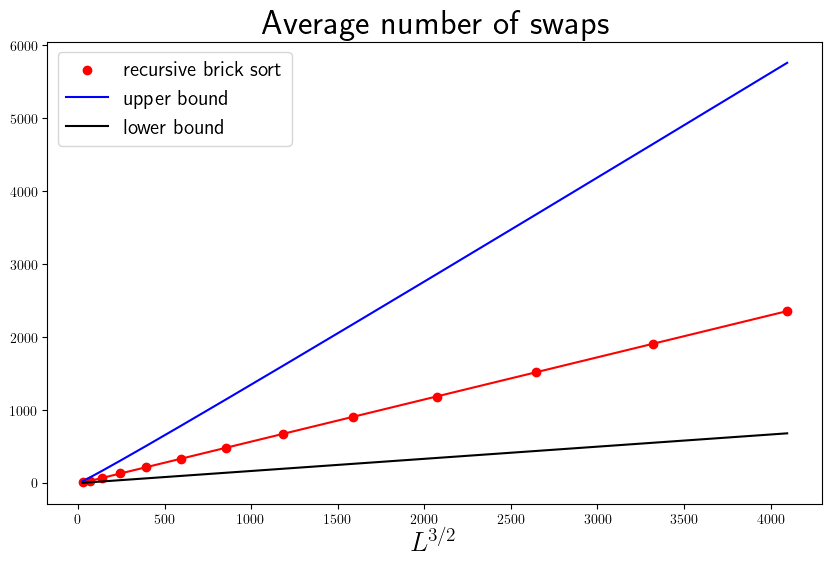}
    \end{subfigure} \\[10 pt]
    
    \begin{subfigure}[h!]{0.5\textwidth}
        \centering
        \includegraphics[width=3in]{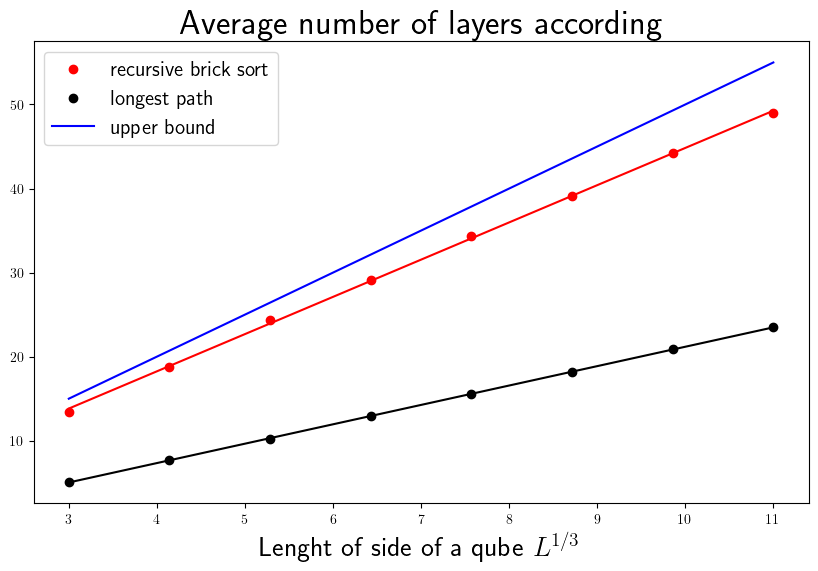}
    \end{subfigure}%
    ~ 
    \begin{subfigure}[h!]{0.5\textwidth}
        \centering
        \includegraphics[width=3in]{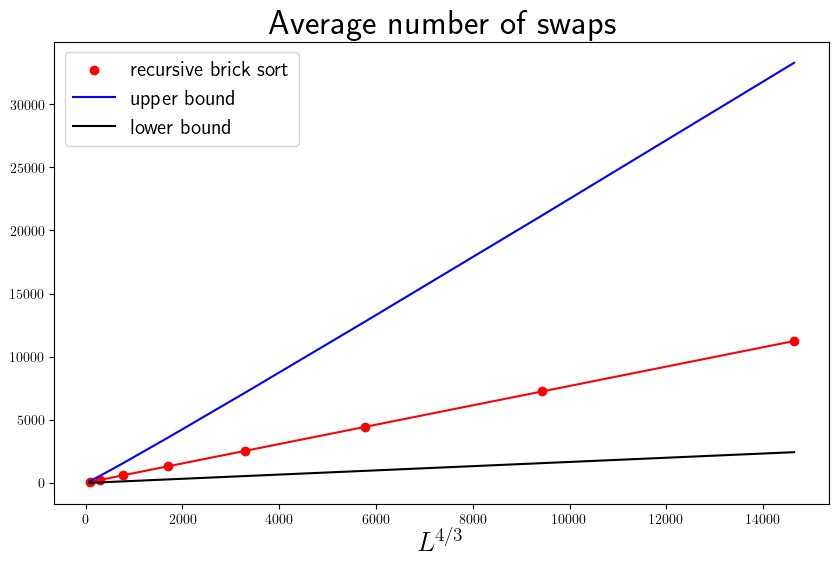}

    \end{subfigure}
   
    \caption{ \label{fig-eachSWAP}\small{Plot of the average number of layers (left) and swaps (right) to implement a permutation by the recursive version of the brick sort discussed above for the square of $n$ elements and cube of $n$ elements. The upper bound follows directly from the discussion within the algorithm \eqref{eq:sm_upp_bound_rec_rost} and the lower bound from the discussion of the average length of paths \eqref{eq:sm_low_bound_layer_pathL_hcube}\eqref{eq:sm_low_bound_swap_pathL_hcube}. Each point was obtained as an average over $1000$ trials.}}
\end{figure*}

The only missing part in the above-described algorithm is the proof that appropriate enumerations can always be done. For the general case of hypercubes, we prove the following statement

\begin{thm}
Let $p$ be a permutation of $m\times j$ elements organized in the rectangle of size $m \times j$. Then there always exists a way to mark all elements of $p$ by the numbers from $1$ to $j$ such that in each column are all markers from $1$ to $j$ and the set of elements originated from each column has all the markers form $1$ to $j$.
\end{thm}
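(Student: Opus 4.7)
The plan is to recast the marking problem as a proper edge coloring of a bipartite multigraph and to obtain that coloring by iterating Hall's marriage theorem (equivalently, applying K\"onig's line-coloring theorem). Form $G = (A \sqcup B, E)$ where $A$ contains one vertex per current column of the rectangle and $B$ one vertex per origin column. For each element $x$, insert an edge joining the current-column vertex of $x$ in $A$ to the origin-column vertex of $x$ in $B$. Since every column holds the same number of elements, $G$ is regular of some common degree $r$ equal to the column size. Under this dictionary, the edges at a vertex $a \in A$ are exactly the elements currently sitting in the column represented by $a$, and analogously for $B$, so producing an edge coloring of $G$ in which each vertex sees every required color is equivalent to producing a valid marking.

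Next, I would verify Hall's condition in $G$: for $S \subseteq A$ with $|S| = k$ the $kr$ edges incident to $S$ must touch at least $k$ vertices of $B$, since each vertex of $B$ has degree $r$. Hence $G$ has a perfect matching $M_1$; removing $M_1$ leaves an $(r-1)$-regular bipartite multigraph, which by the same argument admits a further perfect matching $M_2$, and so on. Iterating peels off $r$ edge-disjoint perfect matchings whose union is $E$. Assign marker $((k-1) \bmod j) + 1$ to every element whose edge lies in $M_k$. Because every vertex of $G$ is incident to exactly one edge of each $M_k$, every current column and every origin column receives at least $\lfloor r / j \rfloor \ge 1$ elements of every marker, as required; in the square case $r = j$ this gives exactly one element per marker per column, matching the stronger ``without repetitions'' version used in the sorting algorithm above.

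The only substantive step is the verification of Hall's condition, but $r$-regularity makes it a one-line check, so the main obstacle is really just recognizing the right graph-theoretic formulation. The fact that $G$ can be a genuine multigraph---when several elements share both their current and origin column---causes no trouble, because Hall's theorem and K\"onig's edge coloring theorem both apply verbatim to bipartite multigraphs. A concrete explicit algorithm for the enumeration is then extracted by implementing the successive matching extractions (for instance, by repeated bipartite-matching searches), which is also what one plugs into the recursive hypercube sort.
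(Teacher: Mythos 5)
Your proof is correct, and it takes a genuinely different route from the paper's. You reduce the statement to the classical fact that an $r$-regular bipartite multigraph decomposes into $r$ perfect matchings (K\"onig's edge-coloring theorem, obtained by iterating Hall's condition, which regularity makes immediate), and you correctly observe that multi-edges---several elements sharing both current and origin column---cause no difficulty. The dictionary between markings and proper edge colorings is set up exactly right: the edges at a current-column vertex are the elements now in that column, and the edges at an origin-column vertex are the elements that came from that column, so a coloring in which every vertex sees all colors is precisely a valid marking. The paper instead proceeds by induction over a decomposition of the permutation into transpositions: it starts from the identity (where the marker is just the row index) and, after each transposition, repairs the marking by chasing a chain of mark exchanges between the two affected columns until all conflicts are resolved. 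That repair procedure is in essence the alternating-path (Kempe-chain) argument underlying the same K\"onig theorem you invoke, but carried out by hand; its termination claim ("the procedure finishes in a finite number of steps") is stated somewhat informally, whereas your matching-peeling argument needs no such bookkeeping and yields the decomposition, and hence an explicit algorithm via repeated bipartite-matching searches, directly. The one point worth making explicit in your write-up is that in the intended application each column holds exactly $j$ elements, so $r=j$ and your cyclic assignment gives each marker exactly once per column, which is the repetition-free version the sorting algorithm actually requires; your remark covers this, but the case $r<j$ (where your construction would not furnish every marker in every column) should be explicitly excluded rather than left implicit.
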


\begin{proof}
If $m = 1$ all marking numbers are $1$ so the theorem is trivially satisfied, so in the following, we assume $m>1$.
Each permutation $p$ can be decomposed into a finite number of transpositions, thus we prove the theorem by induction over consecutive transpositions.

As a first inductive step let us notice that if $p$ is an identity, there is a straightforward way of marking: the marker of each element is just its position in the column.
Next, we assume that there was some correct marking for the permutation $p$ and that the permutation $p'$ differs from $p$ by one extra transposition of elements.

If those elements had the same marks in $p$, or they originated from the same column, marking for $p$ is also a valid marking for $p'$. Moreover, if those elements belong to the same column, but originated from different rows, the valid marking for $p'$ differs from the marking for $p$ by the same transposition. 

Now, we come to the last, most complicated, case, where the transposition changing $p$ into $p'$ mixes the elements which originated from different columns, are in different columns, and have different markings.  

To construct the new marking for $p'$ we first copy all the marks, except those with values the same as for swapped elements.
Then we start to rewrite the marks from one of two columns with exchanged elements. First, we exchange those marks which are the same as those of swapped elements in one of the columns. This move resolved the conflict with marks for the first swapped element without affecting the property that in each column there are all marks. However, this exchange of marks created a new error for elements which originated from the same column as the one with just exchanged marks.

So next we look for the element with the same mark from this set, identify its position in this column and swap the two marks of interest. This swap resolved the above-mentioned error but potentially created a new one, thus we further proceeded in the discussed manner. Because the number of columns is limited, this procedure finishes under a finite number of steps with all the conflicts resolved.
If some marks for $p$ weren't transcribed into marks for $p'$ in this process, the final step is to copy them without any changes.
Therefore we constructed the proper marking for the permutation $p'$.
\end{proof}

The above theorem guarantees the existence of correct marking not only for permutations on a square architecture (for $m = j = \sqrt{L}$) but also on a hypercube ($m = L^{\frac{1}{d}}$, $j = L^{\frac{1-d}{d}}$) since all dimensions except the first (column) one can be flattened without affecting the properties of marking. 

Hence, according to to \eqref{eq:sm_eq_F_ger_approx} we obtain the following bounds for error rate $\delta(p)$:

\begin{equation}
\delta(p)_{2\text{D}} \approx 4^L e^{-\frac{9}{8} p L^{\frac{3}{2}} } \text{~~and~~} \delta(p)_{d\text{D}} \approx 4^L e^{-\frac{3}{4} p L^{\frac{d+1}{d}}\left( d - \frac{1}{2}\right) }.
\end{equation}

\subsubsection{Upper bound of error factor $\delta$ for optimized permutation in $2$D and higher dimensions}

For square architecture, similarly, as for $1$D architecture, we may lower bound the minimal necessary number of swaps by calculating the distance between a pair of qubits that are brought together. All the arguments regarding the average over permutations from $1$D case hold still, but now we consider displacement in two dimensions so the average distance between the pair of qubits is given by:

\begin{equation}
\overline{dist}_{2D} = \frac{\sum_{(i_1,i_2) \neq (j_1, j_2) = 1}^{\sqrt{L}} |i_1 - j_1| + |i_2 - j_2| }{\sum_{(i_1,i_2) \neq (j_1, j_2) = 1}^{\sqrt{L}}} = \frac{\frac{2}{3} L^{3/2} \left(L-1\right)}{L(L-1)} =  \frac{2}{3} L^{1/2}~.
\end{equation}

This calculation can be easily generalized to $d$ dimensional case:
\begin{equation}
\overline{dist}_{dD} = \frac{\frac{d}{3} L^{2-\frac{1}{d}} \left(L^{\frac{2}{d}}-1\right)}{L(L-1)} \approx \frac{d}{3} L^{1/d}~.
\end{equation}
Hence same as \eqref{eq:sm_w_l_lower_bound} we can derive a lower bound on the necessary number of swaps to implement an "optimized" permutation. Which gives the following upper bounds for the error factor:

\begin{equation}
\delta(p)_{2\text{D}} \lesssim 4^L e^{-\frac{1}{4} p L^{\frac{3}{2}} } \text{~~and~~} \delta(p)_{d\text{D}} \lesssim 4^L e^{-\frac{d}{8} p L^{\frac{d+1}{d}} }.
\end{equation}

\section{Connection between the Heavy-output frequency and the Fidelity}
\label{app:huvsF}

In this section, we first present the definition and design of a heavy-output frequency test. Then we provide additional evidence for the connection between heavy output frequency and fidelity \cite{baldwinReexaminingQuantumVolume2022}.

For a given random quantum circuit \(U\) and an input state $\ket{\psi_0}$, the output state is denoted as $\ket{\Psi}=U\ket{\psi_0}$. The basis states measured with a probability greater than the median $p_{\text{med}}$ of all probabilities are named \textit{heavy outputs}, constituting the \textit{heavy output subspace} represented by $H_U$.
\begin{equation}
    H_U=\{ \ket{\bm{m}} \ \text{s.t.} \quad p_{\bm{m}}>p_{\text{med}}\},
\end{equation}
where $p_{\bm{m}}=|\braket{\Psi}{\bm{m}}|^2$ denotes the probability of measuring a basis state $\ket{\bm{m}}$. The heavy-output probability $h_U$ is defined as
\begin{equation}
h_U=\sum_{\ket{\bm{m}}\in H_U} p_{\bm{m}}.
\end{equation}
This concept is useful for benchmarking quantum computers. Under specific assumptions, it has been demonstrated that no classical algorithm can identify heavy outputs with a probability greater than $2/3$ \cite{aaronsonComplexityTheoreticFoundationsQuantum2016}. 
Consequently, a quantum device's ability to exceed this probability threshold of $2/3$ may signify a quantum advantage in sampling, making it a passing criterion in the Quantum Volume test \cite{crossValidatingQuantumComputers2019}. In the QV test, the heavy output subspace is identified by the classical simulation of the quantum circuit $U$. A real quantum device executes a corresponding faulty circuit $\tilde{U}$ that generates an outcome state $\ket{\tilde{\Psi}} = \tilde{U}\ket{\psi_0}$. The probabilities of the basis states of the heavy output subspace, determined by classical simulation of a quantum circuit, are measured leading to the faulty heavy output frequency $h_{\tilde{U}}$, which reads

\begin{equation}
    h_{\tilde{U}} =\sum_{\ket{\bm{m}} \in H_U} \tilde{p}_{\bm{m}},
\end{equation}
For an ideal, error-free circuit, the asymptotic average heavy output frequency approaches $h_U \to \left(1 + \log(2)\right) / 2 \approx 0.85$, compared to \(0.5\) for a completely depolarized device \cite{aaronsonComplexityTheoreticFoundationsQuantum2016}.

\begin{figure}[t!]
    \centering
    \begin{subfigure}[h!]{0.5\textwidth}
        \centering
        \includegraphics[width=3in]{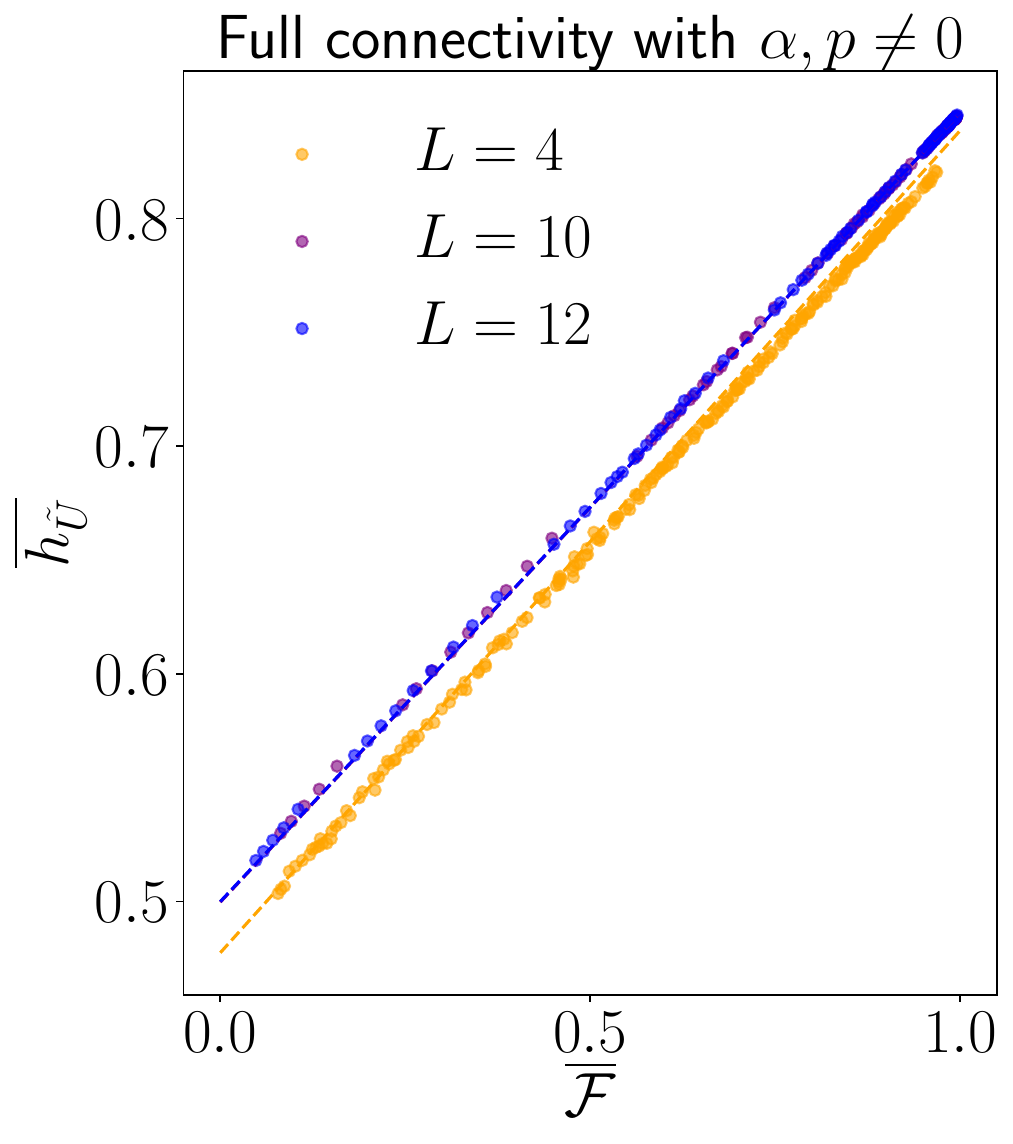}
    \end{subfigure}%
    ~
    \begin{subfigure}[h!]{0.5\textwidth}
        \centering
        \includegraphics[width=3in]{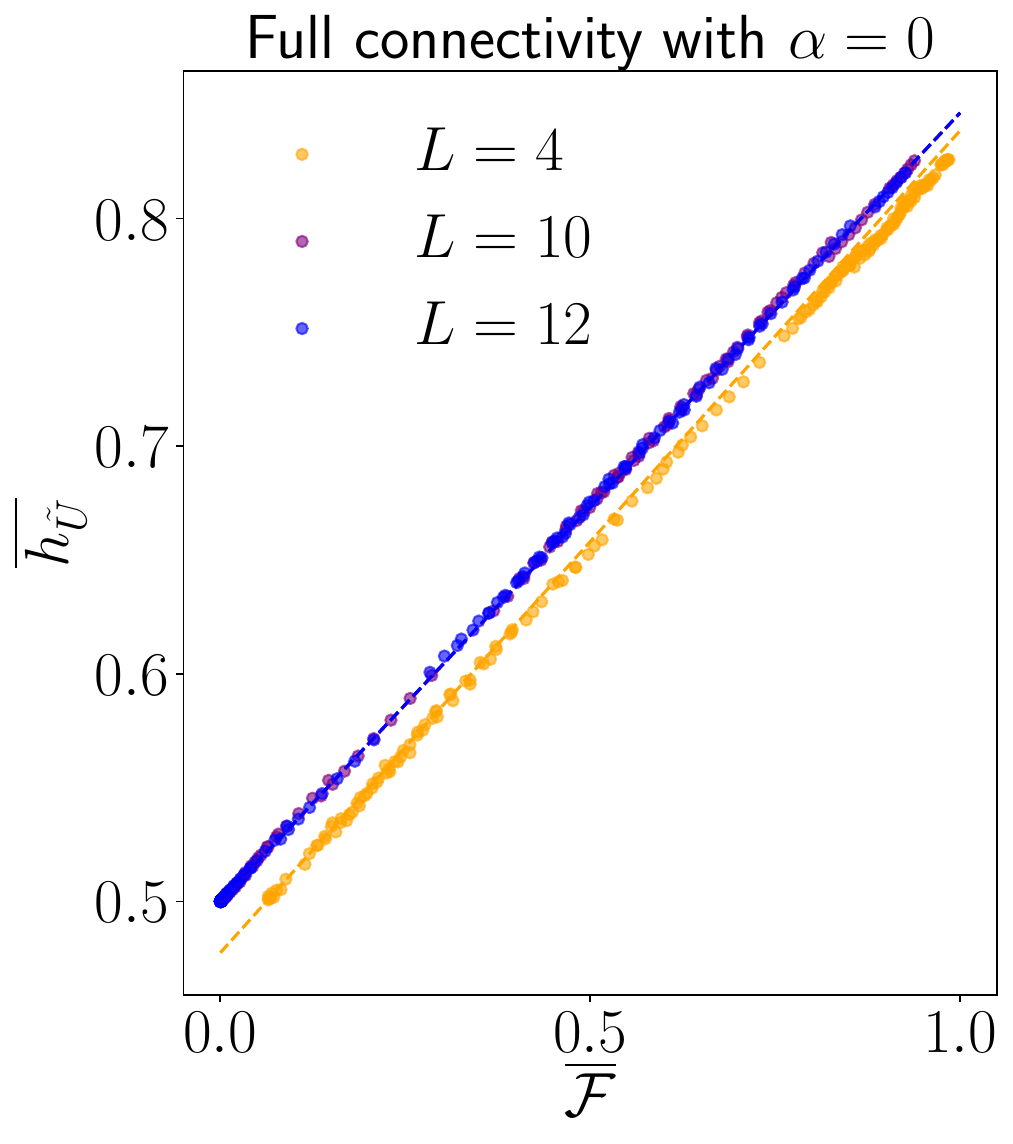}
    \end{subfigure} \\[10 pt]
    \begin{subfigure}[h!]{0.5\textwidth}
        \centering
        \includegraphics[width=3in]{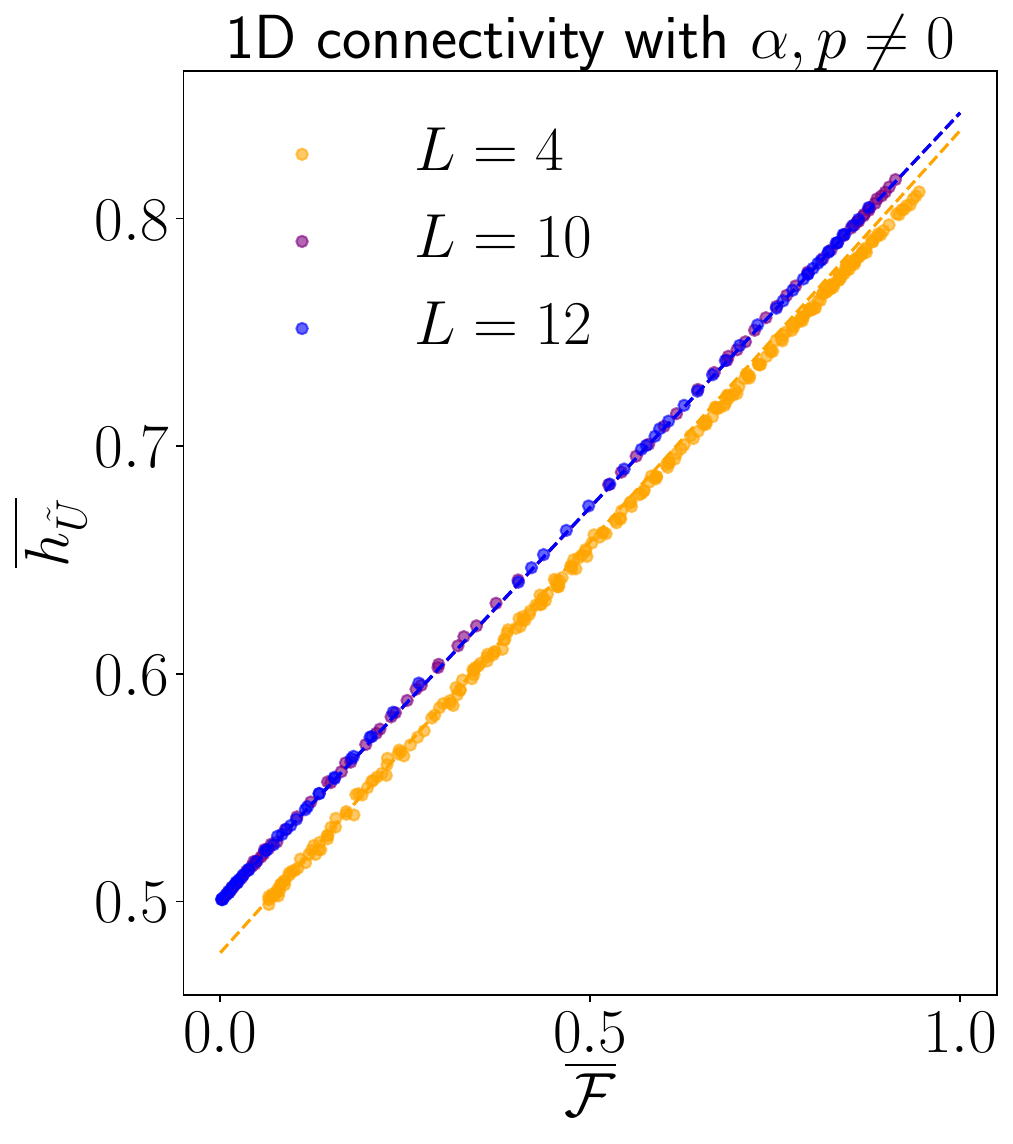}
    \end{subfigure}%
    ~
    \begin{subfigure}[h!]{0.5\textwidth}
        \centering
        \includegraphics[width=3in]{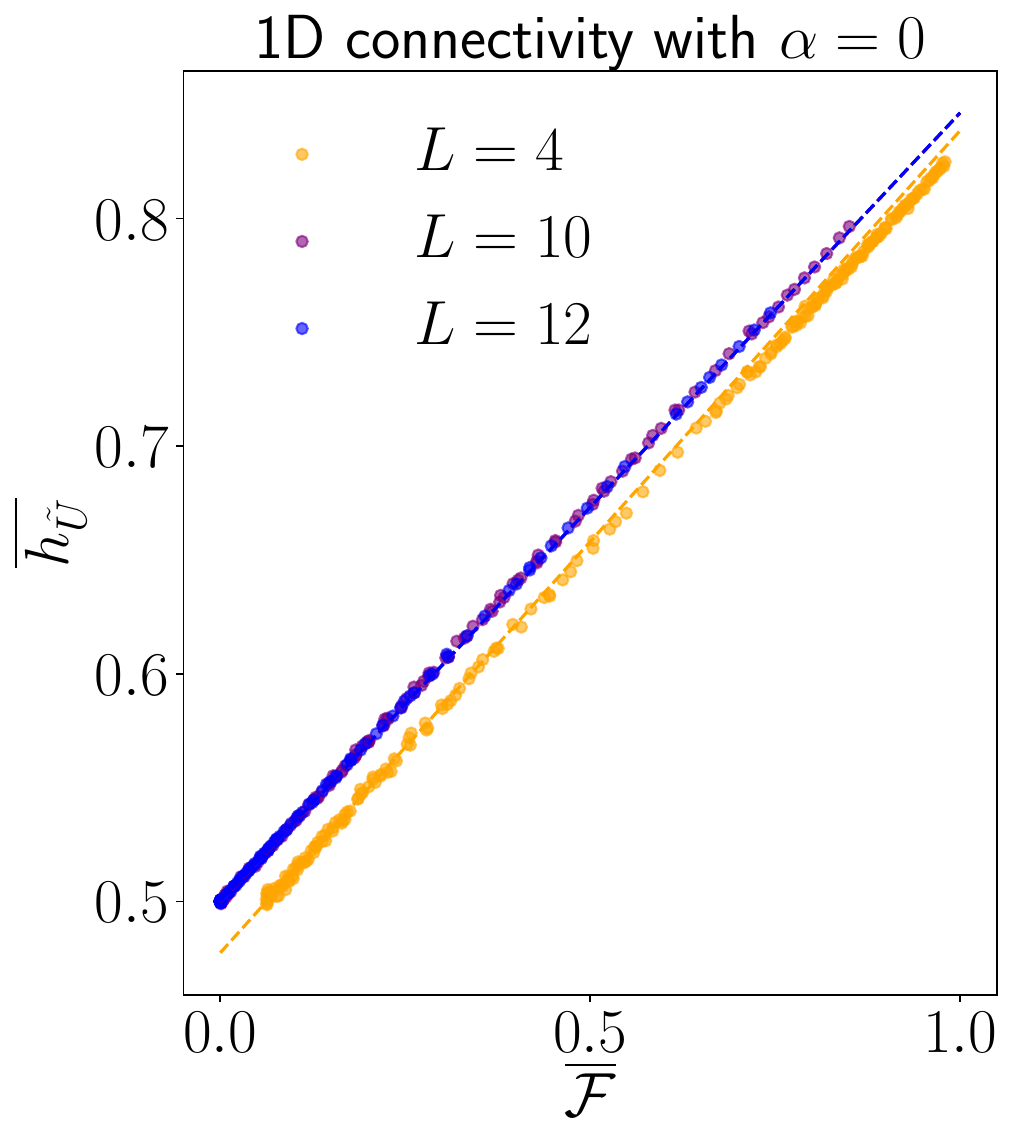}
    \end{subfigure}%
    \caption{\label{fig:sm_H_vs_F}Plot of the average faulty heavy output frequency as a function of the average fidelity for the two architectures considered. We observe a linear relation between fidelity and heavy output.}
\end{figure}

It turns out that for the discussed types of errors and architectures of the quantum volume circuit, the connection between the fidelity and heavy output frequency can also be stated as a simple function, which is later discussed numerically. The relation is given by the linear rescaling such that the lower and upper bounds for both quantities coincide: 

\begin{equation}
\mathcal{F} = 1 - \frac{2^L-1}{2^L}\frac{h_{U}-h_{\tilde{U}}}{h_{U}-\frac{1}{2}}~,
\end{equation}
where $h_{U}$ is the average value of heavy output frequency obtained for the ideal scenario with no errors.The numerical evidence supporting this claim is presented in Figure \ref{fig:sm_H_vs_F}, where we studied the influence of the parameters and architecture.

We note that the same relation was obtained in the case of \textit{global} depolarizing channel \cite{baldwinReexaminingQuantumVolume2022}, which suggests that this simple behaviour is general at least for isotropic noise. The fact that standard approximations of heavy output frequency, which stream from the behaviour of fidelity, repeatedly provided the expected results  \cite{crossValidatingQuantumComputers2019}\cite{baldwinReexaminingQuantumVolume2022} additionally supports this claim.

Despite the observed linear relation, deriving an analytical expression for the average heavy output frequency in the presence of noise remains challenging. The expression for the average heavy output frequency can be obtained in the spirit of Eq. \eqref{eq:sm_fid}

\begin{equation}
     \overline{h_{\tilde{U}}}=\sum_{\bm{m}} \bbra{\bm{m} \bm{m}}\overline{\mathcal{P}_H(U_T\dots U_1) \otimes \mathcal{P}_H(U_T\dots U_1) \overleftarrow{\prod_{\tau}}\left[ \tilde{U}_{\tau}\otimes \tilde{U}_{\tau}^{*}\right]} \kket{\psi_0\psi_0},
\end{equation}

where $|\psi_0\ra$ is the chosen input state, the $\mathcal{P}_H$ denotes the projection on the appropriate heavy output subspace and in general is dependent both on the gates inside the circuit and the chosen input state. Due to this interrelation, the averages cannot be factorized even for the uncorrelated errors. Moreover, the unitarity in the circuit is lost as well making the analytical calculations unattainable in general scenarios.

The numerical simulations were made using Julia as the programming language. Libraries such as \textit{LinearAlgebra}, \textit{Random}, \textit{Statistics} and \textit{Permutations} were used. All calculations were made with a precision of a 64-bit floating point. An example of the code used to compute both average fidelity and heavy output frequency in the 1D architecture can be found in the \href{https://github.com/R0drigoP/Fidelity_HeavyOutput}{github repository}.
The data points were obtained for several layers considering $12 \leq T \leq 20$. Each point was obtained through an average over some number of iterations and for each architecture we considered several values of $\alpha$ and $p$, to spread the points in the curve. The values of all these parameters are shown in Tables \ref{tab: H_F_a0} and \ref{tab: H_F_apd0}. The uncertainty associated with each point (and both quantities $\overline{\mathcal{F}}$ and $\overline{h_{\tilde{U}}}$) is the standard error, calculated by dividing the standard deviation by the square root of the number of iterations. The maximum error obtained was of the order $\sim 10^{-2}$, so we chose to omit the error bars in Figure \ref{fig:sm_H_vs_F}.

\setlength{\tabcolsep}{4pt} 
\renewcommand{\arraystretch}{1.5} 

\begin{table}[h!]
\centering
\begin{tabular}{|c|c|c|c|}
\hline
Architecture                       & $L$ & Nº of iterations & $(\alpha, p)$                                                                                                                                                                                                                          \\ \hline
\multirow{3}{*}{Full connectivity} & $4$   & $5 000$            & \begin{tabular}[c]{@{}c@{}}$(0.008, 0.0048), (0.04, 0.008),$ \\ $(0.08, 0.008)$ and $(\alpha^*, c\alpha^*)$, \\ with $\alpha^* \in \{0.001, 0.002, 0.003,$ \\ $0.0045\}$ and $c \in \{2, 5, 7, 10, 20, 30\}$\end{tabular}                \\ \cline{2-4} 
                                   & $10$  & $2 000$            & \begin{tabular}[c]{@{}c@{}}$(0.008, 0.0048), (0.04, 0.008),$ \\ $(0.08, 0.008)$ and $(\alpha^*, c\alpha^*)$, \\ with $\alpha^* \in \{0.001, 0.002, 0.003,$ \\ $0.0045\}$ and $c \in \{2, 5, 7, 10, 20, 30\}$\end{tabular}                \\ \cline{2-4} 
                                   & $12$  & $1 000$            & \begin{tabular}[c]{@{}c@{}}$(0.008, 0.0048), (0.04, 0.008),$ \\ $(0.08, 0.008)$ and $(\alpha^*, c\alpha^*)$, \\ with $\alpha^* \in \{0.001, 0.002, 0.003,$ \\ $0.0045\}$ and $c \in \{2, 5, 7, 10, 20, 30\}$\end{tabular}                \\ \hline
\multirow{3}{*}{1D connectivity}   & $4$   & $5 000$            & \begin{tabular}[c]{@{}c@{}}$(0.008, 0.0048), (0.04, 0.008),$ \\ $(0.08, 0.008)$ and $(\alpha^*, c\alpha^*)$, \\ with $\alpha^* \in \{0.001, 0.002, 0.003,$ \\ $0.0045\}$ and $c \in \{2, 5, 7, 10, 20, 30\}$\end{tabular}                \\ \cline{2-4} 
                                   & $10$  & $2 000$            & \begin{tabular}[c]{@{}c@{}}$(0.001, 0.002), (0.002, 0.0012),$ \\ $(0.003, 0.0006), (0.003, 0.006),$ \\ $(0.0045, 0.00045), (0.008, 0.0048),$ \\ $(0.04, 0.008), (0.08, 0.008)$\end{tabular}                                             \\ \cline{2-4} 
                                   & $12$  & $1 000$            & \begin{tabular}[c]{@{}c@{}}$(0.001, 0.002), (0.002, 0.0012),$ \\ $(0.003, 0.0006), (0.003, 0.006),$ \\ $(0.0045, 0.00045), (0.008, 0.0048),$ \\ $(0.04, 0.008), (0.08, 0.008)$\end{tabular}                                             \\ \hline
\end{tabular}

\caption{\label{tab: H_F_apd0} Parameters used in the numerical simulations for both architectures, in the case of $\alpha, p \neq 0$. For all systems sizes and full connectivity, and $L = 4$ and 1D connectivity, a total of twenty-seven pairs of $(\alpha, p)$ were used. For the other two cases, only eight pairs were considered.}
\end{table}

\clearpage

\setlength{\tabcolsep}{6pt} 
\renewcommand{\arraystretch}{1.5} 

\begin{table}[h!]
\centering
\begin{tabular}{|c|c|c|c|c|}
\hline
Architecture                       & $L$ & Nº of iterations & $\alpha$           & $p$                                                                                                                              \\ \hline
\multirow{3}{*}{Full connectivity} & $4$   & $10 000$           & \multirow{6}{*}{0} & \multirow{6}{*}{\begin{tabular}[c]{@{}c@{}}$ n \times 10^{-j}$ \\ with $n \in \{1, 2, ..., 9\}$,\\  $j \in \{1, 2, 3\}$ and $p \leq 0.2$\end{tabular}} \\ \cline{2-3}
                                   & $10$  & $20 000$           &                    &                                                                                                                                                        \\ \cline{2-3}
                                   & $12$  & $20 000$           &                    &                                                                                                                                                        \\ \cline{1-3}
\multirow{3}{*}{1D connectivity}   & $4$   & $5 000$            &                    &                                                                                                                                                        \\ \cline{2-3}
                                   & $10$  & $2 000$            &                    &                                                                                                                                                        \\ \cline{2-3}
                                   & $12$  & $1 000$            &                    &                                                                                                                                                        \\ \hline
\end{tabular}
\caption{\label{tab: H_F_a0} Parameters used in the numerical simulations for both architectures, in the case of $\alpha = 0$. A total number of twenty values of $p$ were used.}
\end{table}

\end{appendix}

\bibliography{combined.bib}

\begin{thebibliography}{10}
\providecommand{\url}[1]{\texttt{#1}}
\providecommand{\urlprefix}{URL }
\expandafter\ifx\csname urlstyle\endcsname\relax
  \providecommand{\doi}[1]{doi:\discretionary{}{}{}#1}\else
  \providecommand{\doi}{doi:\discretionary{}{}{}\begingroup \urlstyle{rm}\Url}\fi
\providecommand{\eprint}[2][]{\url{#2}}

\bibitem{fisherRandomQuantumCircuits2023}
{M. P.A. Fisher}, V.~Khemani, A.~Nahum and S.~Vijay,
\newblock \emph{Random {{Quantum Circuits}}},
\newblock Annual Review of Condensed Matter Physics \textbf{14}(1), 335 (2023),
\newblock \doi{10.1146/annurev-conmatphys-031720-030658}.

\bibitem{mehtaRandomMatrices}
M.~L. Mehta,
\newblock \emph{Random {{Matrices}}}, vol. Volume 142,
\newblock Elsevier, 3 edn.,
\newblock ISBN 978-0-12-088409-4 (2004).

\bibitem{meleIntroductionHaarMeasure2024}
A.~A. Mele,
\newblock \emph{Introduction to {{Haar Measure Tools}} in {{Quantum Information}}: {{A Beginner}}'s {{Tutorial}}},
\newblock Quantum \textbf{8}, 1340 (2024),
\newblock \doi{10.22331/q-2024-05-08-1340}.

\bibitem{Random_uni_KZ}
P.~Kondratiuk and K.~{\.Z}yczkowski,
\newblock \emph{Random unitary matrices associated to a graph},
\newblock Acta Physica Polonica A \textbf{124} (2013),
\newblock \doi{10.12693/APhysPolA.124.1098}.

\bibitem{nahumOperatorSpreadingRandom2018}
A.~Nahum, S.~Vijay and J.~Haah,
\newblock \emph{Operator {{Spreading}} in {{Random Unitary Circuits}}},
\newblock Physical Review X \textbf{8}(2), 021014 (2018),
\newblock \doi{10.1103/PhysRevX.8.021014}.

\bibitem{vonkeyserlingkOperatorHydrodynamicsOTOCs2018}
C.~W. {von Keyserlingk}, T.~Rakovszky, F.~Pollmann and S.~L. Sondhi,
\newblock \emph{Operator {{Hydrodynamics}}, {{OTOCs}}, and {{Entanglement Growth}} in {{Systems}} without {{Conservation Laws}}},
\newblock Physical Review X \textbf{8}(2), 021013 (2018),
\newblock \doi{10.1103/PhysRevX.8.021013}.

\bibitem{haydenBlackHoleMirrors2007}
P.~Hayden and J.~Preskill,
\newblock \emph{Black holes as mirrors: Quantum information in random subsystems},
\newblock Journal of High Energy Physics \textbf{2007}(09), 120 (2007),
\newblock \doi{10.1088/1126-6708/2007/09/120}.

\bibitem{sekinoFastScramblers2008}
Y.~Sekino and L.~Susskind,
\newblock \emph{Fast scramblers},
\newblock Journal of High Energy Physics \textbf{2008}(10), 065 (2008),
\newblock \doi{10.1088/1126-6708/2008/10/065}.

\bibitem{dahlstenEmergenceTypicalEntanglement2007}
O.~C.~O. Dahlsten, R.~Oliveira and M.~B. Plenio,
\newblock \emph{The emergence of typical entanglement in two-party random processes},
\newblock Journal of Physics A: Mathematical and Theoretical \textbf{40}(28), 8081 (2007),
\newblock \doi{10.1088/1751-8113/40/28/S16}.

\bibitem{oliveiraGenericEntanglementCan2007}
R.~Oliveira, O.~C.~O. Dahlsten and M.~B. Plenio,
\newblock \emph{Generic {{Entanglement Can Be Generated Efficiently}}},
\newblock Physical Review Letters \textbf{98}(13), 130502 (2007),
\newblock \doi{10.1103/PhysRevLett.98.130502}.

\bibitem{emersonConvergenceConditionsRandom2005}
J.~Emerson, E.~Livine and S.~Lloyd,
\newblock \emph{Convergence conditions for random quantum circuits},
\newblock Physical Review A \textbf{72}(6), 060302 (2005),
\newblock \doi{10.1103/PhysRevA.72.060302}.

\bibitem{harrowRandomQuantumCircuits2009}
A.~W. Harrow and R.~A. Low,
\newblock \emph{Random {{Quantum Circuits}} are {{Approximate}} 2-designs},
\newblock Communications in Mathematical Physics \textbf{291}(1), 257 (2009),
\newblock \doi{10.1007/s00220-009-0873-6}.

\bibitem{chapmanQuantumComputationalComplexity2022}
S.~Chapman and G.~Policastro,
\newblock \emph{Quantum computational complexity from quantum information to black holes and back},
\newblock The European Physical Journal C \textbf{82}(2), 128 (2022),
\newblock \doi{10.1140/epjc/s10052-022-10037-1}.

\bibitem{aaronsonComplexityTheoreticFoundationsQuantum2016}
S.~Aaronson and L.~Chen,
\newblock \emph{Complexity-{{Theoretic Foundations}} of {{Quantum Supremacy Experiments}}},
\newblock \doi{10.48550/arXiv.1612.05903} (2016), \eprint{1612.05903}.

\bibitem{boixoCharacterizingQuantumSupremacy2018}
S.~Boixo, S.~V. Isakov, V.~N. Smelyanskiy, R.~Babbush, N.~Ding, Z.~Jiang, M.~J. Bremner, J.~M. Martinis and H.~Neven,
\newblock \emph{Characterizing quantum supremacy in near-term devices},
\newblock Nature Physics \textbf{14}(6), 595 (2018),
\newblock \doi{10.1038/s41567-018-0124-x}.

\bibitem{aruteQuantumSupremacyUsing2019}
F.~Arute, K.~Arya, R.~Babbush, D.~Bacon, J.~C. Bardin, R.~Barends, R.~Biswas, S.~Boixo, F.~G. S.~L. Brandao \emph{et~al.},
\newblock \emph{Quantum supremacy using a programmable superconducting processor},
\newblock Nature \textbf{574}(7779), 505 (2019),
\newblock \doi{10.1038/s41586-019-1666-5}.

\bibitem{preskillQuantumComputingNISQ2018}
J.~Preskill,
\newblock \emph{Quantum {{Computing}} in the {{NISQ}} era and beyond},
\newblock Quantum \textbf{2}, 79 (2018),
\newblock \doi{10.22331/q-2018-08-06-79}.

\bibitem{emersonScalableNoiseEstimation2005}
J.~Emerson, R.~Alicki and K.~{\.Z}yczkowski,
\newblock \emph{Scalable noise estimation with random unitary operators},
\newblock Journal of Optics B: Quantum and Semiclassical Optics \textbf{7}(10), S347 (2005),
\newblock \doi{10.1088/1464-4266/7/10/021}.

\bibitem{knillRandomizedBenchmarkingQuantum2008}
E.~Knill, D.~Leibfried, R.~Reichle, J.~Britton, R.~B. Blakestad, J.~D. Jost, C.~Langer, R.~Ozeri, S.~Seidelin and D.~J. Wineland,
\newblock \emph{Randomized benchmarking of quantum gates},
\newblock Physical Review A \textbf{77}(1), 012307 (2008),
\newblock \doi{10.1103/PhysRevA.77.012307}.

\bibitem{magesanCharacterizingQuantumGates2012}
E.~Magesan, J.~M. Gambetta and J.~Emerson,
\newblock \emph{Characterizing quantum gates via randomized benchmarking},
\newblock Physical Review A \textbf{85}(4), 042311 (2012),
\newblock \doi{10.1103/PhysRevA.85.042311}.

\bibitem{crossScalableRandomisedBenchmarking2016}
A.~W. Cross, E.~Magesan, L.~S. Bishop, J.~A. Smolin and J.~M. Gambetta,
\newblock \emph{Scalable randomised benchmarking of non-{{Clifford}} gates},
\newblock npj Quantum Information \textbf{2}(1), 1 (2016),
\newblock \doi{10.1038/npjqi.2016.12}.

\bibitem{mollQuantumOptimizationUsing2018}
N.~Moll, P.~Barkoutsos, L.~S. Bishop, J.~M. Chow, A.~Cross, D.~J. Egger, S.~Filipp, A.~Fuhrer, J.~M. Gambetta \emph{et~al.},
\newblock \emph{Quantum optimization using variational algorithms on near-term quantum devices},
\newblock Quantum Science and Technology \textbf{3}(3), 030503 (2018),
\newblock \doi{10.1088/2058-9565/aab822}.

\bibitem{crossValidatingQuantumComputers2019}
A.~W. Cross, L.~S. Bishop, S.~Sheldon, P.~D. Nation and J.~M. Gambetta,
\newblock \emph{Validating quantum computers using randomized model circuits},
\newblock Physical Review A \textbf{100}(3), 032328 (2019),
\newblock \doi{10.1103/PhysRevA.100.032328}.

\bibitem{guFidelityApproachQuantum2010}
S.-J. Gu,
\newblock \emph{Fidelity approach to quantum phase transitions},
\newblock International Journal of Modern Physics B \textbf{24}(23), 4371 (2010),
\newblock \doi{10.1142/S0217979210056335}.

\bibitem{heylDynamicalQuantumPhase2018}
M.~Heyl,
\newblock \emph{Dynamical quantum phase transitions: A review},
\newblock Reports on Progress in Physics \textbf{81}(5), 054001 (2018),
\newblock \doi{10.1088/1361-6633/aaaf9a}.

\bibitem{Nielsen_Chuang_2010}
M.~A. Nielsen and I.~L. Chuang,
\newblock \emph{Quantum Computation and Quantum Information: 10th Anniversary Edition},
\newblock Cambridge University Press, Cambridge (2010).

\bibitem{bengtssonGeometryQuantumStates2006}
I.~Bengtsson and K.~\.Zyczkowski,
\newblock \emph{Geometry of {{Quantum States}}: {{An Introduction}} to {{Quantum Entanglement}}},
\newblock Cambridge University Press, Cambridge,
\newblock \doi{10.1017/CBO9780511535048} (2006).

\bibitem{emersonFidelityDecayEfficient2002}
J.~Emerson, Y.~S. Weinstein, S.~Lloyd and D.~G. Cory,
\newblock \emph{Fidelity {{Decay}} as an {{Efficient Indicator}} of {{Quantum Chaos}}},
\newblock Physical Review Letters \textbf{89}(28), 284102 (2002),
\newblock \doi{10.1103/PhysRevLett.89.284102}.

\bibitem{kalagaQuantumChaosSystems2006}
J.~K. Kalaga, A.~{Kowalewska-Kud{\l}aszyk} and W.~Leo{\'n}ski,
\newblock \emph{Quantum chaos systems and fidelity},
\newblock Acta Physica Hungarica Series B, Quantum Electronics \textbf{26}(3), 237 (2006),
\newblock \doi{10.1556/APH.26.2006.3-4.2}.

\bibitem{prosenGeneralRelationQuantum2002}
T.~Prosen,
\newblock \emph{General relation between quantum ergodicity and fidelity of quantum dynamics},
\newblock Physical Review E \textbf{65}(3), 036208 (2002),
\newblock \doi{10.1103/PhysRevE.65.036208}.

\bibitem{goussevLoschmidtEcho2012}
A.~Goussev, R.~A. Jalabert, H.~M. Pastawski and D.~A. Wisniacki,
\newblock \emph{Loschmidt echo},
\newblock Scholarpedia \textbf{7}(8), 11687 (2012),
\newblock \doi{10.4249/scholarpedia.11687}.

\bibitem{zangaraLoschmidtEchoManyspin2016}
P.~R. Zangara, D.~Bendersky, P.~R. Levstein and H.~M. Pastawski,
\newblock \emph{Loschmidt echo in many-spin systems: Contrasting time scales of local and global measurements},
\newblock Philosophical Transactions of the Royal Society A: Mathematical, Physical and Engineering Sciences  (2016),
\newblock \doi{10.1098/rsta.2015.0163}.

\bibitem{dalzellRandomQuantumCircuits2024}
A.~M. Dalzell, N.~{Hunter-Jones} and F.~G. S.~L. Brand{\~a}o,
\newblock \emph{Random {{Quantum Circuits Transform Local Noise}} into {{Global White Noise}}},
\newblock Communications in Mathematical Physics \textbf{405}(3), 78 (2024),
\newblock \doi{10.1007/s00220-024-04958-z}.

\bibitem{flammiaDirectFidelityEstimation2011}
S.~T. Flammia and Y.-K. Liu,
\newblock \emph{Direct {{Fidelity Estimation}} from {{Few Pauli Measurements}}},
\newblock Physical Review Letters \textbf{106}(23), 230501 (2011),
\newblock \doi{10.1103/PhysRevLett.106.230501}.

\bibitem{collinsWeingartenCalculus2022}
B.~Collins, S.~Matsumoto and J.~Novak,
\newblock \emph{The {{Weingarten Calculus}}},
\newblock Notices of the American Mathematical Society \textbf{69}(05), 1 (2022),
\newblock \doi{10.1090/noti2474}.

\bibitem{simExpressibilityEntanglingCapability2019}
S.~Sim, P.~D. Johnson and A.~{Aspuru-Guzik},
\newblock \emph{Expressibility and {{Entangling Capability}} of {{Parameterized Quantum Circuits}} for {{Hybrid Quantum-Classical Algorithms}}},
\newblock Advanced Quantum Technologies \textbf{2}(12), 1900070 (2019),
\newblock \doi{10.1002/qute.201900070}.

\bibitem{knillApproximationQuantumCircuits1995}
E.~Knill,
\newblock \emph{Approximation by {{Quantum Circuits}}},
\newblock \doi{10.48550/arXiv.quant-ph/9508006} (1995), \eprint{quant-ph/9508006}.

\bibitem{liuSpectralFormFactors2018}
J.~Liu,
\newblock \emph{Spectral form factors and late time quantum chaos},
\newblock Physical Review D \textbf{98}(8), 086026 (2018),
\newblock \doi{10.1103/PhysRevD.98.086026}.

\bibitem{zyczkowskiAverageFidelityRandom2005}
K.~{\.Z}yczkowski and H.-J. Sommers,
\newblock \emph{Average fidelity between random quantum states},
\newblock Physical Review A \textbf{71}(3), 032313 (2005),
\newblock \doi{10.1103/PhysRevA.71.032313}.

\bibitem{Permutations_decompositions1}
N.~Alon, F.~R.~K. Chung and R.~L. Graham,
\newblock \emph{Routing permutations on graphs via matchings},
\newblock In \emph{Proceedings of the Twenty-Fifth Annual {{ACM}} Symposium on {{Theory}} of Computing - {{STOC}} '93}. ACM Press, San Diego,
\newblock ISBN 978-0-89791-591-5,
\newblock \doi{10.1145/167088.167239} (1993).

\bibitem{LAKSHMIVARAHAN1984295}
S.~Lakshmivarahan, S.~K. Dhall and L.~L. Miller,
\newblock \emph{Parallel {{Sorting Algorithms}}},
\newblock In M.~C. Yovits, ed., \emph{Advances in {{Computers}}}, vol.~23. Elsevier,
\newblock \doi{10.1016/S0065-2458(08)60467-2} (1984).

\bibitem{Machonian_distribution}
E.~R. Canfield, S.~Janson and D.~Zeilberger,
\newblock \emph{The {{Mahonian}} probability distribution on words is asymptotically normal},
\newblock Advances in Applied Mathematics \textbf{46}(1), 109 (2011),
\newblock \doi{10.1016/j.aam.2009.10.001}.

\bibitem{github}
\emph{{The source code in Github repository}},
\newblock \url{https://github.com/RafalBistron/Hypercube_sorting}.

\bibitem{baldwinReexaminingQuantumVolume2022}
C.~H. Baldwin, K.~Mayer, N.~C. Brown, C.~{Ryan-Anderson} and D.~Hayes,
\newblock \emph{Re-examining the quantum volume test: {{Ideal}} distributions, compiler optimizations, confidence intervals, and scalable resource estimations},
\newblock Quantum \textbf{6}, 707 (2022),
\newblock \doi{10.22331/q-2022-05-09-707}.

\bibitem{Token_swapping1}
F.~Wagner, A.~B{\"a}rmann, F.~Liers and M.~Weissenb{\"a}ck,
\newblock \emph{Improving {{Quantum Computation}} by {{Optimized Qubit Routing}}},
\newblock Journal of Optimization Theory and Applications \textbf{197}(3), 1161 (2023),
\newblock \doi{10.1007/s10957-023-02229-w}.

\bibitem{Token_swapping2}
{\'E}.~Bonnet, T.~Miltzow and P.~Rz{\k a}{\.z}ewski,
\newblock \emph{Complexity of {{Token Swapping}} and {{Its Variants}}},
\newblock Algorithmica \textbf{80}(9), 2656 (2018),
\newblock \doi{10.1007/s00453-017-0387-0}.

\bibitem{Token_swapping3}
A.~M. Childs, E.~Schoute and C.~M. Unsal,
\newblock \emph{Circuit {{Transformations}} for {{Quantum Architectures}}},
\newblock In \emph{Leibniz International Proceedings in Informatics (LIPIcs)}, vol. 135. Schloss Dagstuhl -- Leibniz-Zentrum f{\"u}r Informatik,
\newblock \doi{10.4230/LIPIcs.TQC.2019.3} (2019).

\bibitem{yanInformationScramblingLoschmidt2020}
B.~Yan, L.~Cincio and W.~H. Zurek,
\newblock \emph{Information {{Scrambling}} and {{Loschmidt Echo}}},
\newblock Physical Review Letters \textbf{124}(16), 160603 (2020),
\newblock \doi{10.1103/PhysRevLett.124.160603}.

\bibitem{miszczakSymbolicIntegrationRespect2017}
J.~A. Miszczak and Z.~Pucha{\l}a,
\newblock \emph{Symbolic integration with respect to the {{Haar}} measure on the unitary groups},
\newblock Bulletin of the Polish Academy of Sciences: Technical Sciences; 2017; 65; No 1; 21-27  (2017).

\bibitem{polandNoFreeLunch2020}
K.~Poland, K.~Beer and T.~J. Osborne,
\newblock \emph{No {{Free Lunch}} for {{Quantum Machine Learning}}},
\newblock \doi{10.48550/arXiv.2003.14103} (2020), \eprint{2003.14103}.

\bibitem{leviandierFourierTransformTool1986}
L.~Leviandier, M.~Lombardi, R.~Jost and J.~P. Pique,
\newblock \emph{Fourier {{Transform}}: {{A Tool}} to {{Measure Statistical Level Properties}} in {{Very Complex Spectra}}},
\newblock Physical Review Letters \textbf{56}(23), 2449 (1986),
\newblock \doi{10.1103/PhysRevLett.56.2449}.

\bibitem{brezinSpectralFormFactor1997}
E.~Br{\'e}zin and S.~Hikami,
\newblock \emph{Spectral form factor in a random matrix theory},
\newblock Physical Review E \textbf{55}(4), 4067 (1997),
\newblock \doi{10.1103/PhysRevE.55.4067}.

\bibitem{delcampoScramblingSpectralForm2017}
A.~{del Campo}, J.~{Molina-Vilaplana} and J.~Sonner,
\newblock \emph{Scrambling the spectral form factor: {{Unitarity}} constraints and exact results},
\newblock Physical Review D \textbf{95}(12), 126008 (2017),
\newblock \doi{10.1103/PhysRevD.95.126008}.

\bibitem{haakeQuantumSignaturesChaos2010}
F.~Haake,
\newblock \emph{Quantum {{Signatures}} of {{Chaos}}}, vol.~54 of \emph{Springer {{Series}} in {{Synergetics}}},
\newblock Springer, Berlin, Heidelberg,
\newblock ISBN 978-3-642-05427-3 978-3-642-05428-0,
\newblock \doi{10.1007/978-3-642-05428-0} (2010).

\bibitem{OEIS1}
{On-Line Encyclopedia of Integer Sequences},
\newblock \emph{{Entry A000108}} (2024).

\bibitem{OEIS2}
{On-Line Encyclopedia of Integer Sequences},
\newblock \emph{{Entry A001700}} (2024).

\bibitem{bouvel2022preimages}
M.~Bouvel, L.~Cioni and L.~Ferrari,
\newblock \emph{Preimages under the {{Bubblesort Operator}}},
\newblock The Electronic Journal of Combinatorics pp. P4.32--P4.32 (2022),
\newblock \doi{10.37236/11390}.

\bibitem{OEIS3}
{On-Line Encyclopedia of Integer Sequences},
\newblock \emph{{Entry A324225}} (2024).

\end{thebibliography}


\end{document}